\newenvironment{proof}{{\bf Proof:  }}{\hfill\rule{2mm}{2mm}\vspace*{5pt}}
\numberwithin{figure}{section}
\numberwithin{equation}{section}
\newtheorem{definition}{Definition}[section]
\newtheorem{example}{Example}[section]
\newtheorem{theorem}{Theorem}[section]
\newtheorem{lemma}{Lemma}[section]
\newtheorem{claim}{Claim}[section]
\newcommand{\bE}{\mathbb{E}}
\newcommand{\bsigma}{\bm{\sigma}}
\newcommand{\by}{\bm{y}}
\newcommand{\etal}{{\em et~al.~}}
\newcommand{\bbE}{{\mathbb{E}}}
\newcommand{\bzero}{{\mathbf{0}}}
\newcommand{\bD}{{\bm{D}}}
\newcommand{\bb}{{\bm{b}}}
\title{Eliciting Truthful Reports with Partial Signals in Repeated Games}
\author{
Yutong Wu \thanks{The University of Texas at Austin. {\texttt{yutong.wu@utexas.edu}}}
\and 
Ali Khodabakhsh \thanks{The University of Texas at Austin. {\texttt{ali.kh@utexas.edu}}}
\and
Bo Li\thanks{The Hong Kong Polytechnic University. {\texttt{comp-bo.li@polyu.edu.hk}}} 
\and 
Evdokia Nikolova \thanks{The University of Texas at Austin. {\texttt{nikolova@austin.utexas.edu}}}
\and 
Emmanouil Pountourakis \thanks{Drexel University. {\texttt{manolis@drexel.edu}}}
}
\date{}
\begin{document}

\begin{titlepage}
	\thispagestyle{empty}
	\maketitle
	
	\begin{abstract}
We consider a repeated game where a player self-reports her usage of a service and 
is charged a payment accordingly by a center.  
The center observes a partial signal, representing part of the player's true consumption, 
which is generated from a publicly known distribution.
The player can report any value that does not contradict the signal and the center issues a payment based on the reported information.
Such problems find application  in  net metering billing in the electricity market, where a customer's actual consumption of the electricity network is masked  and  complete verification is impractical.
When the underlying true value is relatively constant, we propose a penalty mechanism that elicits truthful self-reports. 
Namely, besides charging the player the reported value, the mechanism charges 
a penalty proportional to her inconsistent reports. 
We show how fear of the uncertainty in the
future incentivizes the player to be truthful today.
For Bernoulli distributions, we give the complete analysis and optimal strategies given any penalty. 
Since complete truthfulness is not possible for continuous distributions,
we give approximate truthful results by a reduction from Bernoulli distributions. 
We also extend our mechanism to a multi-player cost sharing setting and give equilibrium results.
	\end{abstract}
\end{titlepage}

\section{Introduction}


Consider the following repeated game where a center owns resources and one or more strategic players pay the center to consume the resources. 
In every round, a player self-reports their usage, which  will then be used to determine their payment to the center.
However, it is not always possible for the center to verify the submitted information from  the players. 
Instead, only part of the actual consumption is revealed to the center based on some publicly known distribution. 
 A player can report any value that is at least the revealed amount.
Without any external interference, a player will naturally report exactly the revealed amount (potentially lower than the true consumption) to minimize their payment. 
The center then needs to determine a payment mechanism such that each player is incentivized to report their true value. 

The electricity market is facing precisely the described problem. 
As the number of electricity {\em prosumers} increases each year, new  rate structures are designed to properly calculate the electricity bill  for this special type of consumers while ensuring that every customer is still paying their fair share of the network costs. 
Prosumers are those who not only consume energy but also produce electricity via distributed energy resources such as rooftop solar panels.
Among different rate structures, {\em net metering} is a popular billing mechanism that is currently adopted in more than 40 states in the US \cite{ncsl}. 
Net metering 
charges prosumers a payment proportional to their net consumption, i.e., gross consumption minus the production \cite{net-metering}, demonstrated in 
Fig.~\ref{fig:net-metering}.
The payment includes the electricity usage as well as grid costs that are incurred by using the electricity network.

\begin{figure}[h]
    \centering
    \includegraphics[width=.6\columnwidth]{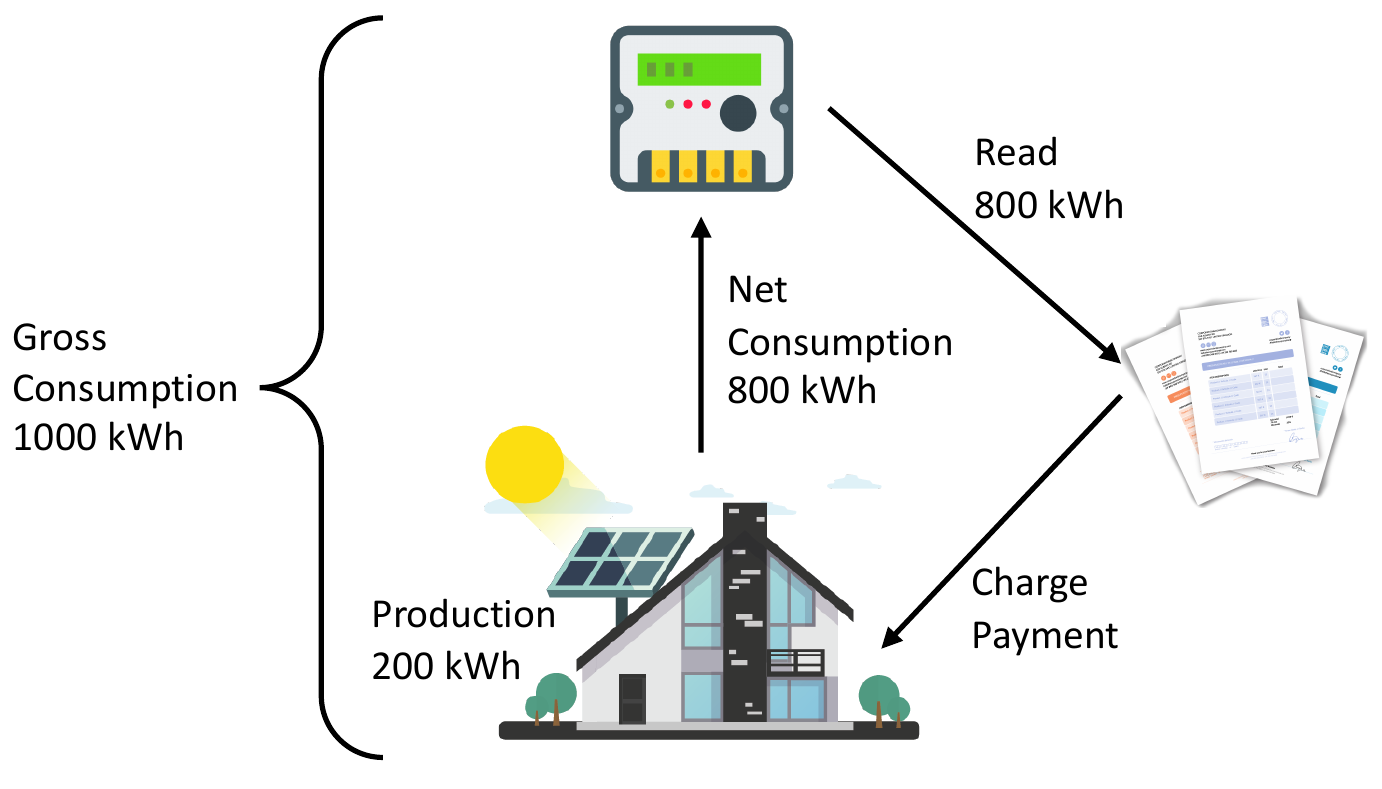}
    \caption{Net metering for electricity prosumers.}
    \label{fig:net-metering}
\end{figure}

The controversy in net metering lies in that 
prosumers fail to pay their  share of the grid costs when they do  not  have local storage equipment   \cite{gautier2018prosumers}.
In the United States, only 4\% of the solar panel owners also own the battery to store the produced solar energy \cite{4percent}. 
For those who do not own battery storage, the generated  power has to be transmitted back to the grid.  Accordingly, the daily consumption of power by these prosumers also needs to come from the grid instead of directly from the solar panels.  
In this way, most prosumers have under-paid their share of the network costs and  become ``free-riders'' of the electricity grid.
The grid is often subject to costly line upgrades and  net metering unevenly shifts such costs to traditional consumers, who usually come from lower-income households \cite{hoarau2019network}. 
Indeed, previous research works have suggested that prosumers should pay a part of the grid costs proportionally to their gross consumption, not net consumption  \cite{gautier2018prosumers,khodabakhsh2019prosumer}.
However,  the gross consumption is  hidden from the utility companies since only net consumption can be observed from the meter.
Meanwhile, there is no incentive for prosumers to voluntarily report their true consumption as it will only increase their electricity bills.




Fortunately, the production from solar panels  usually  follows some pattern while the gross consumption of electricity for a typical household  stays relatively constant,
which is especially true for industrial sites -- the major consumers for utilities \cite{nadel2014electricity,eia}. 
Thus, the observed consumption can be assumed to follow some natural distribution and the center is able to detect dishonesty when a player's report differs from their reporting history. 
With this idea, we propose a simple penalty mechanism, the {\em flux mechanism}, that elicits truthful reports from players in a repeated game setting when only partial verification is possible. 
Particularly, a player is charged their reported value as well as a penalty due to inconsistency in consecutive reports in each round. 
 The main goal is to ensure that every player reports their true values  and no penalty payment is collected. 
 We show that the combining effect of (i) the penalty rate and (ii) the length of the game is sufficient for inducing truthful behavior from the player for the entire game. 
As the horizon of the game increases, the minimum penalty rate for truth-telling as an optimal strategy decreases.
In other words, it is the fear of the uncertainty in the future that incentivizes the player to be truthful today. 
%
%
%
%


\subsection{Our Contribution}
We address the  problem of eliciting truthful reports when the center sees part of the player's private value based on some publicly known distribution.
The strategic player reports some value that is at least the publicly revealed value and is charged a payment accordingly.  
%
%
%
%
We propose a truth-eliciting mechanism, {\em flux mechanism}, 
that utilizes the player's fear of uncertainties to achieve truthfulness. 
In the first round, the player is charged a ``regular payment'' proportional to the consumption they report. 
Starting from the second round, besides 
the regular payment, the player is charged a ``penalty payment'', which is $r$ times the (absolute) difference between the reports in the current and the previous rounds, where the penalty rate $r$ is set by the center before the game starts.  

Intuitively, a player can save their regular payment by under-reporting their consumption, but they will then face the uncertainty of paying penalties in the future rounds due to inconsistent reports. 
Under most settings, if $r$ is set to be infinitely high, the players will be completely truthful 
to avoid any penalty payment. 
However, a severe punishment rule 
is undesirable and discourages players from participating. 
Therefore, we want to understand the following question.
\begin{quotation}
    What is the minimum penalty rate such that the player is willing to report their true value?
\end{quotation}

We observe that no finite penalty can achieve complete truthfulness for arbitrary distributions as a player's true consumption may never be revealed exactly. 
We can, however, obtain approximate truthfulness for a general distribution from analyzing complete truthfulness for a corresponding Bernoulli distribution. 
For $Ber(p)$, the partial signal equals to the true consumption with probability $p$ and 0 with probability $1-p$ for $p \in(0,1)$.
We give results for Bernoulli distributions in Main Results 1 and 2. 
For arbitrary distributions, we redefine $p$ as the probability for having a partial signal that is at least $\alpha$ times the true consumption, for $\alpha\in[0,1]$, to obtain $\alpha$-truthfulness (Main Result 3).




\medskip

\noindent{\bf Main Result 1} (Theorem \ref{thm:ber-p-no-hist})
{\em For a $T$-round game with Bernoulli distribution $Ber(p)$, 
the player is completely truthful if and only if the penalty rate is at least
\[
\frac{1-(1-p)^T}{p-p(1-p)^{T-1}}.
\]
}

Main Result 1 gives the minimum penalty rate that guarantees complete truthfulness for $Ber(p)$ distributions. 
We also want to understand how players would behave if the penalty rate is not as high, which describes the situation when the center is willing to sacrifice some degree of truthfulness by lowering the penalty rate.
Given any penalty rate, we show that 
a player's optimal strategies can be described as one or a combination of three basic strategies, {\em lying-till-end, lying-till-busted} and {\em honest-till-end}.
Specifically, with a low penalty rate, the player is always untruthful to save regular payment, i.e., lying-till-end is optimal. 
As the penalty rate increases, the player's optimal strategy {\em gradually} moves to lying-till-busted, which is to be untruthful
until the partial signal is revealed as the true consumption for the first time 
and then stays truthful
for the rest of the game.
When the penalty rate is sufficiently high, the player would avoid lying completely and 
reports the truth, i.e., she is honest-till-end. 

\begin{table}[htbp]
\renewcommand{\arraystretch}{2}
    \centering
    \begin{tabular}{|c|c|c|}
            \hline
            Bernoulli Prob. & Penalty Rate & Optimal Strategy\\
            \hline
            \multirow{5}{*}{$p\ge 0.5$} & $r\le \frac{1}{2p}$     &  lying-till-end\\\cline{2-3}
            & \multirow{2}{*}{$\tfrac{1}{2p} < r \le 1$}    & lying-till-busted\\
            && + lying last round \\\cline{2-3}
            & $1< r < \frac{1-(1-p)^T}{p-p(1-p)^{T-1}}$ & lying-till-busted\\\cline{2-3}
            & $r \ge \tfrac{1-(1-p)^T}{p-p(1-p)^{T-1}}$ & honest-till-end\\\hline
            \multirow{5}{*}{$p< 0.5$} & $r\le 1$     &  lying-till-end\\\cline{2-3}
            & \multirow{2}{*}{$h(t-1) < r \le h(t)$}    & lying-till-end first $t$ rounds\\
            && + lying-till-busted for rest \\\cline{2-3}
            & $h(T-1) < r < \tfrac{1-(1-p)^T}{p-p(1-p)^{T-1}}$ & lying-till-busted\\\cline{2-3}
            & $r \ge \tfrac{1-(1-p)^T}{p-p(1-p)^{T-1}}$ & honest-till-end\\\hline
            \end{tabular} 
    \caption{Optimal strategy given penalty rate $r$ under $Ber(p)$ distributions}
    \label{tab:ber:opt-str}
\end{table}

\medskip

\noindent{\bf Main Result 2} (Theorems \ref{thm:ber-p-no-hist} and \ref{thm:single-thres-with-hist})
{\em For a $T$-round game with Bernoulli distribution $Ber(p)$, given any penalty rate $r$, the player's optimal strategy is summarized in Table \ref{tab:ber:opt-str}, where
\begin{align*}
    h(t) = \frac{1-(1-p)^t}{2p-p(1-p)^{t-1}}, \text{ for $1 \le t \le T$.}
\end{align*}
}


For arbitrary distributions, including  uniform distributions,  
 it is impossible to obtain complete truthfulness without setting penalty to infinity. 
Main Result 3 gives a reduction from Bernoulli distributions to general distributions for approximate truthfulness. 

\medskip

\noindent{\bf Main Result 3} (Theorem \ref{thm:single-approx})
{\em
Given $\alpha \in [0,1]$ and an arbitrary distribution with CDF $F$, 
if a penalty rate $r$ achieves complete truthfulness for $Ber(p)$ where $p=1-F(\alpha D)$ and $D$ is the player's true gross consumption,  
then the same $r$ achieves $\alpha$-approximate truthfulness for distribution $F$.
}
\medskip

Finally, we extend our results to multiple players.
We note that if the players are charged independently, applying the flux mechanism to each individual elicits truthful reports.
A more complicated but realistic setting, especially in net metering, is the cost sharing problem where the players split an overhead cost based on their submitted reports.
We propose the {\em multi-player flux mechanism} where the penalty payment is the same as before but the regular payment is now a share of some overhead cost. 
Again, if the penalty rate is sufficiently high, the players stay truthful, regardless of others' behavior, to avoid any penalty payment, 
i.e., the truthful report profile forms a {\em dominant strategy equilibrium}.  
As the penalty rate decreases, the truthfulness of a player may  depend on other players' actions. 
That is, with a lower penalty rate, a truthful report profile forms a {\em Nash equilibrium}.
For both equilibrium definitions, we are interested in the following question. 
\begin{quote}
    What is the minimum penalty rate for the truthful report profile to form a dominant strategy or Nash equilibrium?
\end{quote}
We give exact penalty thresholds for both truthful equilibria under Bernoulli distributions and use a reduction to obtain approximate results under arbitrary distributions in Main Result 4.

\medskip

\noindent{\bf Main Result 4} (Theorems \ref{thm:multi-ber-dom}, \ref{thm:multi-exist},  \ref{thm:multi-gen-dom} and \ref{thm:multi-gen-ne})
{\em
For any $T$-round game with distribution $Ber(p)$, truthful strategy profile is a dominant strategy equilibrium if and only if 
\[
r\ge \frac{C}{nD}\frac{1-(1-p)^{n-1}}{p}\frac{1-(1-p)^T}{p-p(1-p)^{T-1}},
\]
and a Nash equilibrium if and only if 
\[
r\ge \frac{C}{nD}\frac{1-(1-p)^T}{p-p(1-p)^{T-1}}.
\]
Given $\alpha\in[0,1]$ and any distribution with cumulative distribution function $F$,
let $p=1-F(\alpha D)$, where $D$ is the true gross consumption.
Then $\alpha$-approximate truthful profile is a Nash equilibrium if 
\[
r\ge \frac{1}{\alpha}\frac{C}{nD}\frac{1-(1-p)^T}{p-p(1-p)^{T-1}},
\]
and the  $\alpha$-approximate truthful profile is a dominant strategy equilibrium if  
\[
r\ge \frac{1}{\alpha}\frac{C}{nD}\frac{1-(1-p)^{n}}{p}\frac{1-(1-p)^T}{p-p(1-p)^{T-1}}.
\]


}

\subsection{Related Works}

Unfairness in  net metering  is a reflection of the famous free-rider problem, where an individual consumes a good but fails to pay or under-pays their share \cite{musgrave1959theory}. 
Possible solution schemes for overcoming the free-rider problems include government taxation \cite{groves1977optimal,pasour1981free}, appealing to altruism \cite{hindriks2002free,laury2008altruism}, and privatization \cite{maskus2004globalization,volokh2010privatization}. 
Researchers in the past have analyzed free-rider problems under the context of 
blood donation \cite{abasolo2014blood}, healthcare reforms \cite{cutler1995cost}, climate change \cite{ostrom2009polycentric}, etc. 
A few papers, particularly, have identified the free-rider problem in net metering for electricity  prosumers \cite{hoarau2019network,khodabakhsh2019prosumer,koumparou2017configuring,negash2015combined}.
Our work follows  \cite{khodabakhsh2019prosumer}, where a primitive version of the penalty mechanism is  proposed for promoting a fairer electricity rate structure. 
We formally define the mechanism and provide the corresponding theoretical analysis. 

Our setup is also similar to the public goods game, where each player is to make a contribution to a public ``pool'' that will then be re-distributed.
Without external measures, contributing zero to the pool is a Nash equilibrium \cite{archetti2011coexistence}.
An application of the public goods game is the famous ``tragedy of the commons'' \cite{hardin1968tragedy}. 
Reward or punishment schemes are introduced to incentivize the players to contribute the full amount \cite{brandt2003punishment,dong2016dynamics,nockur2021different}. 
It is  suggested that truthful players are willing to punish free-riders  in a public goods game  \cite{fehr2000cooperation}. 
 Another solution to the public goods game is to track the reputation of each player \cite{milinski2002reputation}. 

Another related line of works is information elicitation with limited verification ability.
\cite{caragiannis2012mechanism} and \cite{ball2019probabilistic} worked on probabilistic verification where a lying player may be caught by a probability based on her type.   
Strategic classification considers the problem where individuals manipulate their input to obtain a better classification outcome \cite{hardt2016strategic}. 
For strategic classification, a number of mechanisms are  proposed to inventivize truthful behavior or maximize social welfare \cite{haghtalab2020maximizing,liang2020data,zhang2021incentive}.

\section{Problem Statement}
In this section, we formally define our problem under the single player setting and defer the extension to multiple players to Section \ref{sec:multi}.
The player has a gross consumption 
$D \ge 0$, which is her private information. 
The game has $T$ rounds where $T > 1$ as otherwise the flux mechanism becomes invalid. 
In each round $t$, the center observes a partial signal,  $y_t \le D$,  which is randomly and independently drawn from a distribution $F$ supported on $[0,D]$. 
We use $r \ge 0$ to denote the penalty rate.
In a flux mechanism, 
a player cares more about the number of rounds left in the future rather than the number of rounds has passed. 
Thus we use $t = T, T-1, \cdots, 1$ to denote the current round, where {\em $t$ means there are $t$ rounds left}, including the current round.
For example, the first round is round $T$, the last round is  round $1$, and the previous round 
of round $t$ is round $t+1$.
For round $t\le T$, the flux mechanism runs as follows.
\begin{itemize}
    \item The center observes the player' net consumption $y_t \sim F$.
    \item The player submits their  reported gross consumption which is at least the net consumption, $b_t \ge y_t$.
    The player may not be truthful, i.e., $b_t$ may not equal to $D$.
    \item When $t< T$, the player's payment consists of regular payment $b_t$ and penalty payment  $r \cdot |b_{t+1}-b_{t}|$.
    When $t=T$, the player only pays the regular payment.
        

        

\end{itemize}

For $t<T$, we call $b_{t+1}$ the {\em history} of round $t$.\footnote{The {\em history} usually refers to the record from the beginning of the game till the current round. In our mechanism, the history before yesterday does not affect the player's action for today. Therefore, the history in round $t$ only needs to be the report for the previous day.}   
In each round $t$, the player wants to pay the lowest expected total 
payment by reporting $b_t$ without knowing the partial signals for future rounds. 
We call a mechanism {\em truthful} if the player reports $D$ for all rounds.
When two reports bring the same expected payment, we break tie in favor of truthfulness.
We adopt the assumption from Khodabakhsh~\etal  \cite{khodabakhsh2019prosumer} that $D$ does not vary with $t$. We explain in Appendix~\ref{app:const} 
an easy extension where $D_t$ 
is drawn from a known range $[\underline{D},\overline{D}]$.








\section{Bernoulli Distributions}\label{sec:single}
We start with the analysis of Bernoulli distribution as we show later a reduction from an arbitrary distribution to a  Bernoulli distribution.  
We prove it is only optimal for a player to report zero or their true consumption in each round.
The optimal strategies can then be characterized by three basic strategies (Definition~\ref{def:bs}).  
The penalty thresholds are computed by comparing
the different combinations of the basic strategies. 
Due to space limit, we defer most proofs to Appendix~\ref{app:pfs} and focus on explaining the intuition in this section.

\subsection{Basic Strategies}\label{sec:single-ber}


In a Bernoulli distribution setting, 
in each round $t$, the partial signal $y_t$ is $D$ with probability $p$ and $0$ with probability $1-p$.
When the  partial signal equals to the private value, i.e., $y_t = D$, we say that the player is {\em ``busted"} in round $t$.
We first define three basic strategies. 

\begin{definition}[Basic Strategies]\label{def:bs}
For Bernoulli distributed net consumption $y_t \sim Ber(p)$, we define the following as the three basic strategies:
\begin{itemize}
    \item lying-till-end: Report $b_t = 0$ when $y_t = 0$ and $b_t = D$ otherwise;

\item lying-till-busted: Report $b_t=0$ until $y_t = D$ for the first time, then report $D$ for all future rounds;

\item honest-till-end: Report $b_t = D$ for all rounds.
\end{itemize}


\end{definition}



We note that a player's optimal strategy for a given penalty rate $r$ can be solved by
backward induction. 
Let $OptCost(t,r,b_{t+1})$ denote the optimal expected cost for a player starting in round $t$ with penalty rate $r$ 
and report $b_{t+1}$ for the previous round. Then
\begin{align*}
    OptCost(t, r,b_{t+1}) = \min_{b_t} ExpCost(t,r, b_{t+1}, b_t),
\end{align*}
where $ExpCost(t,r,b_{t+1},b_{t})$ is the expected cost for the player starting in round $t$ and reporting $b_t$ (if she is allowed to), with penalty rate $r$ and history $b_{t+1}$, i.e.,
\begin{align*} 
    &ExpCost(t, r, b_{t+1}, b_t) \\
    &=\bE_{y_t}[\max\{y_t, b_t\} + r|\max\{y_t, b_t\} - b_{t+1}|+ OptCost(t-1, r, \max\{y_t, b_t\})] \\
    &= p\big( D + r(D-b_{t+1}) + OptCost(t-1, r,D)\big) + (1-p)\big(b_t + r|b_t - b_{t+1}| + OptCost(t-1, r,b_t)\big).
\end{align*}
The first term on the right-side of the equation above refers to the cost when the partial signal is revealed as $D$ and the player has to report $D$.
The second term refers to the cost when the partial signal is 0 and the player chooses to report $b_t$.
Let $OptCost(0,r,b_1) = 0$ for all $b_1$. 
When $t=T$, i.e., the first round, there is no history $b_{T+1}$.
Therefore, the player simply wants to minimize the following total cost, 
\begin{align*}
    OptCost(T,r)&=\min_{b_T} ExpCost(T,r,b_T)\\
    & =p(D+OptCost(t-1,r,D))+(1-p)(b_T+OptCost(t-1,r,b_T)).
\end{align*}
Solving the recursion will give the characterization of optimal strategies in Table \ref{tab:ber:opt-str}, as we demonstrate in Appendix~\ref{app:recur}. 
In what follows, we discuss a surprisingly simpler and more constructive proof by exploiting the properties of the flux mechanism, which may be of independent interest.




\subsection{Main Theorems}


We observe that there are two key elements that influence the decision making of the player.
\begin{itemize}
    \item [(1)] The player's history, $b_{t+1}$ for $t < T$. 
    The value of $b_{t+1}$ directly affects the penalty payment in round $t$.
    Intuitively, a player is more reluctant to lie if $b_{t+1}$ is high and better off lying if $b_{t+1}$ is small. 
    
    \item [(2)] The number of rounds left to play, i.e., $t$. 
    The value of $t$ indirectly influences the probability and the number of times a player will be busted in remaining rounds. 
\end{itemize}

Via Lemmas~\ref{lem:single-sub-opt}-\ref{lem:single-decr-thres}, we show these are the {\em only two} elements that determine a rational player's action. 
The following lemma shows that it is not optimal for a player to report a value strictly between 0 and $D$.
Moreover, if a player is untruthful in the previous round, it is better off to remain untruthful. 
With this lemma, we largely reduce the strategy space we need to consider.

\begin{lemma} \label{lem:single-sub-opt}
For any round $t \le T$, given $y_t = 0$, the optimal report in round $t$ is $b_t\in \{0,D\}$.
Moreover, if $t < T$ and $b_{t+1} = y_{t} = 0$, then the optimal report is $b_t=0$.
\end{lemma}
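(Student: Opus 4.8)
The plan is to work with the recursive cost function $ExpCost(t,r,b_{t+1},b_t)$ and show that, when $y_t=0$ so the player may choose any $b_t$ in $[0,D]$, the objective restricted to this segment is a convex (piecewise-linear) function of $b_t$, hence minimized at an endpoint $b_t\in\{0,D\}$. First I would fix $t$, $r$, and $b_{t+1}$, and treat
\[
g(b_t) := b_t + r|b_t-b_{t+1}| + OptCost(t-1,r,b_t)
\]
as the only $b_t$-dependent part of $ExpCost$ (the $y_t=D$ branch does not involve $b_t$). The term $b_t$ is linear and $r|b_t - b_{t+1}|$ is convex, so the whole obstacle reduces to showing $OptCost(t-1,r,\cdot)$ is convex in its history argument on $[0,D]$. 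I would prove this by induction on $t$: the base case $OptCost(0,r,\cdot)\equiv 0$ is trivially convex, and for the inductive step I would write $OptCost(t-1,r,b)=\min_{b'} ExpCost(t-1,r,b,b')$ and argue that $ExpCost(t-1,r,b,b')$ is jointly convex in $(b,b')$ — because it is a nonnegative combination of the convex maps $b\mapsto r(D-b)$ (constant in the relevant branch), $(b,b')\mapsto b' + r|b'-b| + OptCost(t-2,r,b')$, using the inductive hypothesis on $OptCost(t-2,r,\cdot)$ — and then use the standard fact that partial minimization of a jointly convex function over one variable yields a convex function of the remaining variable. This establishes the first claim: the optimal report is always at $0$ or $D$.

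For the second claim, suppose $t<T$ and $b_{t+1}=y_t=0$. Now the choice is between $b_t=0$ and $b_t=D$, and I would compare $g(0)$ and $g(D)$ directly. With $b_{t+1}=0$ we get $g(0) = OptCost(t-1,r,0)$ and $g(D) = D + rD + OptCost(t-1,r,D)$. So the inequality to establish is
\[
OptCost(t-1,r,0) \le D(1+r) + OptCost(t-1,r,D),
\]
i.e. the extra regular-plus-penalty cost $D(1+r)$ of jumping up to $D$ today cannot be recouped by the change in continuation cost. This follows from a monotonicity bound: $OptCost(t-1,r,D) - OptCost(t-1,r,0) \ge -D$, because raising the history from $0$ to $D$ can only increase the player's cost by at most what it would cost to "undo" it — formally, any strategy available with history $D$ can be mimicked with history $0$ at an added one-round penalty offset of at most $rD$, and in fact a cleaner bound is $|OptCost(t-1,r,b)-OptCost(t-1,r,b')| \le (1+r)|b-b'|$ or even just $OptCost(t-1,r,D)\ge OptCost(t-1,r,0) - D$, which I would prove alongside the convexity induction. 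Combining, $D(1+r) + OptCost(t-1,r,D) \ge D(1+r) - D + OptCost(t-1,r,0) = Dr + OptCost(t-1,r,0) \ge OptCost(t-1,r,0) = g(0)$, so $b_t=0$ is (weakly) optimal, and by the tie-breaking rule toward truthfulness we would still need $g(0)<g(D)$ strictly or invoke that $Dr>0$ unless $r=0$; the $r=0$ and $D=0$ edge cases I would dispatch separately since then the mechanism charges only $b_t$ and reporting $0$ is strictly better anyway.

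The main obstacle is the convexity-preservation induction — specifically verifying that $ExpCost(t-1,r,b,b')$ is genuinely \emph{jointly} convex in the history $b$ and the report $b'$, not merely separately convex, so that partial minimization is legitimate; the subtlety is that $b$ enters both the penalty term $r|b'-b|$ (jointly convex, fine) and, in the $y=D$ branch, the term $r(D-b)$ which is linear, so this is manageable, but one must be careful to carry a Lipschitz/slope bound through the induction simultaneously with convexity so that the endpoint comparison in the second claim goes through. I would structure the appendix proof as a single induction establishing both "convex on $[0,D]$" and "slopes lie in $[-1, \,\text{something}]$" (or the explicit Lipschitz constant $1+r$), since the two facts feed each other.
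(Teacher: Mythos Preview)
Your argument for the first claim has a genuine gap: convexity of $g(b_t)=b_t+r|b_t-b_{t+1}|+OptCost(t-1,r,b_t)$ does \emph{not} imply that the minimum on $[0,D]$ is attained at an endpoint. A convex (even piecewise-linear) function can be minimized in the interior---for instance, already at $t=1$ with $r>1$ the function $b_1\mapsto b_1+r|b_1-b_2|$ is valley-shaped with its minimum at $b_1=b_2$, so if $b_2\in(0,D)$ the optimal report is strictly interior. Your joint-convexity / partial-minimization induction is correct as far as it goes, but it proves the wrong property for the conclusion you want. What the paper actually uses is that the cost is \emph{linear} in $b_t$: one shows by backward induction that $OptCost(t-1,r,\cdot)$ is affine on $[0,D]$, and then observes---by forward induction on the optimal trajectory---that the previous report satisfies $b_{t+1}\in\{0,D\}$, so the kink of $r|b_t-b_{t+1}|$ sits at an endpoint and $g$ is genuinely linear on $[0,D]$. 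Both inductions are needed; convexity alone is insufficient.

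For the second claim your Lipschitz idea works, but it is a different route than the paper's. The paper does not compare only $g(0)$ versus $g(D)$; instead it takes the \emph{last} round $t$ with $b_{t+1}=y_t=0$ but $b_t>0$, keeps the continuation report $b_{t-1}$ fixed, and uses the elementary pointwise inequality
\[
(r+1)b_t + r|x-b_t| \ge (r+1)b_t + r(x-b_t) = b_t + rx > rx
\]
for any $x\ge 0$ and $b_t>0$, which directly shows that switching to $b_t=0$ strictly lowers the cost with no change to later rounds. This avoids any Lipschitz induction on $OptCost$ and, incidentally, handles arbitrary $b_t>0$ rather than only $b_t=D$. Your approach can be made to work (the bound $|OptCost(t-1,r,D)-OptCost(t-1,r,0)|\le rD$ is immediate by mimicking strategies, and then $g(D)-g(0)\ge D>0$), but it is heavier machinery than the paper needs.
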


Next, we  prove that in each round, the optimal strategy is determined by a penalty threshold such that a player will be truthful if and only if the penalty rate $r$ is above the threshold. 
We call them {\em critical thresholds}.

\begin{lemma}[Critical Thresholds]\label{lem:critical-thresholds}
For $t=T$,  there is a threshold penalty rate $r_T^{(\emptyset)} \ge 0$ such that reporting $D$ is optimal if and only if the penalty rate is at least $r_T^{(\emptyset)}$;
For $t<T$, there is a threshold penalty rate $r_t^{(b_{t+1})} \ge 0$ such that reporting $D$ is optimal for a player in round $t$ with history $b_{t+1}$ if and only if the penalty rate is at least $r_t^{(b_{t+1})}$.
\end{lemma}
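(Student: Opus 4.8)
The plan is to prove the threshold (``single-crossing'') property by showing that the \emph{net advantage of lying}, as a function of the penalty rate $r$, is monotone: once $r$ is large enough that honesty beats lying in round $t$, it stays that way for all larger $r$. Concretely, for $t<T$ and a fixed history $b_{t+1}$, define
\begin{align*}
  \Delta_t(r, b_{t+1}) \;=\; ExpCost(t,r,b_{t+1},0) \;-\; ExpCost(t,r,b_{t+1},D),
\end{align*}
the expected-cost gap between reporting $0$ and reporting $D$ when $y_t=0$; by Lemma~\ref{lem:single-sub-opt} these are the only two candidate reports, so reporting $D$ is optimal in round $t$ iff $\Delta_t(r,b_{t+1})\ge 0$ (ties broken toward truthfulness). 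The claim then reduces to: $\Delta_t(r,b_{t+1})$ is nondecreasing in $r$, and $\Delta_t(0,b_{t+1})\le 0$, so the set $\{r:\Delta_t(r,b_{t+1})\ge 0\}$ is a closed ray $[r_t^{(b_{t+1})},\infty)$. The same argument with the $t=T$ cost functional (no history term) gives $r_T^{(\emptyset)}$.

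The key step is establishing monotonicity of $\Delta_t$ in $r$, and I would do this by backward induction on $t$, carrying along a companion monotonicity fact about $OptCost$. Specifically I would prove simultaneously, for each $t$, that (a) $OptCost(t,r,b)$ is nondecreasing and convex-ish in $r$ for each fixed $b$, with a controlled dependence on $b$ (e.g.\ $OptCost(t,r,b') - OptCost(t,r,b)$ is nondecreasing in $r$ for $b'\ge b$, and the map is $1$-Lipschitz-type in $b$ in a way the recursion preserves), and (b) the derived gap $\Delta_t$ is nondecreasing in $r$. The base case $t=1$ is immediate since $OptCost(0,\cdot,\cdot)=0$ and $\Delta_1(r,b_2) = (1-p)\bigl(r|0-b_2| - r|D-b_2| \,+\, (b_2 \text{ vs } D \text{ regular-payment and tail terms})\bigr)$, whose $r$-coefficient is $(1-p)(|b_2| - |D-b_2|)\ge 0$ once one notes the relevant histories are $0$ or $D$. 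For the inductive step, expand $ExpCost(t,r,b_{t+1},b_t)$ using the recursion: the round-$t$ penalty term contributes an explicit $r$-linear piece, and the continuation term $OptCost(t-1,r,\max\{y_t,b_t\})$ is handled by the inductive hypothesis. Differentiating (or taking difference quotients in $r$) and grouping the $p$ and $1-p$ branches, the $p$-branch contributes the same quantity to both $ExpCost(t,r,b_{t+1},0)$ and $ExpCost(t,r,b_{t+1},D)$ and cancels in $\Delta_t$, leaving a sum of nonnegative terms from the explicit penalty and from the inductive monotonicity of the continuation gap.

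I would then pin down the endpoint behavior to get a genuine threshold rather than an empty or all-of-$[0,\infty)$ ray. At $r=0$ there is no penalty, so $ExpCost(t,0,b_{t+1},0)\le ExpCost(t,0,b_{t+1},D)$ because reporting $0$ weakly lowers the current regular payment and, one checks from the recursion at $r=0$, also weakly lowers every continuation cost (the continuation cost at $r=0$ is itself monotone nondecreasing in the report, again by the induction), hence $\Delta_t(0,b_{t+1})\le 0$ and $r_t^{(b_{t+1})}\ge 0$. For the other direction, as $r\to\infty$ the penalty term $ r|b_t-b_{t+1}|$ dominates, so if $b_{t+1}\neq 0$ then $\Delta_t(r,b_{t+1})\to+\infty$, forcing a finite threshold; and when $b_{t+1}=0$ we invoke the second half of Lemma~\ref{lem:single-sub-opt}, which already says $b_t=0$ is optimal, so the ``threshold'' is vacuously $r_t^{(0)}=+\infty$ (or we simply restrict the statement to the histories that actually arise, $b_{t+1}\in\{0,D\}$). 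Finally, combining ``$\Delta_t$ nondecreasing in $r$'' with ``$\Delta_t(0,\cdot)\le 0$'' and the tie-breaking rule yields exactly: reporting $D$ is optimal iff $r\ge r_t^{(b_{t+1})}$, as claimed.

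The main obstacle I anticipate is the bookkeeping in the inductive step: $OptCost(t-1,r,\cdot)$ is itself a min over two sub-cases whose identity flips as $r$ crosses the lower-level thresholds, so $OptCost$ is only piecewise-smooth in $r$, and I must argue monotonicity of $\Delta_t$ across those kinks. The clean way around this is to never differentiate $OptCost$ directly but to carry the \emph{difference} $OptCost(t-1,r,D)-OptCost(t-1,r,b_t)$ as the inductive invariant and show \emph{it} is nondecreasing in $r$ and bounded between $0$ and $r\,(D-b_t)$-type quantities — these difference statements are stable under the $\min$ because $\min\{f_1,f_2\}-\min\{g_1,g_2\}$ inherits monotonicity when each $f_i-g_i$ is monotone and the pairs are ``aligned'' (which they are here, since lowering the report weakly helps in every branch). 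Making that alignment precise — essentially that the optimal lower-level report is monotone in the incoming history — is the real content, and it is itself a consequence of the same single-crossing structure applied one level down, so the induction closes.
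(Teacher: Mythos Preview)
Your approach and the paper's coincide at the level of ideas: both run a backward induction on $t$ to show that the optimal report in each round is monotone in $r$, which is exactly the single-crossing/threshold property. The paper's version is more qualitative and shorter: instead of tracking a quantitative gap, it uses the inductive hypothesis directly --- raising $r$ can only flip future optimal reports from $0$ to $D$, and since reporting $D$ with history $D$ incurs no current-round penalty, if $D$ was optimal at $r$ it remains optimal at any $r'>r$. Your explicit $\Delta_t$ formulation gets to the same place with more bookkeeping.

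One concrete correction: the inductive invariant you propose in the last paragraph --- that $OptCost(t-1,r,D)-OptCost(t-1,r,0)$ is nondecreasing in $r$ --- is false. For $t-1=1$ and $r<1$ one has $OptCost(1,r,D)-OptCost(1,r,0)=(1-2p)rD$, which is strictly \emph{decreasing} in $r$ whenever $p>\tfrac12$. What \emph{is} nondecreasing, and what you actually need, is the combination
\[
rD \;+\; OptCost(t-1,r,0)\;-\;OptCost(t-1,r,D),
\]
which (up to the factor $(1-p)$ and the additive constant $-D$) is $\Delta_t(r,D)$ itself. So the clean fix is to run the induction directly on $\Delta_t$, not on the $OptCost$ difference; with that adjustment your argument goes through and matches the paper's.
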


Lemmas~\ref{lem:single-sub-opt} and \ref{lem:critical-thresholds} together imply that  the optimal strategy can only be one or a combination of the basic strategies.
In particular, by Lemma~\ref{lem:single-sub-opt}, $r_t^{(0)}=\infty$ for any $t$. 
Moreover, since $b_{t+1}$ can only be $0$ or $D$, by
Lemma~\ref{lem:critical-thresholds}, we only need to determine the values of $r_T^{(\emptyset)}$ and $r_t^{(D)}$ for $t<T$ to complete the picture of optimal strategies. 
In the following two lemmas, we give some properties of these thresholds.

\begin{lemma}\label{fact:single-ber-comp}
$r_t^{(\emptyset)} \ge r_t^{(D)}$ for $t \in \{1,\dots,T\}$.
\end{lemma}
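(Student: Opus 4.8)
The plan is to compare the player's round-$t$ expected cost, viewed as a function of her intended report, under no history versus under history $b_{t+1}=D$; here $r_t^{(\emptyset)}$ is the analogue of $r_T^{(\emptyset)}$ for a $t$-round game, which is well defined since the mechanism treats a $t$-round game exactly like the last $t$ rounds of any longer one. Since reporting $b_t>D$ only inflates both the regular payment and the argument $\max\{y_t,b_t\}$ of every downstream term, it is dominated by $b_t=D$, so I restrict to $b_t\in[0,D]$. Using the displayed recursion for $ExpCost$, write $g_\emptyset(b_t)=p\bigl(D+OptCost(t-1,r,D)\bigr)+(1-p)\bigl(b_t+OptCost(t-1,r,b_t)\bigr)$ for the no-history round-$t$ expected cost of intending to report $b_t$, and $g_D(b_t)$ for the analogous quantity with history $b_{t+1}=D$.

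The key step is the identity $g_D(b_t)=g_\emptyset(b_t)+(1-p)\,r\,(D-b_t)$ on $[0,D]$. Passing from no history to history $D$ changes nothing except the round-$t$ penalty term $r\,|\max\{y_t,b_t\}-b_{t+1}|$: on the busted branch ($y_t=D$, probability $p$) it is $r\,|D-D|=0$, and on the unbusted branch ($y_t=0$, probability $1-p$) it is $r(D-b_t)$; the regular payment $\max\{y_t,b_t\}$ and the continuation value $OptCost(t-1,r,\max\{y_t,b_t\})$ depend only on $\max\{y_t,b_t\}$, hence are untouched by $b_{t+1}$. The correction $(1-p)\,r\,(D-b_t)$ is nonnegative on $[0,D]$ and vanishes at $b_t=D$, so
\begin{align*}
g_D(b_t)-g_D(D)=\bigl(g_\emptyset(b_t)-g_\emptyset(D)\bigr)+(1-p)\,r\,(D-b_t)\ \ge\ g_\emptyset(b_t)-g_\emptyset(D)
\end{align*}
for every $b_t\in[0,D]$.

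Now fix $r\ge r_t^{(\emptyset)}$. By Lemma~\ref{lem:critical-thresholds}, reporting $D$ is optimal in the no-history round-$t$ problem, i.e.\ $g_\emptyset(b_t)-g_\emptyset(D)\ge 0$ for all $b_t\in[0,D]$; by the displayed inequality, $g_D(b_t)-g_D(D)\ge 0$ for all $b_t\in[0,D]$, so (with the tie-break toward truthfulness) reporting $D$ is optimal in round $t$ with history $D$, hence $r\ge r_t^{(D)}$ by Lemma~\ref{lem:critical-thresholds} again. Thus $[r_t^{(\emptyset)},\infty)\subseteq[r_t^{(D)},\infty)$, which gives $r_t^{(D)}\le r_t^{(\emptyset)}$. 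The one point that warrants a careful check is the decomposition identity — that in the flux mechanism the history $b_{t+1}$ affects the round-$t$ cost solely through that single penalty term and never enters $OptCost(t-1,r,\cdot)$; everything else is then immediate, and in particular Lemma~\ref{lem:single-sub-opt} is not needed (though one could invoke it to first reduce the comparison to $b_t\in\{0,D\}$).
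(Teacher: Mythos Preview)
Your proof is correct and follows essentially the same route as the paper's: both arguments observe that passing from no history to history $D$ adds only the nonnegative penalty term $(1-p)\,r\,(D-b_t)$, which vanishes at $b_t=D$, so optimality of $D$ in the no-history problem transfers to the history-$D$ problem. Your presentation via the identity $g_D(b_t)=g_\emptyset(b_t)+(1-p)\,r\,(D-b_t)$ is slightly cleaner and, as you note, sidesteps Lemma~\ref{lem:single-sub-opt} by comparing against all $b_t\in[0,D]$ rather than only $b_t\in\{0,D\}$; the paper instead compares $D$ against $0$ directly and relies on Lemma~\ref{lem:single-sub-opt} to conclude optimality.
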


Given the same $t$ rounds left, Lemma~\ref{fact:single-ber-comp} says a player is more inclined to lie without a history than with a truthful history. 
This is straightforward as lying with a truthful history results in an additional penalty payment.

\begin{lemma}\label{lem:single-decr-thres}
Given $r_t^{(\emptyset)}\ge\tfrac{1}{p}$, $r_t^{(\emptyset)}$ decreases as $t$ increases.
\end{lemma}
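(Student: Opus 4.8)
The plan is to show that the critical threshold $r_t^{(\emptyset)}$ is (weakly) decreasing in $t$, under the standing hypothesis that it is at least $\tfrac1p$ — this hypothesis is exactly what controls the sign of the penalty contribution and makes the monotonicity go through. The natural approach is to compare the game with $t$ rounds left to the game with $t+1$ rounds left by fixing the penalty rate $r$ at the threshold value $r_t^{(\emptyset)}$ and arguing that in the $(t+1)$-round game, with no history, honesty is also optimal at that same $r$; since $r_{t+1}^{(\emptyset)}$ is the least penalty rate at which honesty-with-no-history is optimal (Lemma~\ref{lem:critical-thresholds}), this gives $r_{t+1}^{(\emptyset)} \le r_t^{(\emptyset)}$.

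First I would set $r = r_t^{(\emptyset)}$ and write out $ExpCost(t+1, r, \emptyset, b_{t+1})$ for the two relevant first-round reports $b_{t+1}\in\{0,D\}$ (Lemma~\ref{lem:single-sub-opt} restricts us to these). Reporting $D$ in round $t+1$ leads, with probability $p$ to a busted round and with probability $1-p$ to a continuation in round $t$ with truthful history, i.e.\ cost $OptCost(t,r,D)$; reporting $0$ leads with probability $p$ to being busted (and then a history-$D$ continuation, but the round-$t+1$ penalty is $0$ since there is no history) and with probability $1-p$ to $OptCost(t,r,0)$, which by Lemma~\ref{lem:single-sub-opt} equals the cost of lying-till-busted from round $t$ with zero history. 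The difference between these two expected costs is, after the probability-$p$ ``busted'' branches partially cancel, a linear expression in $r$; because honesty is optimal at $r=r_t^{(\emptyset)}$ in the $t$-round no-history game, I can substitute the known values of $OptCost(t,r,D)$ and $OptCost(t,r,0)$ at that threshold (these are pinned down precisely because $r_t^{(\emptyset)}$ is defined by the indifference/optimality condition there) and check the sign. The hypothesis $r_t^{(\emptyset)}\ge \tfrac1p$ is what guarantees the coefficient of $r$ in this difference has the right sign so that ``honest beats lying'' at round $t$ propagates to ``honest beats lying'' at round $t+1$.

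Alternatively — and this may be cleaner — I would use the explicit closed form. From Lemma~\ref{lem:critical-thresholds} and the recursion, the no-history threshold should satisfy a recurrence relating $r_{t+1}^{(\emptyset)}$ to $r_t^{(D)}$ and $r_t^{(\emptyset)}$; combined with Lemma~\ref{fact:single-ber-comp} ($r_t^{(\emptyset)}\ge r_t^{(D)}$) one can read off monotonicity directly. Concretely, if one has already derived that $r_t^{(\emptyset)} = \frac{1-(1-p)^t}{p - p(1-p)^{t-1}}$ for $t$ in the relevant regime, the claim reduces to checking that the function $g(t) = \frac{1-(1-p)^t}{p-p(1-p)^{t-1}}$ is decreasing in $t$ for $p\in(0,1)$; writing $x = 1-p \in (0,1)$, this is $\frac{1-x^t}{p(1-x^{t-1})}$, and one verifies $g(t+1)\le g(t)$ by cross-multiplying and reducing to the elementary inequality $(1-x^{t+1})(1-x^{t-1}) \le (1-x^t)^2$, i.e.\ $x^{t-1} + x^{t+1} \ge 2x^t$, which is just $x^{t-1}(1-x)^2 \ge 0$. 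That identity is where the real content collapses to something trivial; the genuine work is justifying that the closed form (or the recurrence) is valid precisely in the regime $r_t^{(\emptyset)} \ge \tfrac1p$, i.e.\ that we are on the branch of Table~\ref{tab:ber:opt-str} where this formula holds rather than one of the mixed-strategy branches.

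The main obstacle I anticipate is handling the case analysis cleanly: the threshold $r_t^{(\emptyset)}$ has different behavior for $p \ge 0.5$ versus $p < 0.5$ (as Table~\ref{tab:ber:opt-str} shows), and the hypothesis $r_t^{(\emptyset)} \ge \tfrac1p$ is exactly the device that lets us avoid splitting into these sub-cases — so the delicate part is confirming that, whenever $r_t^{(\emptyset)} \ge \tfrac 1p$ holds, we really are comparing the ``pure honest-till-end versus lying-till-busted'' costs and no other candidate strategy (e.g.\ lying the last round only, which appears in the $\tfrac1{2p} < r \le 1$ row) can undercut them in the $(t+1)$-round game at $r = r_t^{(\emptyset)}$. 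I would dispatch this by invoking Lemmas~\ref{lem:single-sub-opt} and \ref{lem:critical-thresholds} to reduce to reports in $\{0,D\}$ and to a single threshold comparison, then finish with the elementary inequality above.
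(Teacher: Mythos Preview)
Your first approach is essentially the paper's own argument. The paper reformulates the lemma as: ``if $r\ge\tfrac1p$ and the player is truthful with $T$ rounds left, then she is also truthful with $T+1$ rounds left,'' and proves it directly. From truthfulness at $T$ rounds one has $D+OPT(D,T-1)\le OPT(0,T-1)$; then reporting $0$ in the first of $T+1$ rounds gives expected cost
\[
p\bigl(D+rD+OPT(D,T-1)\bigr)+(1-p)\,OPT(0,T-1)\;\ge\; D+prD+OPT(D,T-1)\;\ge\; 2D+OPT(D,T-1),
\]
where the last inequality is exactly $prD\ge D$, i.e.\ $r\ge\tfrac1p$. This matches your description of why the hypothesis controls the sign.

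Your alternative route via the closed form $r_t^{(\emptyset)}=\dfrac{1-(1-p)^t}{p-p(1-p)^{t-1}}$ and the AM--GM-type inequality $(1-x^{t+1})(1-x^{t-1})\le(1-x^t)^2$ is slick, but in the paper's logical order it is circular: that closed form is established in Theorem~\ref{thm:ber-p-no-hist}, whose proof invokes Lemma~\ref{lem:single-decr-thres}. You recognize that ``the genuine work is justifying that the closed form \ldots\ is valid,'' but doing so independently would amount to redoing the comparison argument anyway, so the first approach is the right one to pursue.
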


Lemmas~\ref{fact:single-ber-comp} and \ref{lem:single-decr-thres} together tell us 
the player is least incentivized to be truthful on the first round and 
$r_T^{(\emptyset)}$ is the penalty threshold that ensures  truthfulness for the game.
We give this important threshold in Theorem~\ref{thm:ber-p-no-hist}.


\begin{theorem}\label{thm:ber-p-no-hist}
The minimum penalty for truthful reporting in a game of $T$ rounds with $Ber(p)$ distribution is
    \begin{equation}\label{eq:ber-p-no-hist}
        r_T^{(\emptyset)}=\frac{1-(1-p)^T}{p-p(1-p)^{T-1}}.
    \end{equation}
\end{theorem}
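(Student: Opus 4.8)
The plan is to compute the threshold $r_T^{(\emptyset)}$ directly, since Lemmas~\ref{lem:single-sub-opt}--\ref{lem:single-decr-thres} and the surrounding discussion have already reduced the question ``what penalty makes the player truthful in every round?'' to ``what penalty makes reporting $D$ optimal in the first round $t=T$, where there is no history?''. By Lemma~\ref{lem:single-sub-opt}, any deviation from truthfulness in round $T$ amounts to reporting $0$ and then continuing to report $0$ until the first time the player is busted; once busted she is in a state with a truthful history, and — provided we restrict attention to penalty rates at least $r_T^{(\emptyset)}$, which I will verify is at least $1/p$ — she has no incentive to revert to lying. Hence $r_T^{(\emptyset)}$ is exactly the penalty rate at which the player is indifferent between the two basic strategies \emph{honest-till-end} and \emph{lying-till-busted} started from round $T$, and the proof reduces to comparing their expected costs.

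First I would write down the two costs. Honest-till-end pays the regular payment $D$ in each of the $T$ rounds and never incurs a penalty, for a total of $TD$. For lying-till-busted I condition on the round in which the player is first busted: with probability $(1-p)^T$ she is never busted and pays $0$; with probability $p$ she is first busted in round $T$, reporting $D$ that round with no history (so no penalty) and then $D$ for the remaining $T-1$ rounds, for a total of $TD$; and with probability $(1-p)^{T-t}p$ she is first busted in round $t$ for $1\le t\le T-1$, paying $0$ in rounds $T,\dots,t+1$, then in round $t$ the regular payment $D$ plus a penalty $r|D-0|=rD$ (her previous report was $0$), then $D$ in each of the last $t-1$ rounds, for a total of $(r+t)D$. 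Summing,
\[
\mathrm{Cost}(\text{lying-till-busted}) \;=\; pTD \;+\; pD\sum_{t=1}^{T-1}(1-p)^{T-t}(r+t).
\]

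Next I would set this equal to $TD$ and solve for $r$. Writing $q=1-p$ and evaluating the two geometric-type sums $\sum_{s=1}^{T-1}q^{s}$ and $\sum_{s=1}^{T-1}sq^{s}$, the equation collapses after the telescoping cancellations to $r\big(1-(1-p)^{T-1}\big) = \big(1-(1-p)^{T}\big)/p$, which rearranges to the claimed expression~\eqref{eq:ber-p-no-hist}. I would then close the loop by checking that this value is at least $1/p$ — equivalently $(1-p)^{T-1}\ge(1-p)^{T}$, which is immediate — so that the hypothesis invoked above (and in Lemma~\ref{lem:single-decr-thres}) holds and the reduction to the two basic strategies is legitimate; and since for $r$ strictly below this value lying-till-busted is strictly cheaper than honest-till-end, truthfulness fails there, giving the matching lower bound.

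The delicate part is the very first step: arguing that among \emph{all} strategies — including those that revert to lying after being busted, or that report $D$ in round $T$ but deviate later — the binding comparison near the threshold is precisely honest-till-end versus lying-till-busted. This is exactly what the monotonicity facts (Lemmas~\ref{fact:single-ber-comp} and~\ref{lem:single-decr-thres}) together with the bound $r_T^{(\emptyset)}\ge 1/p$ are for. If one prefers to avoid enumerating deviations, the same conclusion follows from the backward-induction recursion for $OptCost$: a short joint induction shows that for every $r\ge 1/p$ one has $OptCost(t,r,D)=tD$ (the player stays truthful once she holds a truthful history) and $OptCost(t,r,0)$ satisfies an explicit linear recursion, after which the round-$T$ indifference condition $OptCost(T-1,r,0)\ge TD$ again yields~\eqref{eq:ber-p-no-hist}.
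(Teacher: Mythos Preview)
Your proposal is correct and follows essentially the same approach as the paper: reduce via Lemmas~\ref{lem:single-sub-opt}--\ref{lem:single-decr-thres} to the indifference point between \emph{honest-till-end} and \emph{lying-till-busted} in round~$T$, equate the two expected costs, and solve for $r$. The only cosmetic difference is that the paper compares costs over the ``first segment'' (up to the first bust, after which the two strategies coincide) rather than over all $T$ rounds as you do; your explicit check that $r_T^{(\emptyset)}\ge 1/p$ to legitimize the reduction is a point the paper leaves somewhat implicit.
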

\begin{proof}
By the definitions of the thresholds, if the penalty $r\ge r_T^{(\emptyset)}$ and $r\ge r_t^{(1)}$ for any $t \le T$, then the player will be truthful. 
By Lemma~\ref{fact:single-ber-comp} and \ref{lem:single-decr-thres}, $r_T^{(\emptyset)}\ge r_t^{(\emptyset)}\ge r_t^{(1)}$, for any $t\le T$.
Therefore, it is only necessary to compute $r_T^{(\emptyset)}$.
By Lemma~\ref{lem:single-sub-opt} and \ref{fact:single-ber-comp}, it is sufficient to compare lying-till-busted and honest-till-end in the first segment:
\begin{equation*}
\begin{array}{rll}
 \bbE[\text{honest}] &= D\cdot\bbE[\text{\# days before busted}]\\[5pt]
 & = D+D\left\{\sum_{i=0}^{T-2}i(1-p)^ip+(T-1)(1-p)^{T-1}\right\}\\[5pt]
        &= D+D\cdot\frac{1-p}{p}(1-(1-p)^{T-1}); \\[5pt]
        \bbE[\text{lying}] &= rD\cdot\text{Pr(busted)} = rD(1-(1-p)^{T-1}).
\end{array}
\end{equation*}
    The optimal threshold can be obtained via setting these two expected costs equal,
    $$r_T^{(\emptyset)} = \frac{D+D\frac{1-p}{p}(1-(1-p^{T-1}))}{D(1-(1-p)^{T-1})}=\frac{1-(1-p)^T}{p-p(1-p)^{T-1}}.$$ 
\end{proof}

We see $r_T^{(\emptyset)} \rightarrow 1/p$ as $T \rightarrow \infty$ and $r_T^{(\emptyset)}$ decreases as $T$ increases.
This implies the increasing length of the game incentivizes the player to speak the truth today, even when they do not have to. 
To understand Theorem \ref{thm:ber-p-no-hist}, we observe that it is sufficient to compare lying-till-busted and honest-till-end since $r_T^{(\emptyset)}$ ensures the player to stay truthful after being busted.  
Before 
the player is busted for the first time, it is not optimal to oscillate between lying and truth-telling, as it is strictly dominated by lying completely. 
Therefore, the only viable strategies  are lying-till-busted and honest-till-end, and 
the desired threshold sets the expected cost of these two strategies equal.

With a more involved argument, we get the exact values for the truthful threshold given a truthful history, 
i.e, the $r_t^{(D)}$'s.
The values of $r_T^{(\emptyset)}$ and $r_t^{(D)}$ characterize the optimal strategies for a player and are an alternative representation of Table~\ref{tab:ber:opt-str}.

\begin{theorem} \label{thm:single-thres-with-hist}
    For $p\le \tfrac{1}{2}$, $r_t^{(D)}=\tfrac{1-(1-p)^t}{2p-p(1-p)^{t-1}}$. For $p>\tfrac{1}{2}$, $r_t^{(D)}=1$ for $t=1$ and $r_t^{(D)}=\tfrac{1}{2p}$ for $t\ge 2$.
\end{theorem}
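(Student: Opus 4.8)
The plan is to reduce the determination of $r_t^{(D)}$ to a single one‑variable recursion. Fix a round $t$ with history $b_{t+1}=D$ and, since the busted case is determined, condition on $y_t=0$; by Lemma~\ref{lem:single-sub-opt} the player either reports $D$ or reports $0$. Reporting $D$ costs $D+OptCost(t-1,r,D)$ while reporting $0$ costs $rD+OptCost(t-1,r,0)$, so $r_t^{(D)}$ is exactly the penalty rate at which these are equal, i.e.\ at which $\delta(t-1)=(r-1)D$, where $\delta(m):=OptCost(m,r,D)-OptCost(m,r,0)$. I would first derive a closed recursion for $\delta$. Writing out $OptCost(m,r,0)$ via Lemma~\ref{lem:single-sub-opt} (history $0$ and $y_m=0$ forces report $0$) and $OptCost(m,r,D)$ from its defining minimization, the common $p\cdot OptCost(m-1,r,D)$ terms cancel on subtraction and the $(1-p)$‑branch collapses, giving
\[
\delta(m)=-prD+(1-p)\min\{D+\delta(m-1),\,rD\},\qquad \delta(0)=0,
\]
together with the decision rule: reporting $D$ is optimal in round $t$ with history $D$ iff $\delta(t-1)\le (r-1)D$. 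Existence and uniqueness of the threshold in $r$ is already granted by Lemma~\ref{lem:critical-thresholds}, so it remains only to locate it.

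The next step is to solve this recursion piecewise in $r$. Call $r$ \emph{admissible at level $m$} if $\delta(j)\le(r-1)D$ for all $j<m$, so that each of the first $m$ minima resolves to $D+\delta(j)$; there the recursion is linear and solves to $\delta(m)=\big(\tfrac{1-p}{p}-r\big)\big(1-(1-p)^m\big)D$. Imposing the indifference condition $\delta(t-1)=(r-1)D$ in this regime and solving for $r$, the identity $(1-p)\big(1-(1-p)^{t-1}\big)+p=1-(1-p)^t$ turns the solution into exactly $r=\tfrac{1-(1-p)^t}{2p-p(1-p)^{t-1}}$. For $p\le\tfrac12$ I would then verify self‑consistency: this value is $\le\tfrac1{2p}\le\tfrac{1-p}{p}$, so the coefficient $\tfrac{1-p}{p}-r\ge0$ and $\delta(j)$ is nondecreasing in $j$; and a short algebraic check shows $t\mapsto\tfrac{1-(1-p)^t}{2p-p(1-p)^{t-1}}$ is nondecreasing, so at $r=\tfrac{1-(1-p)^t}{2p-p(1-p)^{t-1}}$ every earlier $\delta(j)$ indeed satisfies $\delta(j)\le(r-1)D$, i.e.\ this $r$ is genuinely admissible at level $t-1$. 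Since on that regime $\delta(t-1,r)-(r-1)D$ is strictly decreasing in $r$, the value is the unique crossing, hence $r_t^{(D)}=\tfrac{1-(1-p)^t}{2p-p(1-p)^{t-1}}$.

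For $p>\tfrac12$ the argument splits. When $t=1$ the rule reads $\delta(0)=0\le(r-1)D$, giving $r_1^{(D)}=1$. When $t\ge2$ I would argue both directions. For $r\le\tfrac1{2p}$ (which is $<1$ here), an induction shows each $\min$ resolves to $rD$, so $\delta(m)=(1-2p)rD$ for all $m\ge1$; since $(1-2p)rD>(r-1)D$ iff $r<\tfrac1{2p}$, reporting $D$ is not optimal for $r<\tfrac1{2p}$. For $r\ge\tfrac1{2p}$, an induction on $m$ shows $\delta(m)\le(r-1)D$ for all $m\ge1$: the base case $m=1$ is a direct two‑case check ($r\le1$ or $r>1$, using $\tfrac{2-p}{1+p}<1$ for $p>\tfrac12$), and in the step $\delta(m-1)\le(r-1)D$ makes the $\min$ equal $D+\delta(m-1)$, whence $\delta(m)=-prD+(1-p)(D+\delta(m-1))\le -prD+(1-p)rD=(1-2p)rD\le(r-1)D$. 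Hence reporting $D$ is optimal exactly for $r\ge\tfrac1{2p}$, i.e.\ $r_t^{(D)}=\tfrac1{2p}$ for all $t\ge2$.

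The main obstacle is the bookkeeping around the $\min$: the clean linear formula for $\delta(m)$ is valid only on a range of $r$ that a priori need not contain the candidate threshold, so each case requires separately certifying that the regime assumption actually holds at the value we compute — via the monotonicity of $t\mapsto\tfrac{1-(1-p)^t}{2p-p(1-p)^{t-1}}$ when $p\le\tfrac12$, and via the two‑sided induction when $p>\tfrac12$. Once that is in place the rest is routine algebra, and uniqueness of the threshold (Lemma~\ref{lem:critical-thresholds}, together with the trivial endpoint behavior: at $r=0$ lying is costless while as $r\to\infty$ the immediate penalty $rD$ dominates) confirms that the located value is $r_t^{(D)}$.
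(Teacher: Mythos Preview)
Your argument is correct and takes a genuinely different route from the paper's main proof. The paper (Appendix~\ref{app:thm2}) argues by a \emph{segment comparison}: with history $D$ it compares the expected cost of ``honest-till-end'' versus ``lying-till-busted'' over the random segment until the next bust, writing each as a closed-form expectation and equating them; for $p>\tfrac12$ it additionally accounts for the last-round deviation to get $\tfrac{1}{2p}$. You instead isolate the single quantity $\delta(m)=OptCost(m,r,D)-OptCost(m,r,0)$, derive the clean recursion $\delta(m)=-prD+(1-p)\min\{D+\delta(m-1),\,rD\}$, and read the threshold off from the indifference $\delta(t-1)=(r-1)D$. This is closer in spirit to the paper's alternative backward-induction proof in Appendix~\ref{app:recur}, but considerably more economical since you track only the difference rather than full $OptCost$ expressions. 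The segment-comparison approach is more intuitive and transparently ties the threshold to the three basic strategies, whereas your recursion approach is more mechanical and self-contained, needing no prior structural lemma beyond Lemma~\ref{lem:single-sub-opt}. One small presentational point: in the $p\le\tfrac12$ self-consistency check, the two observations you list (that $\delta$ is nondecreasing when $r\le\tfrac{1-p}{p}$, and that $h$ is nondecreasing in $t$) are not independent steps---the clean way is to verify admissibility inductively using monotonicity of $h$ (so $r=h(t)\ge h(j+1)$ for $j\le t-1$), after which the linear formula and hence monotonicity of $\delta$ follow as consequences; you may want to make that order explicit.
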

\proof{}
 See Appendix~\ref{app:thm2}. 
\endproof
%


The optimal strategy is visualized in Figs.~\ref{fig:thres-low-p} and \ref{fig:thres-high-p} for $p=0.3$ and $p=0.7$, respectively. 
The $x$-axis is the number of rounds left ($t$), and the $y$-axis is the penalty thresholds for truthfulness.
We give examples of penalties via the red dashed lines.
For the first round, the player refers to the blue dot representing $r_T^{(\emptyset)}$ and is truthful if and only if the penalty is above the blue dot. 
Afterwards, given $t$ rounds left and history $D$, the player looks at the green curve representing $r_t^{(D)}$ and is only truthful if the penalty is above the curve.
If the history is $0$, she remains untruthful and reports $0$.  Figs.~\ref{fig:thres-low-p} and \ref{fig:thres-high-p} visualize the optimal strategies given in Table~\ref{tab:ber:opt-str}.
Both green curves are closely related to $\tfrac{1}{2p}$. 
An intuition is that in any round $t<T$, a player pays $D$ if she is truthful and roughly $2prD$ if she lies, where the penalty payment $rD$ comes from the previous  and the next round, each with probability $p$.
The penalty that sets these two costs equal is $\tfrac{1}{2p}$. 
The actual $r_t^{(D)}$ thresholds vary upon values of $t$ and $p$.

\begin{figure}[ht!]
    \centering
    \includegraphics[width=0.6\textwidth]{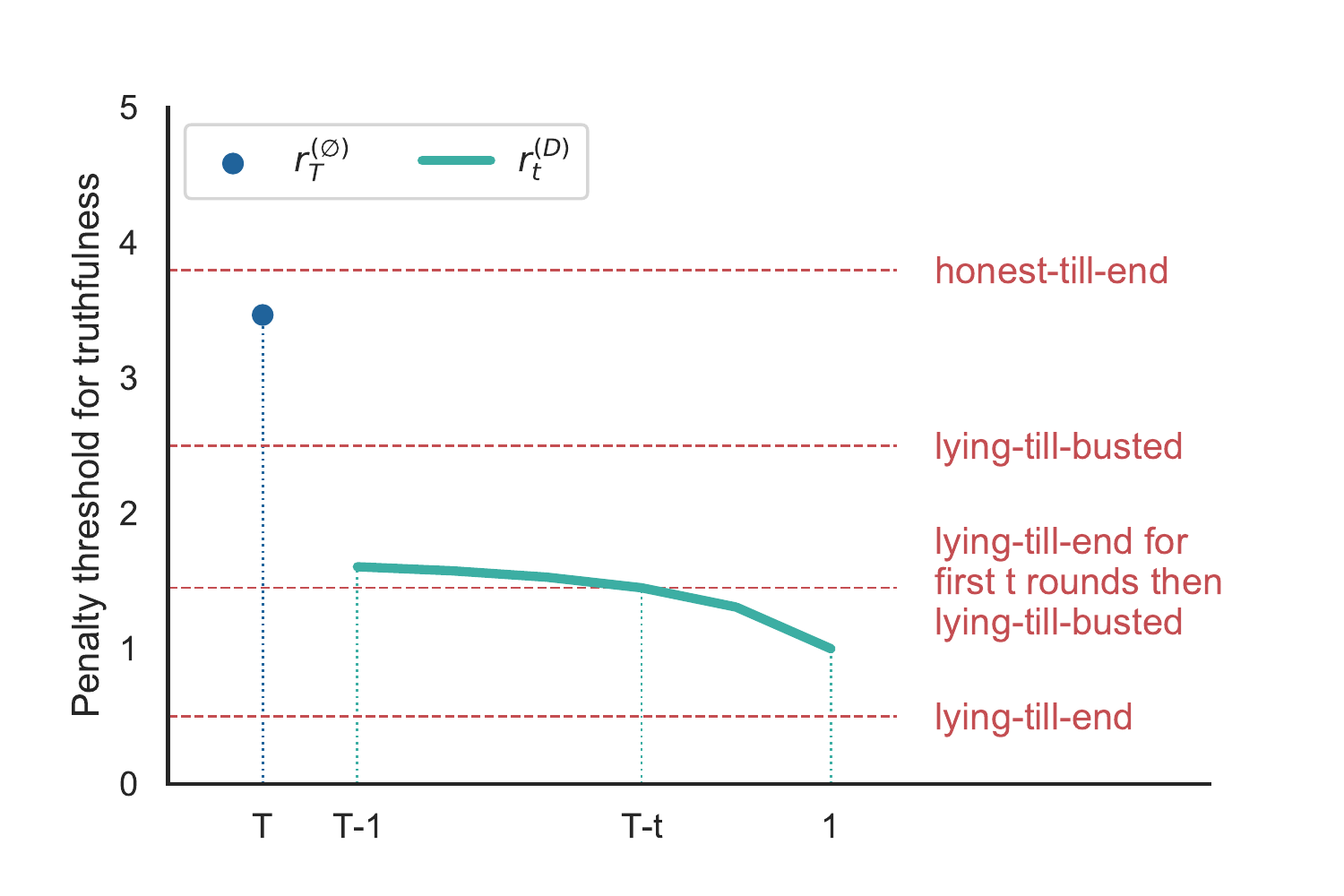}
    \caption{Critical thresholds for each round under $Ber(0.3)$ distribution 
    with examples of  optimal strategies.}
    \label{fig:thres-low-p}
\end{figure}

\begin{figure}[ht!]
    \centering
    \includegraphics[width=0.6\textwidth]{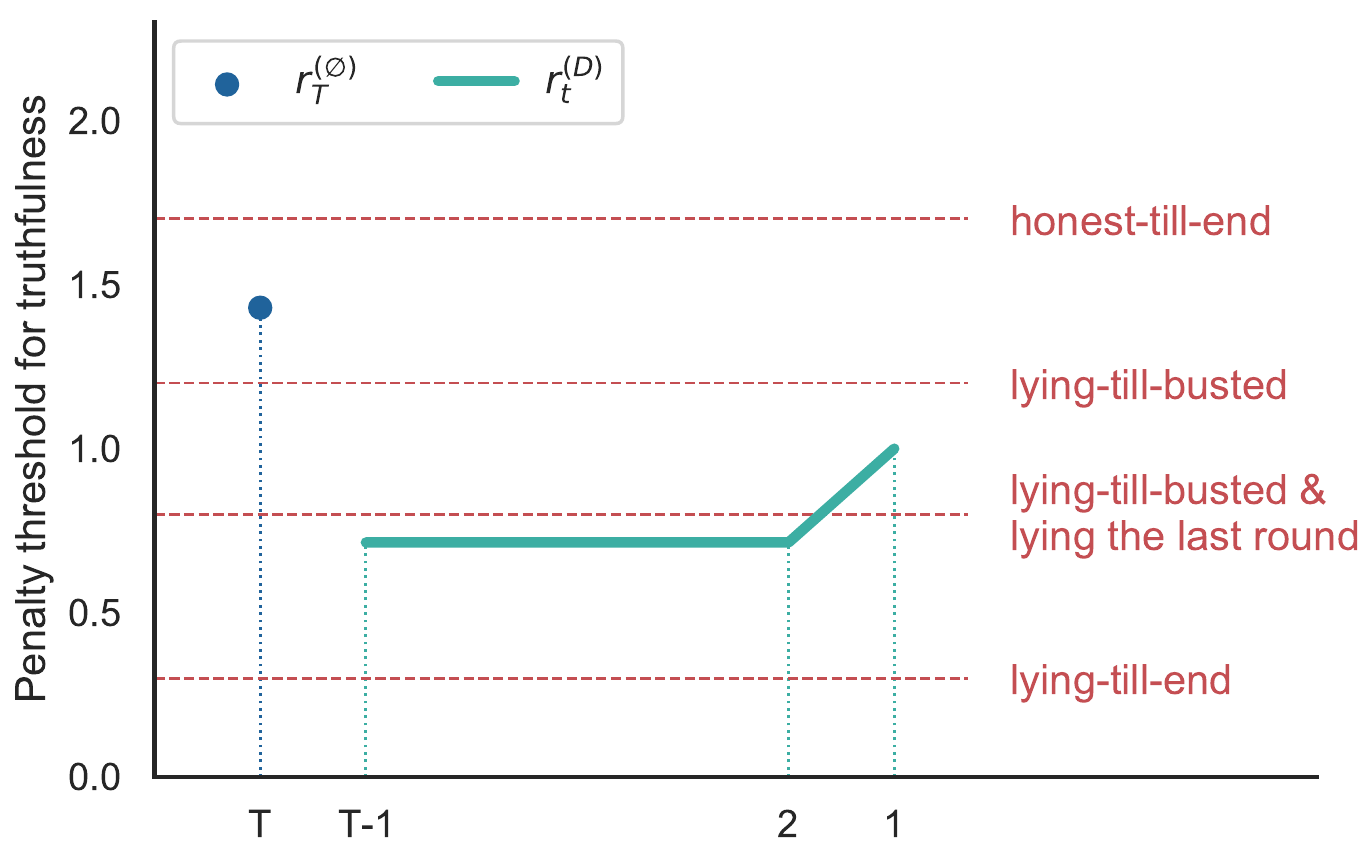}
    \caption{Critical thresholds for each round under $Ber(0.7)$ distribution
   with examples of  optimal strategies.}
    \label{fig:thres-high-p}
\end{figure}


\section{A Reduction for Arbitrary Distributions}\label{sec:single-general}

As discussed in the introduction, only the infinite penalty rate will guarantee complete truthfulness under arbitrary distributions,    
yet there is still hope to obtain approximate results. 
The trick is to redefine being busted  as having a partial signal that is less than $\alpha$ times the true consumption, for $\alpha \in [0,1]$.  
 Then any arbitrary distribution is reduced to $Ber(p)$ where $p$ is the probability that the partial signal is at least $\alpha D$. 

For approximate truthfulness, we define being $\alpha$-truthful as reporting at least $\alpha D$.  
We reuse the arguments of comparing basic strategies from Section~\ref{sec:single} to determine an upper bound for the penalty rate that guarantees $\alpha$-truthfulness.
We introduce the notion of approximate truthfulness in Definition~\ref{def:approx} and give the reduction in Theorem \ref{thm:single-approx}. 
We demonstrate the reduction with uniform distributions in Example~\ref{ex:unif}.

\begin{definition}[$\alpha$-truthfulness]\label{def:approx}
A reporting $\bb$ is $\alpha$-truthful when $b_t\ge \alpha D$ for all $t=1,\dots,T$. 
\end{definition}

\begin{theorem}\label{thm:single-approx}
Given $\alpha \in [0,1]$ and an arbitrary distribution with CDF $F$, 
if a penalty rate $r$ achieves complete truthfulness for $Ber(p)$ where $p=1-F(\alpha D)$,  
then the same $r$ achieves $\alpha$-approximate truthfulness for distribution $F$.\footnote{The reduction
 depends on the 
players' gross consumption, which is private information. 
In reality, if the mechanism has some information on upper bounds of $D$, we are still able to set a penalty rate (which may not be minimum) to obtain truthfulness.}
\end{theorem}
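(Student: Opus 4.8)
The plan is to replay the basic-strategy analysis of Section~\ref{sec:single} with ``busted'' redefined as the event $y_t\ge\alpha D$, which by definition occurs with probability $p=1-F(\alpha D)$, independently across rounds. I would introduce the $\alpha$-analogs of the three basic strategies: \emph{$\alpha$-honest-till-end}, report $\max(y_t,\alpha D)$ every round (this always lies in $[\alpha D,D]$, hence is $\alpha$-truthful); \emph{$\alpha$-lying-till-busted}, report $y_t$ until the first round with $y_t\ge\alpha D$, then report $\max(y_t,\alpha D)$ thereafter; and \emph{$\alpha$-lying-till-end}, report $y_t$ whenever $y_t<\alpha D$. The target is to show that as soon as $r$ is at least the $Ber(p)$ threshold $r_T^{(\emptyset)}=\tfrac{1-(1-p)^T}{p-p(1-p)^{T-1}}$ from Theorem~\ref{thm:ber-p-no-hist}, some $\alpha$-truthful strategy is optimal for the player; the tie-breaking rule then forces her to play $\alpha$-truthfully.

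First I would re-derive the $F$-game versions of Lemmas~\ref{lem:single-sub-opt}--\ref{lem:single-decr-thres}. The substance is the analog of Lemma~\ref{lem:single-sub-opt}: in a round with $y_t<\alpha D$ it is weakly suboptimal to report a value strictly between $y_t$ and $\alpha D$ (it is dominated by reporting $y_t$), and if the previous report was already below $\alpha D$ the player stays below $\alpha D$. The same exchange arguments used in Section~\ref{sec:single} apply; the only new feature is that the non-busted signal now ranges over the interval $[0,\alpha D)$ instead of being the single value $0$, and the history ranges over a continuum. One checks that raising an $\alpha$-truthful history only makes being $\alpha$-truthful more attractive, so the worst history is $\alpha D$; consequently the per-$(t,\text{history})$ critical threshold is largest at round $T$ with no history, exactly as in Lemmas~\ref{fact:single-ber-comp}--\ref{lem:single-decr-thres}. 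This reduces optimal play to one of the three $\alpha$-basic strategies.

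The crux, and the reason \emph{the same} $r$ works, is scale-invariance. Whenever the player under-reports by an amount $\delta\le\alpha D$ in a non-busted round (reporting $\alpha D-\delta$ rather than an $\alpha$-truthful $\alpha D$), she saves exactly $\delta$ in regular payment that round; but at the next round in which she is busted the forced report is $\ge\alpha D$, so the penalty against the carried-down report is at least $r\delta$. The ratio of penalty risk to saving is therefore $r$, independent of $\delta$, exactly as in $Ber(p)$ where every under-report has size $D$. Moreover a busted report in the $F$-game may exceed $\alpha D$ (up to $D$), so if anything the penalty from lying is larger here. Hence the comparison of $\alpha$-lying-till-busted against $\alpha$-honest-till-end reduces term by term to the comparison in the proof of Theorem~\ref{thm:ber-p-no-hist}, with $D$ replaced by the relevant under-report amounts, and its indifference point $r_T^{(\emptyset)}=\tfrac{1-(1-p)^T}{p-p(1-p)^{T-1}}$ carries no dependence on the value at stake; so $r\ge r_T^{(\emptyset)}$ makes $\alpha$-honest-till-end weakly optimal in every round and for every history, giving $\alpha$-truthfulness for the whole game.

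I expect the main obstacle to be making the strategy-space reduction fully rigorous over the continuum: one must rule out that matching an intermediate history $b_{t+1}\in(y_t,\alpha D)$ ever makes an interior report strictly optimal, and one must control ``within-lying'' report fluctuations (a player who lies may report different values in different non-busted rounds), a phenomenon that does not arise in the Bernoulli case where lying means reporting $0$ throughout. Once optimal play has been reduced to the three $\alpha$-basic strategies, the scale-invariance observation makes the final inequality essentially a restatement of Theorem~\ref{thm:ber-p-no-hist}.
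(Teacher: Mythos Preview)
Your proposal is correct in spirit and rests on exactly the same two ideas as the paper: redefine ``busted'' as the event $y_t\ge\alpha D$ (probability $p=1-F(\alpha D)$), and observe that the comparison between lying and $\alpha$-honesty is scale-invariant in the amount of under-report, so the Bernoulli threshold $r_T^{(\emptyset)}$ carries over unchanged.

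Where you diverge from the paper is in the amount of structure you try to rebuild. You propose to re-derive the analogs of Lemmas~\ref{lem:single-sub-opt}--\ref{lem:single-decr-thres} over the continuum and reduce optimal play to three $\alpha$-basic strategies before doing the final comparison. The paper does \emph{not} attempt this reduction. Instead it takes an arbitrary (not $\alpha$-truthful) strategy, lets $\beta D$ be its minimum report in the segment up to the first bust, and directly bounds
\[
\bbE[\text{savings}]\le(\alpha-\beta)D\cdot\bbE[\text{\# days before busted}],\qquad
\bbE[\text{penalty}]\ge r(\alpha-\beta)D\cdot\Pr(\text{busted}),
\]
after which $(\alpha-\beta)$ cancels and the $Ber(p)$ calculation from Theorem~\ref{thm:ber-p-no-hist} applies verbatim. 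This shortcut completely sidesteps the obstacle you yourself flag, namely ruling out interior reports and within-lying fluctuations over a continuum of histories; those issues simply never arise, because the argument works for \emph{any} strategy, not just basic ones. Your route would ultimately succeed, but the paper's direct savings-versus-penalty bound is both shorter and avoids the part you correctly identified as the hardest to make rigorous.
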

\proof{}
Recall that being ``busted'' means the player has a $D$ realization.
For general distributions, given $\alpha\in[0,1]$, we redefine being busted as having a realization at least $\alpha D$.
Then the probability of being busted is $1-F(\alpha D)$. 
The proof is essentially the same as that of Theorem~\ref{thm:ber-p-no-hist} for $p=1-F(\alpha D)$. We analyze the segment between the first day and the  day when the player is busted.
Assume the minimum report from the player during the segment is $\beta D$, $\beta <\alpha$. 
We  compare the savings the player gets from using this strategy versus reporting $\alpha D$ and the corresponding additional penalty that she needs to pay. 
\begin{equation*}
\begin{array}{lll}
        &\bbE[\text{savings}] \le (\alpha-\beta)D \cdot\bbE[\text{\# days before busted}]\\[4pt]
        &\bbE[\text{penalty}] \ge  rD(\alpha-\beta)\cdot\text{Pr(busted)} 
\end{array}   
\end{equation*}
The player will report $\alpha D$ every round in the segment when expected penalty exceeds expected savings. 
The $(\alpha-\beta)$ term is canceled and the rest calculation is the same as the $Ber(p)$ case where $p=1-F(\alpha D)$.  
\endproof

\begin{example}\label{ex:unif}
Assume partial signals follow a uniform distribution $U(0,D)$.
Let $r$ be the truthful threshold of $Ber(p)$ where $p=1-\alpha$, i.e. $r=\tfrac{1-\alpha^T}{(1-\alpha)(1-\alpha^{T-1})}$. 
Then using $r$ ensures $\alpha$-truthfulness for $U(0,D)$ by Theorem~\ref{thm:single-approx}. 
For uniform distributions, it is impossible to obtain complete truthfulness unless $r=\infty$, 
which can be verified by setting $\alpha = 1$. 
\end{example}

\section{Extension: A Cost Sharing Model} 
\label{sec:multi}

We extend the problem to the multi-player setting and
 focus on the cost sharing among homogeneous players. 
Let $N$ be the set of players with $n = |N| \ge 1$. 
Each player $i\in N$ has a private value $x^{i} \ge 0$, and we assume all players are symmetric, i.e., $x^{i} = D$ for all $i \in N$ (see Appendix~\ref{app:const} for a relaxation).
All players in $N$ split an overhead cost $C$, which is at least the total gross consumption, i.e., $C \ge n D$. 
The game has $T$ rounds in total. 
Given penalty rate $r$, we analyze the following multi-player flux mechanism.




\begin{itemize}
    \item The center observes a partial signal representing player $i$'s net consumption $y_t^{i} \sim F$ for each player $i\in N$;
    \item Each player $i$ submits their reported gross consumption that is at least their net consumption, $b_t^{i} \ge y_t^{i}$;
    
    \item If $t< T$, player $i$'s pays regular payment $C\cdot \tfrac{b_t^{i}}{\sum_j b_t^{j}}$ and penalty payment $r \cdot| b_{t+1}^{i}-b_{t}^{i}|$.
    If $t = T$, the players only pay regular payments.
\end{itemize}

We call 
$b^{i}_{t+1}$ the {\em history} for player $i$ in round $t$ and $\bb_{t+1}$ the {\em group history}. 
If everyone lies in a round, the overhead cost is split evenly among all players.
A mechanism is {\em truthful} if every player reports $D$ for every round.
%
%
We are interested in computing the minimum penalty rates such that truthful reports form a Nash equilibrium (NE) or a dominant strategy equilibrium (DSE).
Informally, a strategy profile is a NE if no player wants to unilaterally deviate, and it is a DSE if no player wants to deviate no matter what the other players do.
We show that 
approximate results for any arbitrary distribution 
can be deducted from an exact analysis for a Bernoulli distribution.
Due to space limit, we  defer all proofs 
to Appendix~\ref{app:multi}.




Fix an arbitrary player $i$ and the other players' strategy $\bsigma^{-i}$.
Let $\bsigma^{-i}_t(\bb_{t+1}, \by_{t}^{-i})$ denote the reported gross consumption by players $j \ne i$ with group history $\bb_{t+1}$ and realizations $\by_t^{-i}$ in round $t$.
A strategy $\bsigma_i$ is called $\bsigma^{-i}$'s {\em best response} if it is the solution of the following recursion. 
\begin{align*}
    OptCost(t, r, \bsigma^{-i}, \bb_{t+1})= \min_{b_t^{i}} ExpCost(t,r, \bsigma^{-i}, \bb_{t+1},b_t^{i}),
\end{align*}

\noindent where the expected cost can be expanded as
\begin{align*} 
    &ExpCost(t,r, \bsigma^{-i}, \bb_{t+1},b_t^{i}) = \\
    &\bE_{\by_t} \Bigg[ \frac{C\cdot \max\{y_t^{i}, b_t^{i}\}}{\sum_{j \ne i}\sigma_t^{j}(\bb_{t+1}, \by_t^{-j}) + \max\{y_t^{i}, b_t^{i}\}} + r \cdot \mid\max\{y_t^{i}, b_t^{i}\} - b_{t+1}^{i}\mid\\
    &\qquad\>\>  + OptCost\left(t-1, r, \bsigma^{-i},(\bm{\sigma}^{-i}(\bm{b}_{t+1},\by_t^{-i}),\max\{y_t^{i}, b_t^{i}\})\right)\Bigg].
\end{align*}

\noindent For the first round when $t=T$, the player would like to minimize the total expected cost, i.e., 
$$OptCost(T,r,\bsigma^{-i}) = \min_{b_T^{i}} ExpCost(T,r, \bsigma^{-i}, b_T^{i}).$$

\noindent Given a strategy profile $\bsigma$,  
if $\bsigma^i$ is a best response to $\bsigma^{-i}$ for every player $i$, then $\bsigma$ is called a Nash equilibrium.
If $\bsigma_i$ is a best response to any $\bsigma'^{-i}$ (not necessarily $\bsigma^{-i}$) for any player $i$, 
$\bsigma_i$ is then called a  dominant strategy equilibrium.

Similar to the single player setting, we avoid solving the recursion by exploiting the properties of the mechanism.
Again, we start our analysis with $F$ being a Bernoulli distribution and provide a reduction for approximate truthfulness when $F$ is an arbitrary distribution. 
In the single player model with Bernoulli-distributed $F$, we have shown that it is only optimal for a player to report $0$ or her actual consumption $D$.
We claim it is the same case for multiple players.
Moreover, if a player lied yesterday and also has an observed consumption of 0 today, they will report 0 regardless of other player's actions.

\begin{lemma}\label{lem:multi-sub-opt}
For Bernoulli-distributed $F$, reporting anything strictly between 0 and D is sub-optimal  in a multi-player flux mechanism.
Moreover, if $b_{t+1}^{i}=y_t^{i}=0$, it is optimal to report $b_t^{i}=0$.
\end{lemma}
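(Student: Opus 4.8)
The plan is to mirror the structure of the single-player proof of Lemma~\ref{lem:single-sub-opt}, reducing each claim to an analysis of a single round's marginal trade-off, and then arguing that the multi-player cost-sharing term behaves monotonically enough that the single-player argument goes through essentially verbatim. Fix player $i$, fix the other players' strategies $\bsigma^{-i}$, and consider a round $t \le T$ in which player $i$ has realization $y_t^{i}=0$ (if $y_t^{i}=D$ she is forced to report $D$, so nothing to prove). The key observation is that conditioned on everything player $i$ cannot control, namely the group history $\bb_{t+1}$ and the opponents' realizations $\by_t^{-i}$, the total report of the others $S_{-i} := \sum_{j\ne i}\sigma_t^{j}(\bb_{t+1},\by_t^{-j})$ is a fixed nonnegative quantity. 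Player $i$'s round-$t$ cost as a function of her report $b \in [0,D]$ is then
\begin{align*}
    g(b) = \frac{C\,b}{S_{-i}+b} + r\,|b - b_{t+1}^{i}| + OptCost\bigl(t-1,r,\bsigma^{-i},(\bsigma^{-i}(\bb_{t+1},\by_t^{-i}),b)\bigr),
\end{align*}
and I want to show $g$ is minimized at an endpoint $b\in\{0,D\}$, after taking expectation over $\by_t^{-i}$ (which is legitimate since the minimizing endpoint choice will turn out to be the same for every realization).

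The first step is to show each of the three summands is concave in $b$ on $[0,D]$, so that their sum is concave and hence minimized at an endpoint. The term $Cb/(S_{-i}+b) = C - CS_{-i}/(S_{-i}+b)$ is concave in $b$ (its second derivative is $-2CS_{-i}/(S_{-i}+b)^3 \le 0$). The penalty term $r|b-b_{t+1}^{i}|$ is convex, not concave, so this is where I must be more careful: I would split into the two cases $b \le b_{t+1}^{i}$ and $b \ge b_{t+1}^{i}$, on each of which the penalty is linear (hence both concave and convex), and argue the minimum over each sub-interval is at one of its endpoints $\{0, b_{t+1}^{i}\}$ or $\{b_{t+1}^{i}, D\}$; then rule out the interior point $b=b_{t+1}^{i}$ by showing it is dominated — this needs that $b_{t+1}^{i}\in\{0,D\}$ inductively, which holds by applying the first sentence of the lemma to round $t+1$. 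For the $OptCost$ term I would induct on $t$: by the inductive hypothesis the optimal continuation from round $t-1$ uses only reports in $\{0,D\}$, and $OptCost(t-1,r,\bsigma^{-i},(\cdot,b))$ as a function of the new history component $b$ should be shown concave (or at least, minimized-over-$b$-at-an-endpoint when combined with the current-round terms) — this is the analogue of the monotonicity/concavity bookkeeping in the single-player proof and I expect it to carry over because raising $b$ only ever raises future penalties linearly and lowers the cost-share term concavely.

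For the ``moreover'' part, suppose $b_{t+1}^{i}=0$ and $y_t^{i}=0$. Comparing $b=0$ against $b=D$ (the only candidates by the first part): reporting $0$ gives current cost $0 + 0 + OptCost(t-1,r,\bsigma^{-i},(\cdot,0))$, while reporting $D$ gives $\tfrac{CD}{S_{-i}+D} + rD + OptCost(t-1,r,\bsigma^{-i},(\cdot,D))$. I would argue termwise that $b=0$ is weakly better: the cost-share and penalty terms are strictly larger for $b=D$, and for the continuation term I need $OptCost(t-1,r,\bsigma^{-i},(\cdot,0)) \le OptCost(t-1,r,\bsigma^{-i},(\cdot,D))$, i.e.\ that entering a round with a lower own-history is never worse — intuitively true because a lower history can only reduce the penalty incurred next round whatever one then reports, and one always has the option to mimic the high-history strategy. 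This monotonicity-in-history claim is the main obstacle: unlike the single-player case, the continuation cost also depends on the opponents' reports, which may themselves react to $\bb_{t+1}$; however, since we have fixed $\bsigma^{-i}$ and are only comparing two values of player $i$'s own history coordinate with the opponents' strategies held fixed, the opponents' future reports in the two scenarios are identical, so the comparison reduces to player $i$'s own penalty stream and the argument closes by the same exchange/coupling argument as in Lemma~\ref{lem:single-sub-opt}. I would state this history-monotonicity as an auxiliary claim, prove it by backward induction alongside the main statement, and then the two parts of the lemma follow.
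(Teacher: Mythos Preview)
Your treatment of the first claim matches the paper's: it too argues that the round-$t$ payment is concave in $b_t^{i}$ (the share term $Cb/(S_{-i}+b)$ has negative second derivative, and once $b_{t+1}^{i}\in\{0,D\}$ inductively the penalty term is linear on $[0,D]$), so the minimum sits at an endpoint. You go further and try to push concavity through the continuation $OptCost(t-1,\cdot)$, which the paper does not address; be aware that this continuation depends on $b$ also through the opponents' reactions $\bsigma^{-i}((\bb_t^{-i},b),\cdot)$, which can be arbitrary functions of $b$, so your ``I expect it to carry over'' is doing real unfinished work.

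For the ``moreover'' part your approach diverges from the paper's and contains a genuine gap. You reduce the comparison $b_t^{i}=0$ versus $b_t^{i}=D$ to a history-monotonicity claim $OptCost(t-1,r,\bsigma^{-i},(\bb_t^{-i},0))\le OptCost(t-1,r,\bsigma^{-i},(\bb_t^{-i},D))$, and justify it by asserting that ``the opponents' future reports in the two scenarios are identical.'' That assertion is false: $\bsigma^{-i}$ is a collection of \emph{functions of the full group history}, which includes player~$i$'s coordinate, so changing $b_t^{i}$ from $D$ to $0$ changes the opponents' inputs and hence, in general, their realized reports from round $t-1$ onward. Your mimic-the-high-history coupling therefore does not control the cost-share terms across the two branches, and the inequality you need can fail against a fixed $\bsigma^{-i}$ that reacts favorably to $b_t^{i}=D$ and unfavorably to $b_t^{i}=0$. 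The paper does not attempt to compare optimal continuations at all: it takes the \emph{last} round $t$ at which the putative optimum has $(b_{t+1}^{i},y_t^{i})=(0,0)$ but $b_t^{i}>0$, and compares only against the one-step deviation ``report $0$ in round $t$, then play the same $b_{t-1}^{i}$ in round $t-1$.'' The round-$(t-1)$ share and the $OptCost(t-2,\cdot)$ term are carried identically on both sides, and what remains is the elementary pointwise inequality $\bigl(r+\tfrac{C}{\sum_j b_t^{j}}\bigr)b_t^{i} + r\lvert x-b_t^{i}\rvert > r x$. This last-deviation device, not a continuation-monotonicity lemma, is the missing idea.
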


Starting from this point, we assume that every player reports either $0$ or $D$.
When $n=2$, we show that the multi-player model reduces to the single player model with a multiplicative factor of $\frac{C}{2D}$.
The reason of the reduction is that the savings of switching to lying from being truthful for a player is always $\frac{C}{2}$, regardless of what the other player does.

\begin{lemma}\label{lem:multi_n=2}
When $n=2$, the multi-player model reduces to single player model.
The truthful penalty threshold is $\frac{C}{2D}$ times (\ref{eq:ber-p-no-hist}). 
\end{lemma}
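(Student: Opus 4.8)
The plan is to reduce player $i$'s best-response problem to the single-player problem of Section~\ref{sec:single} and then quote Theorem~\ref{thm:ber-p-no-hist}. By Lemma~\ref{lem:multi-sub-opt} and the standing assumption following it, every report may be taken in $\{0,D\}$, with player $i$'s report forced to $D$ exactly when $y_t^i=D$, an event of probability $p$ that is i.i.d.\ across rounds --- the same busting dynamics as in the single-player model.

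The only genuinely new ingredient is a four-case check: holding the other player's report $b_t^j\in\{0,D\}$ fixed, player $i$'s regular payment decreases by exactly $\tfrac C2$ when she changes her own report from $D$ to $0$. If $b_t^j=D$, reporting $D$ costs $C\cdot\tfrac{D}{2D}=\tfrac C2$ and reporting $0$ costs $0$; if $b_t^j=0$, reporting $D$ costs $C\cdot\tfrac{D}{D}=C$ and reporting $0$ costs $\tfrac C2$ by the even-split convention. Consequently, against an opponent who plays \emph{honest-till-end}, player $i$'s regular payment in round $t$ is $\tfrac C2$ when $b_t^i=D$ and $0$ when $b_t^i=0$. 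Writing $Z^i$ for the number of rounds in which player $i$ reports $D$ and $S^i$ for the number of consecutive rounds in which her report changes, her total expected cost against a truthful opponent is $\tfrac C2 Z^i + rD\,S^i$, whose minimizer is the same as that of $Z^i+\tfrac{2rD}{C}S^i$. The single-player cost of Section~\ref{sec:single}, under the same busting dynamics and the same strategy space, is $D(Z+rS)$, whose minimizer is that of $Z+rS$. Hence player $i$'s problem against a truthful opponent is literally the single-player problem with the penalty rate $r$ replaced by $\tfrac{2rD}{C}$; by Theorem~\ref{thm:ber-p-no-hist} its optimum is truthful iff $\tfrac{2rD}{C}\ge r_T^{(\emptyset)}=\tfrac{1-(1-p)^T}{p-p(1-p)^{T-1}}$, i.e.\ iff $r\ge\tfrac{C}{2D}\,r_T^{(\emptyset)}$. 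By symmetry this is exactly the threshold above which the all-truthful profile is a Nash equilibrium; and since the $\tfrac C2$ savings does not depend on $b_t^j$ at all, player $i$'s optimal policy is unaffected by what $j$ does, so the same $r$ is also the dominant-strategy threshold (matching Main Result~4 at $n=2$).

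The step I expect to carry the weight is the rescaling: one has to be explicit that the single-player analysis of Section~\ref{sec:single} applies \emph{verbatim} under the substitution $r\mapsto\tfrac{2rD}{C}$ --- the busting event and its probability $p$, the forced report $D$ upon busting, the single-round history structure, and the tie-breaking-toward-truthfulness at the threshold must all be seen to correspond --- so that no part of the Bernoulli analysis is re-derived; everything else is the elementary arithmetic above and a citation of Theorem~\ref{thm:ber-p-no-hist}.
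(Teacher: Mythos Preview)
Your proposal is correct and takes essentially the same approach as the paper: both hinge on the four-case observation that switching from $D$ to $0$ saves exactly $C/2$ in regular payment regardless of the other player's report, and then invoke the single-player analysis with the factor $C/(2D)$. Your version is more explicit about the rescaling $r\mapsto 2rD/C$ and about why the threshold serves for both Nash and dominant-strategy equilibrium, whereas the paper's proof is a terse payoff table plus one sentence of analogy.
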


For general $n$, we show it is sufficient to analyze the maximum difference between lying and truth-telling for player $i$ in round $t$ given group history $\bb_{t+1}$.
In a DSE, a player achieves the biggest gain from lying if all players were lying in the previous round. 
We then use $\bb_{t+1}=\bzero$ to compare lying and truth-telling for a player. 

\begin{theorem}\label{thm:multi-ber-dom}
For the $Ber(p)$ distribution, truthful strategy profile forms a dominant strategy equilibrium if and only if 
\begin{equation}\label{eq:multi-ber-dom-main}
    r\ge \frac{C}{nD}\frac{1-(1-p)^{n-1}}{p}\frac{1-(1-p)^T}{p-p(1-p)^{T-1}}.
\end{equation}
\end{theorem}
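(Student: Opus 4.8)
The plan is to reduce the $n$-player cost-sharing comparison to the single-player comparison of Theorem~\ref{thm:ber-p-no-hist} by replacing the single-player per-round ``saving from under-reporting,'' which was $D$, with the right per-round expected saving $S$ for a player inside the pool; the penalty term stays $rD$ because player $i$'s penalty depends only on her own two consecutive reports. First I would invoke Lemma~\ref{lem:multi-sub-opt} to restrict to report profiles valued in $\{0,D\}$: a report above $D$ is strictly dominated for $i$, and an opponent reporting above $D$ only shrinks $i$'s share, so it is never part of a worst case for $i$. Fix player $i$ and an opponent profile $\bsigma^{-i}$, and in a round with $y_t^i=0$ let $K_t$ be the (random) number of opponents reporting $D$. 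Then a truthful report costs $\tfrac{C}{K_t+1}$ in regular payment, while a report of $0$ costs $0$ when $K_t\ge 1$ and $\tfrac{C}{n}$ when $K_t=0$; so the round saving from under-reporting is $g(K_t)$, where $g(0)=\tfrac{(n-1)C}{n}$ and $g(k)=\tfrac{C}{k+1}$ for $k\ge 1$.

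The first key step is to pin down the worst-case opponent profile. I claim $g$ is non-increasing on $\{0,1,\dots,n-1\}$: $g(0)=\tfrac{(n-1)C}{n}\ge \tfrac{C}{2}=g(1)$ precisely because $n\ge 2$, and $g(k)=\tfrac{C}{k+1}$ is decreasing for $k\ge 1$. Moreover $K_t\ge B_t$ pointwise, where $B_t\sim\mathrm{Bin}(n-1,p)$ counts the busted opponents, since every opponent is independently forced to report $D$ with probability $p$ regardless of $i$'s play; hence $\bE[g(K_t)]\le \bE[g(B_t)]=:S$, with equality when every opponent plays lying-till-end. Using $\binom{n-1}{k}\tfrac{1}{k+1}=\tfrac{1}{n}\binom{n}{k+1}$ to evaluate $\bE\!\left[\tfrac{C}{B_t+1}\right]=\tfrac{C}{np}\bigl(1-(1-p)^n\bigr)$ and subtracting the under-report payment $\tfrac{C}{n}\Pr[B_t=0]=\tfrac{C}{n}(1-p)^{n-1}$,
\[
S=\frac{C}{np}\bigl(1-(1-p)^n\bigr)-\frac{C}{n}(1-p)^{n-1}=\frac{C}{n}\cdot\frac{1-(1-p)^{n-1}}{p}.
\]
So, against any $\bsigma^{-i}$, player $i$'s per-round expected gain from under-reporting (conditioned on having the option, $y_t^i=0$) is at most $S$, attained by the all-lying-till-end profile.

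With $S$ in hand I would carry out both directions. For ($\Leftarrow$), fix any $\bsigma^{-i}$; reusing Lemma~\ref{lem:multi-sub-opt} and the multi-player analogues of Lemmas~\ref{lem:single-sub-opt}--\ref{lem:single-decr-thres}, player $i$'s optimal response is a concatenation of basic strategies and the worst candidate deviation from truthfulness is lying-till-busted. Comparing honest-till-end with lying-till-busted exactly as in Theorem~\ref{thm:ber-p-no-hist}, but with the per-round regular-payment surplus of honesty now at most $S$ rather than $D$ and the avoided penalty still $rD\bigl(1-(1-p)^{T-1}\bigr)$, honest-till-end is at least as cheap once $r\ge \tfrac{S}{D}\cdot\tfrac{1-(1-p)^T}{p(1-(1-p)^{T-1})}$, which equals the right-hand side of (\ref{eq:multi-ber-dom-main}); breaking ties toward truthfulness makes it the unique best response for every $\bsigma^{-i}$, hence a DSE. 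For ($\Rightarrow$), take $\bsigma^{-i}$ to be all opponents lying-till-end, so the per-round saving is exactly $S$; if $r$ is strictly below the bound, the same comparison shows lying-till-busted is strictly cheaper than honest-till-end for player $i$, so truthfulness is not a best response and cannot be a DSE. Equating the two expected quantities at the boundary yields $r=\tfrac{S}{D}\,r_T^{(\emptyset)}=\tfrac{C}{nD}\cdot\tfrac{1-(1-p)^{n-1}}{p}\cdot\tfrac{1-(1-p)^T}{p-p(1-p)^{T-1}}$.

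The main obstacle I expect is doing the worst-case-opponent step rigorously: one must dominate \emph{every} opponent profile, including history-dependent and correlated ones, not just the i.i.d.\ lying-till-end benchmark, and simultaneously argue that the structure of $i$'s best response (so that ``lying-till-busted vs.\ honest-till-end'' remains the only relevant comparison) is unaffected by how the opponents play. The regular-payment side is clean because $g$ turns out to be monotone, $K_t\ge B_t$ holds pointwise, and $i$'s penalty is self-contained; the delicate part is transplanting the single-player ``no-oscillation'' and critical-threshold lemmas into the cost-sharing model, where a deviation reshuffles $i$'s share in all subsequent rounds as well.
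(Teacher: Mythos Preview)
Your proposal is correct and lands on the same threshold as the paper, but the route differs in a useful way. The paper builds a recursive machinery: it defines $\Delta EC_t(\bb_{t+1}^{-i})$, proves by induction (its Lemma on $\Delta EC_t(\bzero)$) that this difference is bounded by $M\sum_{i=1}^t(1-p)^i - prD\sum_{i=0}^{t-1}(1-p)^i$ with $M=\tfrac{C}{n}\tfrac{1-(1-p)^{n-1}}{p}$, and separately proves by another induction (its monotonicity lemma) that the all-zero history is the worst group history. You bypass both inductions with a single pointwise observation: $K_t\ge B_t$ realization-by-realization and $g$ is non-increasing on $\{0,\dots,n-1\}$, so the per-round saving is stochastically dominated by $g(B_t)$; the binomial identity $\binom{n-1}{k}\tfrac{1}{k+1}=\tfrac{1}{n}\binom{n}{k+1}$ then gives $S=M$ in one line. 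This is genuinely cleaner than the paper's two inductive lemmas and makes the reduction to Theorem~\ref{thm:ber-p-no-hist} transparent---the multiplicative factor $\tfrac{S}{D}$ is exactly $\tfrac{C}{nD}\tfrac{1-(1-p)^{n-1}}{p}$.

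The obstacle you flag in your last paragraph is real and is \emph{not} more rigorously handled in the paper: its monotonicity lemma explicitly assumes opponents behave according to their own history (truthful-history players stay truthful, zero-history players lie), so neither argument fully covers opponents whose reports condition on player $i$'s past reports. Your pointwise bound $g(K_t)\le g(B_t)$ controls the within-round saving for any simultaneous opponent report, but when comparing honest-till-end against lying-till-busted as full trajectories, the opponent profile $K_t$ can differ across the two branches because the observable history differs. Both your proof and the paper's implicitly restrict attention to the worst \emph{non-reactive} opponent (lying-till-end), which is where the ``if and only if'' is established. So your write-up is at the same level of rigor as the paper's on this point; just be aware that ``transplanting the single-player lemmas'' is not merely bookkeeping---it is exactly where the restriction to non-adaptive (or self-history-only) opponents is silently invoked.
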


If we slowly lower the penalty from (\ref{eq:multi-ber-dom-main}), we will hit a threshold such that truth-telling is an NE. 
The difference between the truthful NE and the DSE is that now we can assume that every player $j \ne i$ is truthful in the first round and show that player $i$ would not deviate unilaterally.
However, we shall not assume that player $j \ne i$ remains truthful for the rest of the game.
This is because if player $i$ lies in the first round, player $j$ can observe the report of $i$ in the second round and deviate from truthful behavior.
We first show that if $r\ge \frac{C}{nD}\frac{1}{p}$, players with truthful history stay truthful. 
Then we can safely assume player $j \ne i$ remains truthful throughout the game.
In this way, truthful NE is reduced to the case where there is one strategic player and $n-1$ truthful players.
It is not hard to see the threshold is precisely $\frac{C}{nD}\frac{1-(1-p)^T}{p-p(1-p)^{T-1}}$.

\begin{theorem}\label{thm:multi-exist}
For the $Ber(p)$ distribution, truthful strategy profile forms a Nash equilibrium if and only if 
\begin{equation}\label{eq:multi-exist-main}
    r\ge \frac{C}{nD}\frac{1-(1-p)^T}{p-p(1-p)^{T-1}}.
\end{equation}
\end{theorem}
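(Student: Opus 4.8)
The plan is to reduce the $n$-player Nash analysis to the single-player Theorem~\ref{thm:ber-p-no-hist}, extending the reduction behind Lemma~\ref{lem:multi_n=2} from $n=2$ to arbitrary $n$. Let $\bsigma$ be the truthful profile, in which every player reports $D$ in every round, and fix a player $i$. Against $\bsigma^{-i}$ the other players always report $D$, so they contribute exactly $(n-1)D$ to the cost-sharing denominator regardless of $i$'s report; and by Lemma~\ref{lem:multi-sub-opt} we may assume $b_t^i\in\{0,D\}$ in every round. Expanding $ExpCost$ for player $i$ against $\bsigma^{-i}$, her regular payment in round $t$ is then $\frac{C}{n}\cdot\frac{b_t^i}{D}$ and her penalty is $r\,|b_{t+1}^i-b_t^i|$, while being busted (probability $p$, independent across rounds, forcing $b_t^i=D$) behaves exactly as in the single-player model. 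Multiplying every term of player $i$'s expected cost by the positive constant $\frac{nD}{C}$ leaves her optimal strategy unchanged and turns her decision problem into precisely the single-player flux mechanism with penalty rate $r':=\frac{nD}{C}\,r$.

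Given this identification, the theorem follows from the ``iff'' in Theorem~\ref{thm:ber-p-no-hist} together with the tie-breaking convention: against $\bsigma^{-i}$, reporting $D$ in every round is optimal for player $i$ if and only if $r'\ge r_T^{(\emptyset)}=\frac{1-(1-p)^T}{p-p(1-p)^{T-1}}$, i.e.\ $r\ge \frac{C}{nD}\cdot\frac{1-(1-p)^T}{p-p(1-p)^{T-1}}$. When $r$ meets this bound, reporting $D$ is a best response for every player, so $\bsigma$ is a Nash equilibrium; when $r$ is strictly below it, $r'<r_T^{(\emptyset)}$ and the single-player analysis (lying-till-busted strictly beats honest-till-end in the first round) gives player $i$ a strictly profitable unilateral deviation, so $\bsigma$ is not a Nash equilibrium.

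The step that needs care, and that I expect to be the main obstacle, is justifying that fixing the other players as truthful is legitimate after player $i$ has deviated: a priori, letting some $j\neq i$ react to $i$'s inconsistent report could enlarge $i$'s gain from lying. I would handle this with the auxiliary observation (the $n$-player analog of the single-player bound $r_t^{(D)}\le \frac1p$, rescaled by $\frac{C}{nD}$): whenever $r\ge \frac{C}{nD}\cdot\frac1p$, any player entering a round with a truthful personal history strictly prefers to report $D$ no matter what the others do that round, since the one-shot penalty $rD$ for breaking a consistent history already dominates everything under-reporting can save now and later. As $\frac{1-(1-p)^T}{p-p(1-p)^{T-1}}\ge \frac1p$, the theorem's threshold exceeds $\frac{C}{nD}\cdot\frac1p$, so after any deviation by $i$ every still-honest player keeps a truthful history and keeps reporting $D$; hence $i$ always faces $n-1$ truthful players and the reduction applies verbatim. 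For the converse this same slack suffices: any reaction by the others that departs from truthfulness only shrinks the denominator and raises $i$'s per-round saving from under-reporting, so exhibiting the deviation against truthful opponents is already enough to rule out a truthful equilibrium below the threshold.
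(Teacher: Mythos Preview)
Your approach is essentially the paper's: reduce player $i$'s best-response problem against the always-$D$ strategies of the other players to the single-player flux mechanism via the rescaling $r'=\tfrac{nD}{C}\,r$, and then invoke Theorem~\ref{thm:ber-p-no-hist}. The paper carries out the same reduction and, like you, pauses to argue that players $j\ne i$ with a truthful history will not abandon truth-telling after observing $i$'s deviation (using the bound $r\ge\tfrac{C}{nD}\cdot\tfrac1p$, which is implied by the theorem's threshold).

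Two remarks. First, your auxiliary claim is stated too strongly: ``no matter what the others do that round'' is a dominant-strategy assertion, and the threshold $\tfrac{C}{nD}\cdot\tfrac1p$ does not support it in general---when many others lie, the per-round saving from under-reporting scales like $\tfrac{C}{nD}\cdot\tfrac{1-(1-p)^{n-1}}{p}$, not $\tfrac{C}{nD}$. The version you actually need, and the one the paper proves via its $\Delta EC_t(0,\bD)$ bound (Lemma~\ref{lem:delta-ec-exp}(ii)), is only that $j$ stays truthful when a \emph{single} deviator $i$ has a zero history; your inductive use of it never requires more. Second, for the standard Nash test with $\bsigma^{-i}$ fixed as the always-$D$ strategy, the off-path concern does not even arise: the others report $D$ regardless of $i$'s play, so your first two paragraphs already establish both directions of the ``if and only if.'' The paper's (and your) third-paragraph argument is really proving something slightly stronger---robustness to off-path reactions---rather than being needed for the bare NE statement.
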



We visualize $Ber(p)$ penalty thresholds in Fig.~\ref{fig:thres-multi} for different $T$'s and $p$'s.
The $x$-axis is the total number of rounds for a game and 
the $y$-axis is the penalty rate that guarantees the specified equilibrium. 
The blue and orange lines are penalty thresholds for $p=\tfrac{1}{3}$ and $\tfrac{2}{3}$, respectively. 
The solid and dashed lines are thresholds for truthful DSE and NE, respectively. 
All four thresholds in Fig.~\ref{fig:thres-multi} decrease as $T$ increases, suggesting that the increasing length of the game promotes truthful equilibria. 
From expressions (\ref{eq:multi-ber-dom-main}) and (\ref{eq:multi-exist-main}), we see that the DSE and NE thresholds tend to be the same as $p$ approaches 1. 

\begin{figure}[htbp]
    \centering
    \includegraphics[width=0.6\textwidth]{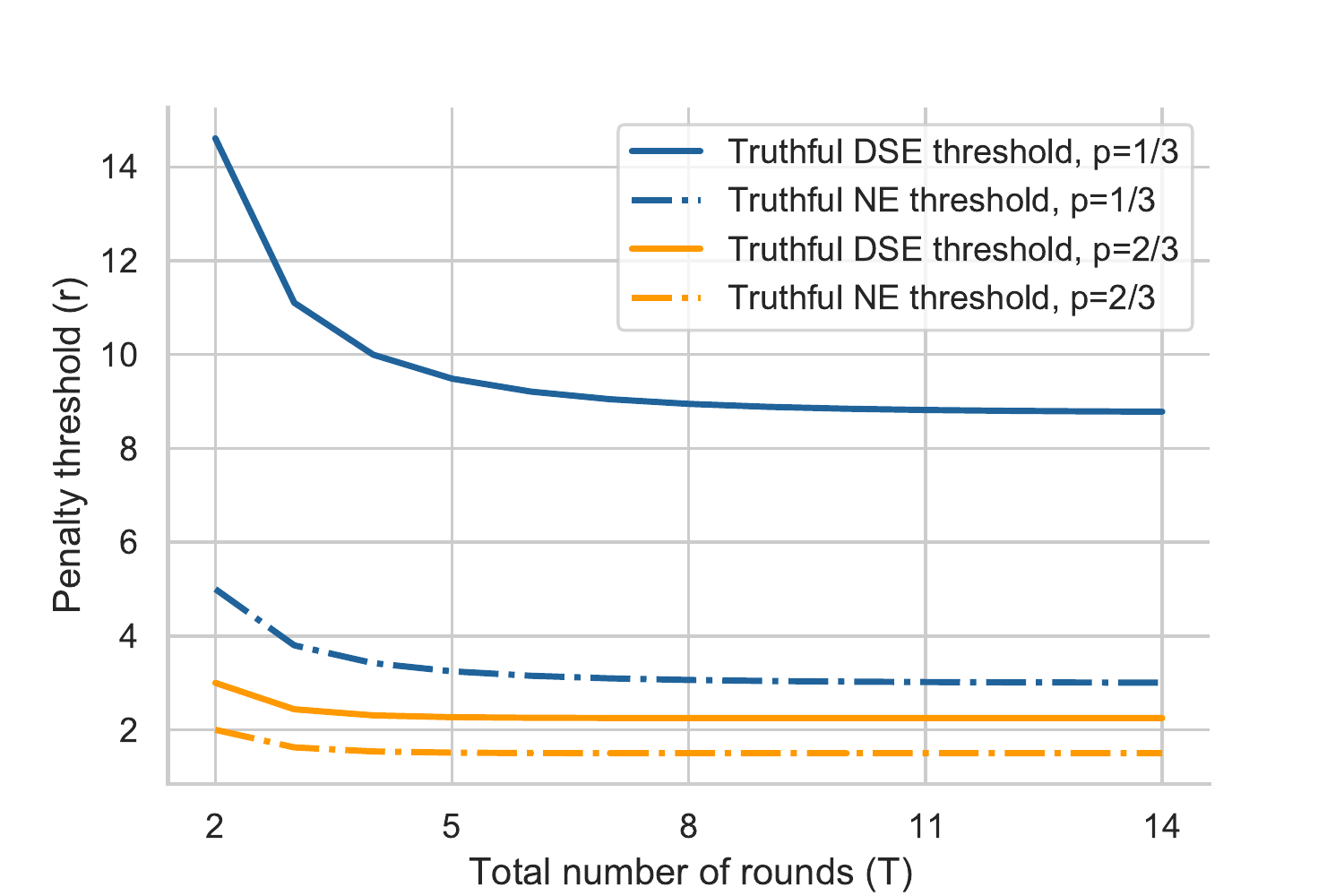}
    \caption{Exact penalty thresholds for truthful DSE and NE, given total number of rounds $T$ for $Ber(p)$ distributions. We assume $n=20$, $D=1$ and $C=n\cdot D = 20$.}
    \label{fig:thres-multi}
\end{figure}

Similar to the single-player model, we  extend the results for Bernoulli distributions to approximate results for  general distributions.
Given $\alpha\in[0,1]$, we redefine being busted as having an observed consumption at least $\alpha D$. 
For the dominant strategy equilibrium, we find the threshold such that being $\alpha$-truthful is a dominant strategy.
For Nash equilibrium, we first define the approximate truthful NE, which is a natural extension of the complete truthful NE.

\begin{theorem}\label{thm:multi-gen-dom}
Given $\alpha\in[0,1]$ and some general distribution $F$, 
let $p=1-F(\alpha D)$.
The $\alpha$-truthful strategy profile forms a dominant strategy equilibrium if
\begin{equation}\label{eq:multi-gen-dom}
    r\ge \frac{1}{\alpha}\frac{C}{nD}\frac{1-(1-p)^{n}}{p}\frac{1-(1-p)^T}{p-p(1-p)^{T-1}}.
\end{equation}
\end{theorem}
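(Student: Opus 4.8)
The plan is to run the Bernoulli-bust-process reduction behind Theorem~\ref{thm:single-approx}, layered on top of the dominant-strategy analysis of Theorem~\ref{thm:multi-ber-dom}. Declare player $i$ \emph{busted} in round $t$ whenever $y_t^{i}\ge\alpha D$; because the realizations are i.i.d.\ with CDF $F$ and this event is independent of every player's report, each player is busted independently in each round with probability $p=1-F(\alpha D)$, and being busted forces $b_t^{i}\ge\alpha D$ that round no matter what anyone does. Exactly as in the single-player segment argument, it then suffices to fix player $i$, fix an \emph{arbitrary} opponent profile $\bsigma^{-i}$, and rule out the deviations in which $i$ reports below $\alpha D$ — say with minimum pre-bust report $\beta D$, $\beta<\alpha$ — up to the round of her first bust, conforming to the $\alpha$-truthful play afterward (that reverting to or oscillating near $\alpha D$ before the first bust never strictly helps, and that post-bust reversions are deterred by the same penalty, are the analogues of Lemma~\ref{fact:single-ber-comp} / $r_t^{(D)}\le r_T^{(\emptyset)}$ and are argued as in Section~\ref{sec:single}).

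The one genuinely new computation is a per-round estimate on the pre-bust segment, against the worst group history. In a round $t$ of that segment let $S_t=\sum_{j\ne i}b_t^{j}$ and let $X_t$ be the number of opponents busted in round $t$; the key points are that $X_t\sim\mathrm{Binom}(n-1,p)$ no matter what $\bsigma^{-i}$ is, and that $S_t\ge\alpha D\,X_t$ because every busted opponent is itself forced to report at least $\alpha D$. Reporting $\beta D$ rather than $\alpha D$ in round $t$ lowers $i$'s regular payment by
\[
C\!\left(\frac{\alpha D}{S_t+\alpha D}-\frac{\beta D}{S_t+\beta D}\right)=C\,\frac{(\alpha-\beta)D\,S_t}{(S_t+\alpha D)(S_t+\beta D)}\;\le\;\frac{C(\alpha-\beta)}{\alpha(X_t+1)},
\]
the last step using $S_t\ge\alpha D\,X_t$ together with the fact that $S\mapsto S/((S+\alpha D)(S+\beta D))$ peaks at $S=D\sqrt{\alpha\beta}\le\alpha D$; the single degenerate case $S_t=0$ (all opponents reporting $0$) is peeled off and bounded by $C$. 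Taking expectations and using $\bE[1/(X_t+1)]=\tfrac{1-(1-p)^{n}}{np}$, the expected per-round saving is at most $\tfrac{\alpha-\beta}{\alpha}\cdot\tfrac{C}{n}\cdot\tfrac{1-(1-p)^{n}}{p}$. On the other side, at her first bust $i$ is forced from a report of at most $\beta D$ up to at least $\alpha D$, incurring at least an extra $r(\alpha-\beta)D$ in penalty. Multiplying the expected per-round saving by the expected number of pre-bust rounds, setting the result against $r(\alpha-\beta)D$ times the probability that the bust falls in the penalty-bearing window, and cancelling $(\alpha-\beta)$, the residual $Ber(p)$ book-keeping — expected segment length divided by probability of being busted — is precisely the single-player computation in the proof of Theorem~\ref{thm:ber-p-no-hist}, contributing $\tfrac{1-(1-p)^{T}}{p-p(1-p)^{T-1}}$. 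Collecting $\tfrac1\alpha$, $\tfrac{C}{nD}$, $\tfrac{1-(1-p)^{n}}{p}$, and this factor yields exactly~(\ref{eq:multi-gen-dom}); since $i$ and $\bsigma^{-i}$ were arbitrary and the post-bust part is already $\alpha$-truthful, $\alpha$-truthfulness is a dominant strategy under that bound.

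The step I expect to cause the most trouble is the degeneracy $S_t=0$ — all opponents reporting $0$ — where $i$'s regular payment is $C$ if she reports anything positive but only $C/n$ if she also reports $0$, so the smooth bound on the payment drop collapses and this case has to be isolated and bounded crudely. That crudeness is exactly what forces the multiplier $1-(1-p)^{n}$ here in place of the sharper $1-(1-p)^{n-1}$ of the exact Theorem~\ref{thm:multi-ber-dom}: in the Bernoulli analysis one subtracts the ``everyone reports $0$'' correction $\tfrac{C}{n}(1-p)^{n-1}$ from the per-round saving, but in the continuous setting that correction survives only in the knife-edge event $\beta=0$ and cannot be used uniformly — which is also why~(\ref{eq:multi-gen-dom}) is stated as a sufficient rather than a tight bound. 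A secondary technical point, handled as in Section~\ref{sec:single}, is justifying the reduction to the ``under-report until busted, then conform'' deviation family and locating the penalty-bearing rounds correctly in the $Ber(p)$ book-keeping, including the subtlety that the opponents' reports may react to $i$'s history — which only enters one round later and so does not affect the single-round gain estimate above.
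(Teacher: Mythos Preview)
Your proposal is correct and takes essentially the same approach as the paper: both redefine ``busted'' as $y_t^{i}\ge\alpha D$, bound the per-round regular-payment gain by $\tfrac{C(\alpha-\beta)}{\alpha(X_t+1)}$ with $X_t\sim\mathrm{Bin}(n-1,p)$, and combine $\bE[1/(X_t+1)]=\tfrac{1-(1-p)^{n}}{np}$ with the single-player segment book-keeping to produce~(\ref{eq:multi-gen-dom}). The only organizational difference is that the paper packages the same estimate through the recursive $\Delta EC_t(\bzero)$ framework of Lemma~\ref{lem:delta-ec-exp} rather than your direct segment argument, and your discussion of the $n$-versus-$(n{-}1)$ exponent and the $S_t=0$ degeneracy makes explicit what the paper leaves implicit.
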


\begin{definition}[$\alpha$-truthful Nash equilibrium]
Given $\alpha \in [0,1]$, a reporting profile $\bm{b} \in [0,D]^{n\times T}$ is an $\alpha$-truthful Nash equilibrium if $\bm{b}_{t}^{i} \ge \alpha D$ for all $i, t$ and no player wants to deviate from being $\alpha$-truthful in any round.
\end{definition}

\begin{theorem}\label{thm:multi-gen-ne}
Given $\alpha\in[0,1]$ and some general distribution $F$,
let $p=1-F(\alpha D)$.
The $\alpha$-truthful strategy profile forms a Nash equilibrium if
\begin{equation}\label{eq:multi-gen-ne}
    r\ge \frac{1}{\alpha}\frac{C}{nD}\frac{1-(1-p)^T}{p-p(1-p)^{T-1}}.
\end{equation}
\end{theorem}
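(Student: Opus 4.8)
The plan is to reduce the $\alpha$-truthful Nash equilibrium question for a general distribution $F$ to the complete truthful Nash equilibrium for the Bernoulli distribution $Ber(p)$ with $p = 1 - F(\alpha D)$, whose threshold is given by Theorem \ref{thm:multi-exist}, and then pay an extra factor of $1/\alpha$ for the approximation slack. First I would redefine ``busted'' for the general distribution as the event that player $i$'s realization $y_t^i$ is at least $\alpha D$, which occurs with probability $p = 1-F(\alpha D)$; when busted, the player is forced to report at least $\alpha D$, so for the purpose of analyzing $\alpha$-truthfulness this behaves exactly like the $Ber(p)$ ``busted'' event. I would then fix an arbitrary player $i$ and, as in the proof sketch preceding Theorem \ref{thm:multi-exist}, first establish that once the penalty is at least $\tfrac{1}{\alpha}\tfrac{C}{nD}\tfrac{1}{p}$, a player who has a history of $\alpha D$ (from herself) will stay $\alpha$-truthful; this lets me assume every player $j \ne i$ reports $D$ for the entire game, reducing to a single strategic player facing $n-1$ truthful players.

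Next I would analyze the single remaining strategic player exactly as in the proof of Theorem \ref{thm:ber-p-no-hist}, comparing the two surviving candidate strategies: lying-till-busted (reporting the minimum allowed, some $\beta D$ with $\beta < \alpha$, until the first realization $\ge \alpha D$, then reporting $D$ forever) versus honest-till-end (reporting $\alpha D$ every round, or equivalently $D$). The key computation is the per-round comparison of expected savings against expected additional penalty. With $n-1$ truthful co-players, the regular payment when player $i$ reports $\gamma D$ is $C \gamma /(\gamma + (n-1))$; the marginal savings of dropping from $\alpha D$ to $\beta D$ in one round is at most $C(\alpha - \beta)/n \cdot$ (something bounded by a constant depending on $n$), and in fact I would bound the total savings over the pre-busted segment by $\tfrac{C}{nD}(\alpha-\beta)D \cdot \bbE[\#\text{days before busted}]$ after observing that the worst-case per-round savings is $\le \tfrac{C}{n}$ times the fraction dropped. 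The extra penalty incurred is at least $rD(\alpha-\beta)\cdot \Pr(\text{busted})$, since upon being busted the report jumps by at least $(\alpha-\beta)D$. Setting expected penalty $\ge$ expected savings, the $(\alpha-\beta)$ factor cancels, leaving exactly the $Ber(p)$ inequality from Theorem \ref{thm:multi-exist} but scaled by the $1/\alpha$ that arises because the savings bound uses the full drop $(\alpha - \beta)$ while the baseline consumption is $\alpha D$ rather than $D$ — more precisely, the $1/\alpha$ accounts for the fact that a dishonest report of $\beta D$ can be as low as $0$, so the relevant ``gap'' relative to the truthful value $D$ is a factor $1/\alpha$ larger than in the exact case.

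The main obstacle I anticipate is the regular-payment comparison in the cost-sharing denominator: unlike the single-player additive setting, here the savings from under-reporting in a given round depends on how much the other players report in that round, so I must argue that the per-unit savings is maximized (giving a valid upper bound) in some extreme configuration — and simultaneously that the extra penalty lower bound $rD(\alpha-\beta)\Pr(\text{busted})$ is not itself affected by co-player behavior, which it is not since the penalty term $r|b_{t+1}^i - b_t^i|$ is independent of others. I would handle this by noting $\gamma \mapsto C\gamma/(\gamma + s)$ is concave and its slope is largest when the co-players' total $s = \sum_{j\ne i}\sigma_t^j$ is smallest, so bounding $s \ge (n-1)\alpha D$ (each truthful-or-$\alpha$-truthful co-player reports at least $\alpha D$) gives the uniform bound needed; combined with the earlier reduction ensuring co-players are in fact fully truthful, this closes the argument. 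I would also need to double-check the tie-breaking and the boundary case $\alpha = 1$, where the bound must recover Theorem \ref{thm:multi-exist} exactly, and the case $p \to 1$, which should match the DSE threshold of Theorem \ref{thm:multi-gen-dom}.
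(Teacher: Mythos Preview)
Your overall architecture matches the paper's proof: redefine ``busted'' as $y_t^i \ge \alpha D$ so that $p = 1 - F(\alpha D)$, first argue that under the claimed penalty any player with an $\alpha$-truthful history stays $\alpha$-truthful (so co-players $j \ne i$ remain $\alpha$-truthful even if $i$ deviates), and then compare savings versus penalty for the lone strategic player over the pre-busted segment.

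However, your bookkeeping for the $1/\alpha$ factor is where the proposal breaks. You write the per-round savings upper bound as $\tfrac{C}{nD}(\alpha-\beta)D = \tfrac{C(\alpha-\beta)}{n}$ and then append a verbal ``the $1/\alpha$ accounts for the fact that a dishonest report can be as low as $0$''. That is not the source of the factor, and in fact your written bound is \emph{not} a valid upper bound on the savings. With the co-players at $\alpha D$ (this is the binding case for an $\alpha$-truthful NE, not co-players at $D$ as you first say), player $i$'s regular payment when reporting $\gamma D$ is $C\gamma/((n-1)\alpha+\gamma)$, so the savings from dropping $\alpha D$ to $\beta D$ is
\[
\frac{C}{n}-\frac{C\beta}{(n-1)\alpha+\beta}
=\frac{C(n-1)(\alpha-\beta)}{n\bigl((n-1)\alpha+\beta\bigr)}
\le \frac{C(\alpha-\beta)}{n\alpha},
\]
and this last inequality (tight at $\beta=0$) is exactly where the $1/\alpha$ enters. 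Your proposed bound $\tfrac{C(\alpha-\beta)}{n}$ fails already for, e.g., $n=3$, $\alpha=1/2$, $\beta=0$. Once you use the correct bound $\tfrac{C(\alpha-\beta)}{n\alpha}$, the comparison with the penalty lower bound $rD(\alpha-\beta)\Pr(\text{busted})$ cancels $(\alpha-\beta)$ cleanly and yields the threshold in the statement, matching the paper's computation. Also tighten the first paragraph: in the $\alpha$-truthful NE you should fix co-players at (at least) $\alpha D$, not $D$; the worst case for $i$'s incentive to deviate is when the other reports are exactly $\alpha D$, which is consistent with the concavity observation you make later.
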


We see that both the penalty thresholds, \eqref{eq:multi-gen-dom} and \eqref{eq:multi-gen-ne} are close to $\frac{1}{\alpha}$ times their Bernoulli thresholds, \eqref{eq:multi-ber-dom-main} and \eqref{eq:multi-exist-main}, for  $p = 1-F(\alpha D)$. 
{Recall that in the single-player model, $\alpha$-truthfulness can be obtained by directly using the Bernoulli threshold with $p = 1-F(\alpha D)$.
In the multi-player model, however, we have to multiply the Bernoulli threshold with a factor of $\frac{1}{\alpha}$, which suggests it is more difficult to get every player to speak the truth under the cost sharing setting.}  
We note that both penalty rates (\ref{eq:multi-gen-dom}) and (\ref{eq:multi-gen-ne}) are upper bounds for the actual thresholds.
This is because we treat any report greater than $\alpha D$ as $\alpha D$. 
We conjecture that the exact thresholds are not far from thresholds (\ref{eq:multi-gen-dom}) and (\ref{eq:multi-gen-ne}).

\section{Conclusion and Open Problems}\label{sec:open-problem}

We propose a penalty mechanism for eliciting truthful self-reports when only partial signals are revealed in a repeated game. 
A player faces trade-off between under-reporting today and paying a penalty in the future due to the uncertainty of partial signals. 
We find that the length of the game naturally reduces the minimum penalty rate that incentivizes truth-telling.
Given any penalty rate, we give a 
characterization 
of the optimal strategies under both single- and multiple-player settings for any distribution.
We identify a penalty rate that achieves complete truthfulness for Bernoulli distributions, which can be used in a reduction to obtain approximate truthfulness for arbitrary distributions.  
 
A possible future direction is to extend 
our results 
to asymmetric multi-player settings where players do not have the same gross consumption or the same distribution for partial signals.
For heterogeneous players, we may then consider, in addition to truthfulness,  the fairness of the mechanism.
It would be interesting to develop a definition of fairness for the cost sharing model and compute the fairness ratios accordingly. 
It is also worthwhile to derive other truthful and fair mechanisms that do not involve penalty.  

\newpage

\bibliography{references}
\bibliographystyle{alpha}

\newpage

\appendix
\section*{Appendix: Missing Materials}
\section{Justifying the Assumption that Gross Consumption is Constant}\label{app:const}
In the prosumer pricing problem, we adopt the assumption from Khodabakhsh~\etal \cite{khodabakhsh2019prosumer} that $D$, the gross electricity usage, remains constant throughout the time horizon. 
According to U.S. Energy Information Administration, the electricity usage typically follows a daily pattern, which means within some period (e.g., day, month or season), the electricity consumption does not vary much \cite{eia}. 
This is  especially true for industrial sites, which are the major consumers for utilities \cite{nadel2014electricity}. 
Therefore, we can always discretize the time horizon into sub-intervals such that the electricity consumption within each interval is relatively constant.

To relax this assumption, let the gross consumption for each round come from a known range, i.e., $D_t \in[\underline{D},\overline{D}]$.  
The center knows the $\underline{D}$ and $\overline{D}$ but does not necessarily know $D_t$ for any $t$. 
We show that our results extend straightforwardly. 
Recall that in the analysis of Bernoulli distributions, we compare the basic two strategies and find the penalty rate that sets the two expected costs equal. 
We can find an upper bound for the truthful threshold by bounding the expected cost of lying-till-busted and honest-till-end from below and above, respectively. 
Then the resulting penalty rate is simply the original threshold (\ref{eq:ber-p-no-hist}) times the ratio $\overline{D}/\underline{D}$. 
For the reduction from Bernoulli distributions to any arbitrary distribution with cdf $F$, given $\alpha\in[0,1]$, we now define $p=1-F(\alpha\overline{D})$ and use the same argument to obtain an upper bound of penalty rate that achieves $\alpha$-truthfulness. 
For the multi-player model, we add the multiplicative ratio $\overline{D}/\underline{D}$ in every expression for an upper bound of the desired penalty rate. 
This relaxation will also help add heterogeneity to the multi-player model.
\section{Missing Proofs in Sections~\ref{sec:single}}\label{app:pfs}


\subsection{Proof for Lemma~\ref{lem:single-sub-opt}}

\proof{}
    To see the first sentence, we can observe that the cost function is a linear function of today's report $b_t$ and thus either $0$ or $D$ achieves the optimality. 
    To see the second sentence, we consider the last round $t$ in the optimal strategy such that when $(b_{t+1}, y_t) = (0,0)$ but $b_t > 0$. 
    It is obvious if $t$ is the last round, and thus we assume $t>1$.
    By reporting $b_t$ in round $t$, the expected total cost afterward is
    \begin{align*}
     & rb_t + b_t + ExpCost(t-1,r,b_t, b_{t-1}) \nonumber\\
        &= \bE_{y_{t-1}}[(r+1)b_t + r|\max\{y_{t-1}, b_{t-1}\} - b_{t-1}|  + \max\{y_{t-1}, b_{t-1}\}\\
        &\qquad\qquad + OptCost(t-2, r, \max\{y_{t-1}, b_{t-1}\})] \nonumber\\
        &> \bE_{y_{t-1}}[ r\max\{y_{t-1}, b_{t-1}\} + \max\{y_{t-1}, b_{t-1}\}  + Opt(t-2, r, \max\{y_{t-1}, b_{t-1}\})] \nonumber \\
        &= ExpCost(t-1,r,0,b_{t-1})
    \end{align*}
    where the inequality is because for any $x\ge 0$ and $b_t>0$,
    \[
    (r+1)b_t + r|x - b_{t}| \ge (r+1)b_t + r(x- b_{t}) > r x.
    \]
    The last term 
    is exactly the expected total cost by reporting 0 in round $t$ but adopting the same strategy with the optimal one afterward, which is contradiction with $b_{t}>0$ being optimal. 
    Thus we complete the proof of the lemma.
\endproof

\subsection{Proof for Lemma~\ref{lem:critical-thresholds}}

\proof{}
Note that by Lemma~\ref{lem:single-sub-opt}, $b_{t+1}$ can only be $0$ or $D$.
Moreover, $r_t^{(\emptyset)}=\infty$ for any $t<T$.
Therefore, we only need to show the existence of $r_T^{(\emptyset)}$ and $r_t^{(D)}$ for $t<T$.
It suffices to show the following claim:
For any round $T \ge t \ge 1$ with $y_t = 0$, if the optimal strategy is $b_t = 0$ given penalty rate $r$, then  $b_t = 0$ is also optimal for any $r' \le r$; if the optimal strategy is $b_t = D$ given penalty rate $r$, then  $b_t = D$ is also optimal for any $r' \ge r$.

To prove the claim, we use induction on $t$.
When $t=1$, it is easy to see that $r_1^{(D)}$ exists and is equal to 1.
Consider $t>1$ rounds left and the optimal strategy is to report $0$ given penalty $r$ and $b_{t+1}=D$ (or no history if $t=T$).
If we increase penalty $r$ to $r'>r$, by induction, the optimal strategy for future rounds either remains the same or switch to $D$ from $0$.
Given history $D$ and report $D$, the payment for thin round is independent from penalty rate $r$.
Therefore, 
reporting $D$ is still optimal.
A similar argument can be made for reporting $0$. 
\endproof

\subsection{Proof for Lemma~\ref{fact:single-ber-comp}}

\proof{}
Given the same $t$ rounds left, it is straightforward to see that players with a truthful history is more incentivized to lie compared to when she has no history.
This is because she needs to pay an additional payment of $rD$ whenever she has a truthful history.
Mathematically, let $r\ge r_t^{(\emptyset)}$.
Then we have 
\begin{align*}
    &ExpCost(t,r,D) \le ExpCost(t,r,0) \\
    \implies &D+OptCost(t-1,r,D)\le p(D+OptCost(t-1,r,D))+(1-p)\cdot OptCost(t-1,r,0)\\
    \implies &D+OptCost(t-1,r,D)\le OptCost(t-1,r,0)\\
    \implies &D+OptCost(t-1,r,D)\le rD+OptCost(t-1,r,0)\\
    \implies &ExpCost(t,r,D,D)\le ExpCost(t,r,D,0)
\end{align*}
The above inequalities show that when $r\ge r_t^{(\emptyset)}$, the player prefers truth-telling over lying when she has a truthful history, which implies $r_t^{(D)}\le r_t^{(\emptyset)}$. 
\endproof

\subsection{Proof for Lemma~\ref{lem:single-decr-thres}}

\proof{}
An equivalent statement of Lemma~\ref{lem:single-decr-thres} is that given a penalty $r\ge\frac{1}{p}$, if a player is truthful when there are $T$ rounds left, then she is also truthful when there are $T+1$ rounds left.
Let us prove the alternate statement.

Assume $r\ge \frac{1}{p}$ and that the player is truthful when there are $T$ rounds left. 
Let $OPT(*,T)$ denote the optimal cost for a player if she reports $*$ in the current round and there are $T$ rounds left.
Since the player is truthful when there are $T$ rounds left, we have 
$$D+OPT(D,T-1)\le OPT(0,T-1).$$
Now assume there are $T+1$ rounds left and the player is free to lie in the first round. By Lemma~\ref{lem:single-sub-opt}, there are the following two cases.
\begin{itemize}
    \item player reports $D$ in the first round
\end{itemize}    
 If the player also reports $D$ in the second round, she pays $2D+OPT(D,T-1)$.
Otherwise she pays $D+rD+OPT(0,T-1)\ge 2D+rD+OPT(D,T-1)$, which is dominated by reporting $D$ for both rounds.   
 \begin{itemize}   
    \item player reports $0$ in the first round
\end{itemize}
    Then the player's total expected payment is
\begin{align*}
   &p(D+rD+OPT(D,T-1))+(1-p)OPT(0,T-1)\\
   &\ge D+prD+OPT(D,T-1)\ge 2D+OPT(D,T-1).
\end{align*}
Therefore, the optimal strategy is to report $D$ in the first two rounds and the rest of the game is exactly the same as when there are $T$ rounds left. 
\endproof


\subsection{Proof for Theorem~\ref{thm:single-thres-with-hist}}\label{app:thm2}
\proof{}
We first show the proof for $p\le \tfrac{1}{2}$ by induction on $t$.
Let $h(t)=\tfrac{1-(1-p)^{t}}{2p-p(1-p)^{t-1}}$.
%
%
Assume there are $t$ rounds left.
Note that $h(t)$ increases in $t$, which means that if $r\ge h(t)$, then $r \ge h(t')$ for $t'\le t$, i.e., the player stays truthful for the rest of the $t$ rounds.
Similar to the argument in Theorem~\ref{thm:ber-p-no-hist}, we compare the expected payments of the two strategies, namely lying-till-busted (``lying") and being honest, within a segment.
Note that the segment now starts with being busted, because the player has a truthful history. 
\begin{align*}
        \bbE[\text{honest}] &= D\cdot\bbE[\text{\# days before busted}]= D+D\cdot\tfrac{1-p}{p}(1-(1-p)^{t-1}); \\
        \bbE[\text{lying}] &=rD+ rD\cdot\text{Pr(busted)} = rD(2-(1-p)^{t-1}).
\end{align*}
The penalty that results in truthfulness sets these two payments equal, i.e. $r=\tfrac{1-(1-p)^{t}}{2p-p(1-p)^{t-1}}=h(t)$.

The proof for $p\ge\tfrac{1}{2}$ is slightly different.
First note that for $t=1$, it is not hard to see the threshold $r_1^{(D)}=1$ by comparing the cost of being honest (i.e., $D$) and the cost of lying (i.e., $rD$).
For $t>1$ rounds left, we apply the same argument above, with the consideration that the player will switch to lying in the very last round if she is allowed to.
Therefore, we have 
\begin{align*}
        \bbE[\text{honest}] &= D\cdot\bbE[\text{\# days before busted}]-(1-r)D\cdot\text{Pr(not busted in the last day)}\\
        &= D+D\cdot\tfrac{1-p}{p}(1-(1-p)^{t-1})-(1-r)D(1-p)^{t-1}; \\
        \bbE[\text{lying}] &=rD+ rD\cdot\text{Pr(busted)} = rD(2-(1-p)^{t-1}).
\end{align*}
The penalty that sets the above two expected costs equal is $\tfrac{1}{2p}$.  
\endproof

%
%


\section{Missing Proofs in Section~\ref{sec:multi}}\label{app:multi}


\subsection{Proof for Lemma~\ref{lem:multi-sub-opt}}



\proof{}
We can use a similar argument in the proof for Lemma~\ref{lem:single-sub-opt} to prove that if a player lied yesterday, it is better off to lie today.
We consider the last round $t$ in the optimal strategy such that when $(b^{i}_{t+1}, y^{i}_t) = (0,0)$ but $b^{i}_t > 0$. 
    It is obvious if $t$ is the last round, and thus we assume $t>1$.
    By reporting $b^{i}_t$ in round $t$, the expected total cost afterward is
    
\begin{align*}
        & rb^{i}_t + \frac{C\cdot b_t^{i}}{\sum_{j}b_t^{j}} + ExpCost(t-1,r,\bsigma^{-i},\bb_t, b^{i}_{t-1}) \nonumber\\
        &= \bE_{y_{t-1}}\Bigg[\left(r+ \frac{C}{\sum_{j}b_t^{j}}\right)b^{i}_t + r|\max\{y^{i}_{t-1}, b^{i}_{t-1}\} - b^{i}_{t}| \\
        &\qquad\qquad\quad + \frac{C\cdot \max\{y_{t-1}^{i}, b_{t-1}^{i}\}}{\sum_{j \ne i}\sigma_t^{j}(\bb_{t}, \by_{t-1}^{-j}) + \max\{y_{t-1}^{i}, b_{t-1}^{i}\}}\\
        &\qquad\qquad\quad  + OptCost\Big(t-2, r, \bsigma^{-i},(\bm{\sigma}^{-i}(\bm{b}_{t},\by_{t-1}^{-i}),\max\{y_{t-1}^{i}, b_{t-1}^{i}\})\Big)\Bigg] \nonumber\\
        &> \bE_{y_{t-1}}\Bigg[ r\cdot\max\{y_{t-1}, b_{t-1}\} + \frac{C\cdot \max\{y_{t-1}^{i}, b_{t-1}^{i}\}}{\sum_{j \ne i}\sigma_t^{j}(\bb_{t}, \by_{t-1}^{-j}) + \max\{y_{t-1}^{i}, b_{t-1}^{i}\}} \\
        &\qquad\qquad\quad + OptCost\Big(t-2, r, \bsigma^{-i},(\bm{\sigma}^{-i}(\bm{b}_{t},\by_{t-1}^{-i}),\max\{y_{t-1}^{i}, b_{t-1}^{i}\})\Big)\Bigg] \nonumber \\
        &= ExpCost(t-1,r,\bsigma^{-i},(\bb_t^{-i},0),b^{i}_{t-1}),
    \end{align*}
which is the expected total cost by reporting 0 in round $t$ but adopting the same strategy with the optimal one afterward.
    This contradicts that $b^{i}_{t}>0$ is optimal. 


To see that partial reporting is optimal, rewrite the payment for the current round as
$$C\left(1-\frac{\sum_{j\ne i}b_{t}^{i}}{\sum_{j\ne i}b_t^{j}+b_t^{i}}\right)+r\mid b_{t+1}^{i}-b_{t}^{i}\mid,$$
whose second derivative is negative with respect to $b_t^{i}$.
This means that the payment function is concave in $b_t^{i}$ and will take minimum at either of the endpoints $0$ and $D$.
\endproof



\subsection{Proof for Lemma~\ref{lem:multi_n=2}}
\proof{}
In the single player model, if a player switches to lying from being honest, she saves $D$ for regular payment and then pays penalty $rD$ if she has a truthful history. 
Now in the two player model, since players are symmetric, we fix the action of player 2 and see what happens with player 1.

\begin{table}[htbp]
    \centering
    \begin{tabular}{c|c|c|c|}
\multicolumn{2}{c}{} & \multicolumn{2}{c}{player 2}\\\cline{3-4}
\multicolumn{2}{c|}{} & Honest & Lying\\\cline{2-4}
 \multirow{2}{*}{player 1}&Honest & $(C/2,C/2)$ & $(C,0)$ \\\cline{2-4}
&Lying & $(0,C)$ & $(C/2,C/2)$ \\\cline{2-4}
\end{tabular} 
    \caption{Expected payment in every round for each player in the multi-player model with $n=2$.}
    \label{tab:exp_pay}
\end{table}
No matter if player 2 is honest or lying, for player 1, switching to lying would save $C/2$ and may cost a penalty payment of $rD$.
By applying the same argument seen in Section~\ref{sec:single-ber} with the new expected savings and penalties, we get the same penalty threshold, except with a $C/2D$ multiplicative factor. 
\endproof

\subsection{Proofs for Theorem~\ref{thm:multi-ber-dom} and \ref{thm:multi-exist}}\label{app:eq-ber}

For general $n$ strategic players, we develop an alternative way to compute the penalty thresholds for NE and DSE.
Interestingly, we only need to make use of the following important definition, $\Delta EC^{i}_{t}(\bm{b}_{t+1})$, to 
derive a universal framework for equilibrium proofs.



\begin{definition}\label{def:delta_ec}
Let $EC^{i}_{t}(\bm{b}_{t+1})$ denote the expected cost for player $i$ with when there are $t$ rounds left and the group history is $\bm{b}_{t+1}$.
Define 
\begin{align*}
   \Delta &EC^{i}_{t}(\bm{b}^{-i}_{t+1}) \triangleq EC^{i}_{t}(b_{t+1}^{i}=D,\bm{b}^{-i}_{t+1}) - EC^{i}_{t}(b_{t+1}^{i}=0,\bm{b}^{-i}_{t+1})
\end{align*}
as the difference in the expected payments by reporting $D$ versus $0$ for player $i$, given $t$ rounds left and the reports of other players, $\bm{b}_{t+1}^{-i}$.  
\end{definition}
To simplify the notation, we remove the superscript $i$ in the definition and write $\Delta EC_{t}(\bm{b}^{-i}_{t+1})$. 
By Lemma~\ref{lem:multi-sub-opt}, $\bm{b}_{t+1}^{-i}$ is a string of size $n-1$ consisting of $0$'s and $D$'s.
We first present a technique to obtain upper bounds of $\Delta EC_t(\bm{b}_{t+1}^{-i})$ given $\bm{b}_{t+1}^{-i}$.

\begin{lemma}\label{lem:delta-ec-exp}
Some upper bounds of $\Delta EC_t(\bm{b}_{t+1}^{-i})$:
\begin{itemize}
    \item[(i)] $\Delta EC_t(\bzero)$
    $$\le \frac{C}{n}\frac{1-(1-p)^{n-1}}{p}\sum_{i=1}^{t}(1-p)^i-prD\sum_{i=0}^{t-1}(1-p)^i$$
    \item[(ii)] $\Delta EC_t(b_{t+1}^{j}=0,\bm{b}_{t+1}^{-i,j}=\bD)$
    $$\le \frac{C}{n-1}\sum_{i=1}^{t}(1-p)^i-prD\sum_{i=0}^{t-1}(1-p)^i$$
\end{itemize}
\end{lemma}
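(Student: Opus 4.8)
The plan is to bound $\Delta EC_t(\bm{b}_{t+1}^{-i})$ by a telescoping argument over the $t$ remaining rounds, tracking separately the regular-payment savings from reporting $0$ and the penalty cost incurred. First I would set up the one-round recursion for $\Delta EC_t$. Fix the strategies $\bsigma^{-i}$ of the other players and condition on player $i$'s realization $y_t^i$ in the current round. If $y_t^i = D$ (probability $p$), player $i$ is busted and must report $D$ regardless of history, so the only difference between the two histories $b_{t+1}^i = D$ and $b_{t+1}^i = 0$ is the penalty term $r|D - b_{t+1}^i|$: reporting $D$ with history $D$ costs $0$ extra, while with history $0$ it costs $rD$ — so the busted branch \emph{subtracts} $prD$ from $\Delta EC_t$ (reporting $D$ as yesterday's bid makes $D$ today cheap). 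If $y_t^i = 0$ (probability $1-p$), the player chooses her best response in round $t$; by Lemma~\ref{lem:multi-sub-opt} and an argument parallel to the single-player case, when her history is $D$ she is (weakly) more inclined to report $D$, and the difference in expected continuation cost is controlled by the difference in the \emph{current} regular payment between reporting $D$ and reporting $0$ against $\bsigma^{-i}$, plus a $\Delta EC_{t-1}$ term with the updated group history.

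The key quantitative inputs are: (a) when all other players report $0$ in the current round, the regular payment for player $i$ reporting $D$ is $\frac{C\cdot D}{0 + D} \cdot$ (appropriately) versus $\frac{C}{n}$ when everyone including $i$ reports $0$ — more carefully, the \emph{savings} of reporting $0$ instead of $D$ against a profile where $k$ of the other $n-1$ players report $D$ is $C\big(\frac{D}{kD+D} - \frac{0}{kD+0}\big)$-type expressions, which one bounds above by $\frac{C}{n}$ (resp. $\frac{C}{n-1}$) after averaging over how many of the others are busted; and (b) in case (i), the number of other players who have reported $0$ in a given round is Binomial, and $1 - (1-p)^{n-1}$ is the probability that at least one of them is busted, which is where the factor $\frac{1-(1-p)^{n-1}}{p}$ comes from when one sums the geometric-type series $\sum_{i=1}^t (1-p)^i$ over rounds in which player $i$ has been lying and has not yet been busted. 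So the plan is: unroll the recursion $\Delta EC_t \le (\text{current savings bound}) + (1-p)\Delta EC_{t-1} - prD$, observing that the savings bound in round $t-j$ carries a $(1-p)^j$ factor (player $i$ reaches that round with history $0$ only if not busted in the interim), and that once busted the $\Delta EC$ contribution is a clean $-prD$ per remaining round, giving $-prD\sum_{i=0}^{t-1}(1-p)^i$.

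For part (ii), the setup is the same except that one designated other player $j$ has history $0$ and the remaining $n-2$ have history $D$; in the worst case for the bound, $j$ also reports $0$ in every round until player $i$ is busted, so player $i$'s regular-payment savings from reporting $0$ is at most $\frac{C}{n-1}$ (dividing the cost among $i$ and the $n-2$ truthful others, with $j$ contributing $0$), and the penalty bookkeeping is identical, yielding $\frac{C}{n-1}\sum_{i=1}^t(1-p)^i - prD\sum_{i=0}^{t-1}(1-p)^i$. The main obstacle I anticipate is handling the $(1-p)\Delta EC_{t-1}$ recursion cleanly: one must verify that the continuation term does not undo the bound — i.e., that $\Delta EC_{t-1}$ with the \emph{updated} group history (which may now contain more $0$'s than $\bm{b}_{t+1}^{-i}$) is still bounded by the same expression with $t-1$ in place of $t$, which requires a monotonicity claim that more $0$-histories among the others only help player $i$ lie. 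This is where care is needed, because the other players' strategies are arbitrary best responses, not fixed; I would resolve it by noting that the \emph{upper bounds} in (i) and (ii) are stated precisely so that they dominate every such continuation, and proving the two cases simultaneously by a joint induction on $t$, using the single-player Lemmas~\ref{lem:single-sub-opt}–\ref{lem:single-decr-thres} as templates for the per-round best-response comparison.
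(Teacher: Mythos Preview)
Your plan---establish the one-step recursion $\Delta EC_{t+1}\le (1-p)\bigl[M+\Delta EC_t\bigr]-prD$ and induct on $t$---is exactly the paper's approach, with $M=\tfrac{C}{n}\tfrac{1-(1-p)^{n-1}}{p}$ for~(i) and $M=\tfrac{C}{n-1}$ for~(ii). Two calibrations are worth making. First, your derivation of $M$ in case~(i) is off: the factor $\tfrac{1-(1-p)^{n-1}}{p}$ does not arise from a bust-probability heuristic but from the exact computation $\sum_{k=0}^{n-1}\binom{n-1}{k}p^k(1-p)^{n-1-k}\tfrac{1}{k+1}=\tfrac{1-(1-p)^n}{np}$, after subtracting the all-zero term $(1-p)^{n-1}\tfrac{C}{n}$; bounding the per-round savings ``by $\tfrac{C}{n}$'' would give the wrong constant (indeed $M\ge\tfrac{C}{n}$). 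Second, your worry about the updated group history is unnecessary in the paper's setup: the lemma is proved with the other players' strategies \emph{fixed} in advance---lying-till-end for~(i) (justified separately as the worst case via Lemma~\ref{lem:ec-mnt} in the DSE proof), and for~(ii) the $n-2$ players with history $D$ held truthful (justified in the NE proof by first showing this is a best response once $r\ge\tfrac{C}{nDp}$). Under such a fixed, effectively history-independent profile for the others, the continuation term is literally the same scalar $\Delta EC_{t-1}$, so the induction closes without any monotonicity lemma inside this proof. Your joint-induction-with-monotonicity route could be made to work, but it is more machinery than the paper needs.
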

\proof{}
We prove (i) where $\bm{b}_{t+1}^{-i}=\bzero$ and the proof for (ii) is similar.
Let $M=\frac{C}{n}\frac{1-(1-p)^{n-1}}{p}$. 
We prove by induction.

\noindent \underline{Base case}. $t=1$. 
With probability $p$, having a $D$ or $0$ history pays the same regular payment and the $0$ history needs to pay penalty. 
With probability $1-p$, only the $D$ history pays the regular payment. 
\begin{align*}
    \Delta EC_1=EC_1(D)-EC_1(0)&=(1-p)\sum_{k=0}^{n-1}\binom{n-1}{k}p^k(1-p)^{n-1-k}\frac{C}{k+1} -(1-p)^{n}\frac{C}{n}-prD\\
    &=(1-p)\frac{C}{n}\frac{1-(1-p)^{n-1}}{p}-prD\\
    &= (1-p)M-prD.
\end{align*}
Note that $k$ in the second equality represents the number of players being busted in $N\setminus \{i\}$.

\noindent \underline{Induction step}. 
Assume Lemma~\ref{lem:delta-ec-exp} is true for $\Delta EC_t$. 
Consider $t+1$ rounds left.

\begin{align*}
    \Delta EC_{t+1}&=EC_{t+1}(D)-EC_{t+1}(0)\\
    &= (1-p)\sum_{k=0}^{n-1}\binom{n-1}{k}p^k(1-p)^{n-1-k}\left\{\frac{C}{k+1}+EC_{t}(D)\right\}\\
    &\qquad - (1-p)\sum_{k=0}^{n-1}\binom{n-1}{k}p^k(1-p)^{n-1-k}\cdot EC_{t}(0)\\
    &\qquad -prD-(1-p)^n\frac{C}{n}\\
    &\le(1-p)\left\{M+\Delta EC_{t}\right\}-prD\\
    &\le M\sum_{i=1}^{t+1}(1-p)^i-prD\sum_{i=0}^t (1-p)^i. 
\end{align*}
\endproof

An important property of $\Delta EC_{t}(\bm{b}^{-i}_{t+1})$ is that it is monotone increasing as the number of $0$'s in $\bm{b}_{t+1}^{-i}$ increases.
One way to understand this property is that a player $j \ne i$ with a zero history is more likely to lie in the next rounds, which in turn increases the expected regular payment if player $i$ is truthful.
We prove this property mathematically in Lemma~\ref{lem:ec-mnt}.

\begin{lemma}\label{lem:ec-mnt}
If $\hat{\bm{b}}^{-i}_{t+1}$ contains more zeros than $\bm{b}^{-i}_{t+1}$, then 
$$\Delta EC_{t}(\bm{b}^{-i}_{t+1})\le \Delta EC_{t}(\hat{\bm{b}}^{-i}_{t+1}).$$
\end{lemma}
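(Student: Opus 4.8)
The plan is to prove the monotonicity of $\Delta EC_t(\bm{b}^{-i}_{t+1})$ in the number of zeros in $\bm{b}^{-i}_{t+1}$ by a double induction: an outer induction on $t$ (the number of rounds left) and, within each round, an inner argument that flips one coordinate of the history from $D$ to $0$ and checks that $\Delta EC_t$ cannot decrease. Since swapping a single coordinate generates the partial order ``contains more zeros than'' by a chain of single flips, it suffices to handle the case where $\hat{\bm{b}}^{-i}_{t+1}$ differs from $\bm{b}^{-i}_{t+1}$ in exactly one position, say player $j$, with $b^j_{t+1}=D$ and $\hat b^j_{t+1}=0$. First I would set up the base case $t=1$: here $\Delta EC_1(\bm{b}^{-i}_2)$ is an explicit expression (as computed in the proof of Lemma~\ref{lem:delta-ec-exp}), and since the history only affects player $i$'s own penalty term and the one-round regular payment depends only on who is busted—not on histories—the base case reduces to checking that the single-round expected regular payment for player $i$ when truthful is (weakly) larger when more of the other players have zero histories. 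By Lemma~\ref{lem:multi-sub-opt}, a player with a zero history who sees $y=0$ reports $0$, so flipping $b^j_{t+1}$ from $D$ to $0$ can only (weakly) lower player $j$'s report in round $1$, which weakly raises $\frac{C\max\{y^i,b^i\}}{\sum_k(\cdot)+\max\{y^i,b^i\}}$; the penalty term $prD$ is unchanged, so $\Delta EC_1$ weakly increases.

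For the induction step I would expand $\Delta EC_{t+1}(\bm{b}^{-i}_{t+2})$ by conditioning on the realizations $\by_{t+1}$ in round $t+1$ and on player $i$'s report in that round. Under the convention that every player reports $0$ or $D$ (Lemma~\ref{lem:multi-sub-opt}), the expansion has the same shape as in the proof of Lemma~\ref{lem:delta-ec-exp}: a current-round regular-payment difference, a current-round penalty difference $prD$, and a continuation term of the form $\mathbb{E}[\Delta EC_t(\bm{b}^{-i}_{t+1})]$ where the new history $\bm{b}^{-i}_{t+1}$ for the other players is itself a random string determined by their strategies and realizations in round $t+1$. The key observation is a coupling: fix a common source of randomness for all realizations $\by_{t+1}$ and compare the two scenarios (history $b^j_{t+1}=D$ versus $b^j_{t+1}=0$, everything else equal). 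By Lemma~\ref{lem:multi-sub-opt} and the fact that a zero-history player weakly lies more, in every sample path the round-$(t+1)$ report profile of the other players in the ``$b^j=0$'' scenario is coordinatewise $\le$ that in the ``$b^j=D$'' scenario; hence the induced new history $\bm{b}^{-i}_{t+1}$ has (weakly) at least as many zeros in the ``$b^j=0$'' scenario. Applying the outer induction hypothesis pathwise, the continuation term $\Delta EC_t$ is weakly larger in the ``$b^j=0$'' scenario; the current-round regular payment for truthful player $i$ is also weakly larger there (same reasoning as the base case); and the penalty difference $prD$ is identical. Summing over the sample paths gives $\Delta EC_{t+1}(\hat{\bm{b}}^{-i}_{t+2}) \ge \Delta EC_{t+1}(\bm{b}^{-i}_{t+2})$, completing the induction.

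The main obstacle I anticipate is making the pathwise coupling argument fully rigorous, because player $i$'s \emph{own} optimal report in round $t+1$ may differ between the two scenarios (the history $\bm{b}^{-i}_{t+2}$ influences player $i$'s optimization through $\Delta EC_{t+1}$ itself, which is circular). The clean way around this is to note that $EC^i_t(b^i_{t+1}=D,\bm{b}^{-i}_{t+1})$ and $EC^i_t(b^i_{t+1}=0,\bm{b}^{-i}_{t+1})$ are each defined via the player's \emph{own} best response, so $\Delta EC_t$ is a difference of two separate optimization values; one then bounds $\Delta EC_{t+1}(\hat{\bm{b}})$ from below by evaluating the ``$b^i_{t+1}=D$'' branch at player $i$'s optimal report for that branch and the ``$b^i_{t+1}=0$'' branch at the \emph{same} report choice, and bounds $\Delta EC_{t+1}(\bm{b})$ from above symmetrically—or, more simply, observes that because the penalty term is additively separable and the regular-payment term is monotone in others' reports, the history of the other players enters $\Delta EC$ only through a term that is monotone even after player $i$ re-optimizes (a standard ``envelope''/monotone-comparative-statics observation). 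I would structure the write-up to first isolate this monotone-comparative-statics lemma, then feed it into the double induction, so that the circularity is dispatched once and the rest is bookkeeping parallel to the proof of Lemma~\ref{lem:delta-ec-exp}.
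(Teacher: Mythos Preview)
Your inductive skeleton (outer induction on $t$, inner argument via flipping a single history coordinate) matches the paper's structure, and your pathwise coupling intuition is sound. However, you are missing the one observation that makes the paper's induction go through cleanly and that would let you discard your entire ``envelope/monotone-comparative-statics'' detour.

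The paper begins by noting that the lemma is only non-trivial in the penalty regime where \emph{every} player with history $D$ stays truthful and \emph{every} player with history $0$ reports $0$ whenever allowed; outside this regime, either everyone lies or everyone is truthful regardless of history, so $\Delta EC_t$ does not depend on $\bm{b}^{-i}_{t+1}$ at all. Once you restrict to this regime, two things happen simultaneously: (i) each other player's round-$t$ report is a function of her \emph{own} history and realization only, so flipping $b^{j}_{t+1}$ affects only player $j$'s report and never anyone else's; and (ii) player $i$'s own action is likewise pinned down by $b^{i}_{t+1}$, so the circularity you flag---that player $i$'s optimal report might differ across the two scenarios---simply does not arise. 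With (i) and (ii) in hand, your coupling argument becomes an exact equality of sample paths except in the single coordinate $j$, and the induction step reduces to checking that $H(j)=\frac{C}{n-j}+\Delta EC_t(j\text{ zeros})$ is increasing in $j$, which is immediate from the inductive hypothesis.

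Without the regime restriction your coupling has a gap you did not mention: when you flip $b^{j}_{t+2}$ from $D$ to $0$, you argue player $j$'s report weakly drops, but you have not argued that the reports of players $k\neq i,j$ are unaffected (in general they can observe the full group history and re-optimize). The regime restriction dispatches this and your identified circularity in one stroke, and is a far lighter tool than the comparative-statics lemma you were planning to prove.
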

\proof{}
First note that the only non-trivial case is when the penalty is just high enough such that players with truthful history stay truthful and players with 0 history lie whenever realization is 0.
Since every player is symmetric, players with the same history will act the same. 
If the penalty is too low, $\Delta EC_t(\bm{b}_{t+1}^{-i})$ does not depend on $\bm{b}_{t+1}^{-i}$ and $\Delta EC_t(\bm{b}_{t+1}^{-i})-\Delta EC_t(\hat{\bm{b}}_{t+1}^{-i})=0$.
Same when the penalty is too high then players will be truthful regardless of history.
Now we can assume players with truthful history stay truthful regardless of the realization and players with zero history lie whenever possible.
We prove by induction on $t$. 

\noindent \underline{Base case}. $t=1$. Let $\Delta EC_t(\bm{b}_{t+1}^{-i})$ contain $k$ zero's (and $n-k-1$ $D$'s). 
Then we have
\begin{align*}
    \Delta EC_1(\bm{b}_{t+1}^{-i})&= p(-rD) +(1-p)\sum_{i=0}^{k} \binom{k}{i} p^i(1-p)^{k-i} \frac{C}{n-k+i} -\mathbbm{1}_{\{k=n-1\}}\cdot\frac{C}{n}(1-p)^n\\
    &=p(-rD) +(1-p) \sum_{j=0}^{n-1} \alpha(j,k) H(j) \quad \text{letting $j=k-i$}
\end{align*}
where 
\begin{align*}
    \alpha(j,k) = \begin{cases}
    \binom{k}{k-j}p^{k-j}(1-p)^{j} & \mbox{ for } 0 \le j \le k \\
    0 & \mbox{ for } k<j\le n-1
    \end{cases}
\end{align*}
and 
\begin{align*}
    H(j) = \begin{cases}
    \frac{1}{n-j}\cdot C & \mbox{ for } 0 \le j < n-1 \\
    \frac{n-1}{n}\cdot C & \mbox{ for } j= n-1
    \end{cases}
\end{align*}
Note that $\sum_{j=0}^{n-1}\alpha(j,k)=1$ and $\alpha(j,k)$'s depend on $k$. 
On the other hand, $H(j)$'s do not depend on $k$ and is an increasing sequence in $j$. 
Now consider $\hat{\bm{b}}_{t+1}^{-i}$ that contains $\hat{k}$ zeros, and $k<\hat{k}$.
Then we have

\begin{align*}
    \Delta EC_1(\hat{\bm{b}}_{t+1}^{-i})-\Delta EC_1(\bm{b}_{t+1}^{-i}) &= (1-p)\sum_{j=0}^{n-1}\left\{\alpha(j,\hat{k})-\alpha(j,k)\right\} H(j)\\
    &= (1-p)\left\{\sum_{j=k+1}^{\hat{k}}\alpha(j,\hat{k}) H(j)-\sum_{j=0}^{k}(\alpha(j,k)-\alpha(j,\hat{k}))H(j)\right\}\\
    &\ge (1-p)\left\{\sum_{j=k+1}^{\hat{k}}\alpha(j,\hat{k}) H(k)-\sum_{j=0}^{k}(\alpha(j,k)-\alpha(j,\hat{k}))H(k)\right\}\\
    &=(1-p)H(k)\left\{\sum_{j=0}^{\hat{k}}\alpha(j,\hat{k})-\sum_{j=0}^{k}\alpha(j,k)\right\} \\
    &=0
\end{align*}

\noindent \underline{Induction step}. Assume the lemma is true for $t$. We prove for $t+1$ rounds left. 
Assume again $\bm{b}_{t+1}^{-i}$ contains $k$ zero's.
\begin{align*}
    \Delta EC_{t+1}(\bm{b}_{t+1}^{-i})&= EC_{t+1}(D,\bm{b}_{t+1}^{-i}) - EC_{t+1}(0,\bm{b}_{t+1}^{-i})\\
    &= (1-p)\left\{\sum_{i=0}^{k}\binom{k}{i}p^i(1-p)^{k-i}\left(\frac{C}{n-k+i}+\Delta EC_t(k-i \mbox{ lying})\right)\right\}\\
    &\quad -prD-\mathbbm{1}_{\{k=n-1\}}(1-p)^n\left\{\frac{C}{n}- EC_t(0,0)\right\}\\
    &=-prD+(1-p)\sum_{j=0}^{n-1}\alpha(j,k) H(j)
\end{align*}
where $\alpha(j,k)$'s are the same as earlier, and $H(j)$'s are now
\begin{align*}
H(j) = \begin{cases}
\frac{1}{n-j}\cdot C+\Delta EC_t(j \mbox{ lying}) & 0\le j < n-1 \\
\frac{n-1}{n}\cdot C+\Delta EC_t(n-1 \mbox{ lying}) &  j=n-1
\end{cases}
\end{align*}
By induction, $\Delta EC_t(j \mbox{ lying})$ increases in $j$. 
Thus, $H(j)$'s is again an increasing sequence in $j$. 
We re-use the argument in the base case and prove $\Delta EC_{t+1}(\hat{\bm{b}}_{t+1}^{-i}) \ge \Delta EC_{t+1}(\bm{b}_{t+1}^{-i})$ for $\hat{\bm{b}}_{t+1}^{-i}$ with $\hat{k}>k$ zeros.
\endproof

With this property, we develop a framework for the equilibrium proofs of both DSE and NE:
\begin{enumerate}
    \item Determine what $\bm{b}_{t+1}^{-i}$ look like based on the type of the equilibrium we are trying to compute;
    \item Upper bound $\Delta EC_{t}(\bm{b}^{-i}_{t+1})$ with an expression using $C$, $D$, $t$, $p$ and  $r$ (see Lemma~\ref{lem:delta-ec-exp});
    \item Compare player $i$'s expected payment on the first round when she lies or tells the truth using $\Delta EC_{T-1}(\bm{b}^{-i}_{T})$;
    \item Find the penalty rate that sets the two expected payments equal, and that is the desired penalty threshold.
\end{enumerate}



\subsection*{Proof for Theorem~\ref{thm:multi-ber-dom}}\label{sec:multi-ber-dom}

\proof{}
Fix a player $i$.
To show that being truthful is a dominant strategy for player $i$, we want to look at the situation that maximizes the difference between truth-telling and lying for player $i$, which is precisely when every other player is lying as much as possible, by Lemma~\ref{lem:ec-mnt}. 
Now we assume every other player reports $0$ whenever they can. 
We compare the expected cost of being truthful and lying on the very first round.
\begin{align*}
&\bbE[\text{lying}] = (1-p)^{n-1}\frac{C}{n}+ EC_{T-1}(0,\bzero);\\
        &\bbE[\text{honest}]= \sum_{k=0}^{n-1}\binom{n-1}{k}p^i(1-p)^{n-1-k}\left\{\frac{C}{k+1}+EC_{T-1}(D,\bzero)\right\},
\end{align*}
where $k$ represents the number of players in $N\setminus \{i\}$ that are busted in round $T$.  
We would like to find the penalty rate such that $\bbE[\text{honest}]-\bbE[\text{lying}]\le0$.
By Lemma~\ref{lem:delta-ec-exp}, we have
\begin{align*}
    \bbE[\text{honest}]-\bbE[\text{lying}] &= \frac{C}{n}\frac{1-(1-p)^n}{p}-(1-p)^{n-1}\frac{C}{n}+\Delta EC_{T-1}(\bzero)\\
    &\le \frac{C}{n}\frac{1-(1-p)^{n-1}}{p}\frac{1-(1-p)^T}{p}-rD(1-(1-p)^{T-1}),
\end{align*}
which is negative when $r\ge \frac{C}{nD}\frac{1-(1-p)^{n-1}}{p}\frac{1-(1-p)^T}{p-p(1-p)^{T-1}}$.
Since we are analyzing the case that maximizes the differences in lying and truth-telling, we can say that truthfulness is a Nash equilibrium if and only if the penalty rate is above the given threshold.
\endproof


\subsection*{Proof for Theorem~\ref{thm:multi-exist}}

\proof{}
Based on the discussion, we first assume that every player $j\ne i$ is truthful in the first round and $r\ge \frac{C}{nD}\frac{1}{p}$. 
We want to prove that some player $j \in N\setminus \{i\}$ does not want to deviate from being truthful in the next round.
Then it follows that the threshold for truthful NE is equivalent to the case with single sophisticated player and $n-1$ truthful players.  
Since the threshold (\ref{eq:multi-exist-main}) is exact in the model with one sophisticated and $n-1$ truthful players, this threshold is the exact threshold for truthful Nash equilibrium. 

Fix some player $j \ne i$. 
Assume there are $t+1$ rounds left.
Again, we compare the expected payments of lying and being honest for player $j$.
\begin{align*}
    &\bbE[\text{honest}] = p\left\{\frac{C}{n}+EC_{t}(D,\bD)\right\}+(1-p)\left\{\frac{C}{n-1}+EC_{t}(D,(0,\bD))\right\}\\
    &\bbE[\text{lying}] = rD+p\cdot EC_{t}(0,\bD)+(1-p)\cdot EC_{t}(D,(0,\bD))
\end{align*}
By Lemma~\ref{lem:ec-mnt} and Lemma~\ref{lem:delta-ec-exp}, we have 
\begin{align*}
    \bbE[\text{honest}]-\bbE[\text{lying}] \le \frac{C}{n-1}-rD+\Delta EC_t(0,\bD) \le 0,
\end{align*}
for $r\ge \frac{C}{nD}\frac{1}{p}$.
Thus, player $j$ will not deviate from being truthful, even when player $i$ is lying in the previous round. 
\endproof

\subsection{Proof for Theorem~\ref{thm:multi-gen-dom}}
\proof{}
Let $p=1-F(\alpha D)$. 
Assume, for contradiction, that the player adopts some strategy that has a minimum reporting of $\beta D$, $0\le \beta \le \alpha$.
We compare the expected costs of this strategy and the strategy of being $\alpha$-truthful.
We re-define $\Delta EC_t(\bm{b}_{t+1}^{-i})$ as follows:
$$\Delta EC_t(\bm{b}_{t+1}^{-i}) \triangleq EC_t(\alpha D,\bm{b}_{t+1}^{-i})-EC_t(\beta D,\bm{b}_{t+1}^{-i}).$$
Similar to the proof in Theorem~\ref{thm:multi-ber-dom}, we want to upper bound $\Delta EC_t(\bzero)$.
Here we show the computation of $\Delta EC_t(\bzero)$ for $t=1$ and using the recursion argument in the proof of Lemma~\ref{lem:delta-ec-exp}, we can show that
\begin{align}\label{eq:multi-dom-recur}
    \Delta EC_t(\beta) = \frac{C}{n}\frac{1-(1-p)^n}{ p}\sum_{i=1}^t (1-p)^i-\alpha prD\sum_{i=0}^{t-1}(1-p)^i.
\end{align}
After that, we use the same argument in the proof of Theorem~\ref{thm:multi-ber-dom} to obtain the threshold for the first day and Theorem~\ref{thm:multi-gen-dom} follows.
Now we prove the statement for $t=1$.
If the net consumption for the last day exceeds $\alpha D$ (which happens with probability $p$), then the difference between the penalty payments is $(\alpha-\beta)rD$.
Otherwise the player can save some regular payment by reporting some $\beta' D$ where $\beta \le \beta' \le \alpha$.
Let $X$ denote the number of players being busted beside the target player. 
Then $X\sim Bin(n-1,p)$ and $P(X=k)=\binom{n-1}{k}p^k(1-p)^{n-1-k}$.
Therefore,
\begin{align*}
    \Delta EC_1(\beta)&= EC_t(\alpha D) - EC_t(\beta D) \\
    &\le \max_{\beta\le \beta' \le \alpha} (1-p)\sum_{k=0}^{n-1}P(X=k)\left(\frac{C\alpha}{k\alpha+\alpha}-\frac{C\beta'}{k\alpha+\beta'}\right)-prD(\alpha-\beta)\\
    &\le \frac{\alpha-\beta}{\alpha}(1-p)M-(\alpha-\beta)prD,
    \end{align*}
and
\begin{align*}
   \max_{0\le \beta\le \alpha} \Delta EC_1(\beta) = (1-p)M-\alpha prD,
\end{align*}
given $r>\frac{(1-p)M}{\alpha pD}$, which is satisfied because actual threshold for $r$ in (\ref{eq:multi-gen-dom}) is higher. 
Using the recursion argument in Lemma~\ref{lem:delta-ec-exp}, we can obtain the expression (\ref{eq:multi-dom-recur}).
\endproof
%
%
%
%
%
%
%

\subsection{Proof for Theorem~\ref{thm:multi-gen-ne}}

\proof{}
Similar to the proof of Theorem~\ref{thm:multi-exist}, we only need to show that players who had a $\alpha$-truthful history would stay truthful.
We redefine $\Delta EC_t(\bm{b}_{t+1}^{-i})$ as in the proof of Theorem~\ref{thm:multi-gen-dom} and use a similar argument in Theorem~\ref{thm:multi-exist} to show that $\Delta EC_t(\beta D,\tilde{\bm{\alpha}}\bD)\le 0$ for $0\le \beta \le \alpha$ and $\tilde{\bm{\alpha}}\ge\bm{\alpha}$.
Then we can safely assume that players $j \ne i$ stays $\alpha$-truthful in the entire game.
Now we compare player $i$'s expected savings and penalties by reporting some $\beta D$ from being $\alpha$-truthful.
\begin{align*}
        \bbE[\text{savings}] &\le \left\{\frac{C\cdot \alpha D}{(n-1)\alpha D+\alpha D}-\frac{C\cdot \beta D}{(n-1)\alpha D+\beta D}\right\}\cdot\bbE[\text{\# days before busted}]\\
        \bbE[\text{lying}] &\ge rD(\alpha-\beta)\cdot\text{Pr(busted)}.
\end{align*}
Expected penalties exceed expected savings when  $r=\frac{1}{\alpha}\frac{C}{nD}\frac{1-(1-p)^{t+1}}{p-p(1-p)^t}$. 
\endproof

\section{Recursion Approach} \label{app:recur}
In Section~\ref{sec:single-ber}, we briefly mentioned that we can solve for the optimal cost for the Bernoulli distribution via recursion.
In the recursion proof, we compute explicitly the expression for $OptCost(t,r,b_{t+1})$ for $t<T$ and $ExpCost(T,r,b_T)$ for the first round.
Here, we provide such expressions and optimal strategies can be easily derived from these expressions. 
We note that we presented the alternative proof in the main article because it showcases the essence of our proposed mechanism. 
Moreover, the recursion approach would be computationally heavy for continuous distributions whereas the proof in the main body can be extended to any general distributions. 

The following is the complete proof via backward induction.
For simplicity, we set $D=1$, which does not affect the results.
We break the proof into four cases and together, the four cases paint the picture of the optimal strategy under the Bernoulli distributions for the single player model.

\begin{center}
\def\arraystretch{1.5}
    \begin{tabular}{|c|l|}
    \hline
        Case 1 & $p\le\tfrac{1}{2}$ and $r\le 1$ OR $p>\tfrac{1}{2}$ and $r\le \tfrac{1}{2p}$ \\\hline
        Case 2 & $p>\tfrac{1}{2}$ and $\frac{1}{2p} < r\le 1$\\\hline
        Case 3 & $p>\tfrac{1}{2}$ and $r>1$\\\hline
        Case 4 & $p\le \tfrac{1}{2}$ and $r>1$\\\hline
    \end{tabular}
\end{center}
\medskip


\paragraph{Case 1. $p\le\tfrac{1}{2}$ and $r\le 1$ OR $p>\tfrac{1}{2}$ and $r\le \tfrac{1}{2p}$}

\begin{lemma}
\label{lem:gen-p-r<1}
For any $1\le t<T$, when $p\le\tfrac{1}{2}$ and $r\le 1$ OR when $p>\tfrac{1}{2}$ and $r\le \tfrac{1}{2p}$, given yesterday's arbitrary report $b_{t+1}$, 
\[
OptCost(t, r,b_{t+1}) = (1-2p)rb_{t+1} + (t-1)p(1-2p)r + tp(1+r),
\]
which is achieved by setting $b_t = 0$.
If $r < 1$, $b_t = 0$ is the unique optimal report; if $r = 1$, the optimal report is any value $b_t \le b_{t+1}$.
\end{lemma}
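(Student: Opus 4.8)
The plan is to prove the lemma by backward induction on the number of rounds left $t$, running from the last round $t=1$ up to $t=T-1$, carrying along the stronger statement that $OptCost(t,r,b_{t+1})$ is an \emph{affine} function of the history $b_{t+1}$ with slope exactly $(1-2p)r$. This affine form is what makes the recursion close, since it is the only piece of $OptCost(t,\cdot)$ that matters for optimizing $b_{t+1}$ one round earlier.

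For the base case $t=1$ (which has a history $b_2$ since $t<T$), the recursion gives
$$ExpCost(1,r,b_2,b_1)=p\big(1+r(1-b_2)\big)+(1-p)\big(b_1+r|b_1-b_2|\big),$$
so only $(1-p)(b_1+r|b_1-b_2|)$ depends on the choice $b_1$. This is piecewise linear: on $[0,b_2]$ it equals $(1-r)b_1+rb_2$ and on $[b_2,1]$ it equals $(1+r)b_1-rb_2$. Since $r\le 1$ throughout Case 1 (when $p>\tfrac12$ we have $r\le\tfrac1{2p}<1$), both pieces are nondecreasing, so $b_1=0$ is optimal with value $rb_2$ — uniquely so when $r<1$, and with every $b_1\le b_2$ optimal when $r=1$. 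Substituting yields $OptCost(1,r,b_2)=p(1+r)+(1-2p)r\,b_2$, the claimed formula at $t=1$.

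For the inductive step, assume $OptCost(t,r,x)=(1-2p)r\,x+K_t$ with $K_t=(t-1)p(1-2p)r+tp(1+r)$, and take $t+1\le T-1$ rounds left with history $b_{t+2}$. Plugging the hypothesis into
$$ExpCost(t+1,r,b_{t+2},b_{t+1})=p\big[1+r(1-b_{t+2})+OptCost(t,r,1)\big]+(1-p)\big[b_{t+1}+r|b_{t+1}-b_{t+2}|+OptCost(t,r,b_{t+1})\big],$$
the only $b_{t+1}$-dependent part is $(1-p)\big[(1+(1-2p)r)b_{t+1}+r|b_{t+1}-b_{t+2}|\big]$, which has slope $1-2pr$ on $[0,b_{t+2}]$ and slope $1+2r(1-p)>0$ on $[b_{t+2},1]$. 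The crux — the one place the Case-1 hypotheses are genuinely used — is that $1-2pr\ge 0$: if $p\le\tfrac12$ then $2pr\le 2p\le 1$, and if $p>\tfrac12$ then $r\le\tfrac1{2p}$ gives $2pr\le 1$. Hence this part is nondecreasing on $[0,1]$, so $b_{t+1}=0$ is optimal; it is the unique minimizer whenever $1-2pr>0$, and when $r=1$ (which forces $p\le\tfrac12$, so $1-2pr=1-2p\ge 0$ with the left piece flat iff $p=\tfrac12$) every $b_{t+1}\le b_{t+2}$ is optimal. Setting $b_{t+1}=0$, expanding, and collecting terms gives $OptCost(t+1,r,b_{t+2})=(1-2p)r\,b_{t+2}+K_t+p(1+r)+p(1-2p)r$, and one verifies $K_t+p(1+r)+p(1-2p)r=K_{t+1}=tp(1-2p)r+(t+1)p(1+r)$, closing the induction.

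The main obstacle is bookkeeping rather than conceptual: one must preserve the affine-in-history form through the recursion and check the two slope signs exactly — in particular that $1-2pr\ge 0$ across \emph{both} sub-cases of Case 1 (and is the tight boundary), since this is precisely where the hypotheses enter. A minor point of care is the uniqueness claim at the boundary: $1-2pr>0$ already covers $p\le\tfrac12,\ r<1$, and for $p>\tfrac12,\ r=\tfrac1{2p}$ one observes that the history reached under the optimal play is $b_{t+2}=0$, so the interval $[0,b_{t+2}]$ collapses and $b_{t+1}=0$ is again the unique optimum.
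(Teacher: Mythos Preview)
Your proof is correct and follows essentially the same approach as the paper: backward induction on $t$, computing the piecewise-linear coefficient of $b_t$ in $ExpCost$ and verifying both pieces are nonnegative (with $1-2pr\ge 0$ being the binding constraint for $t\ge 2$), then reading off the affine form of $OptCost$. Your framing of the invariant as ``affine in the history with slope $(1-2p)r$'' is a clean way to organize the bookkeeping, but the underlying computation is identical to the paper's. One small wrinkle: your parenthetical correctly observes that for $t\ge 2$ the left slope $1-2p$ at $r=1$ is flat only when $p=\tfrac12$, yet you then assert ``every $b_{t+1}\le b_{t+2}$ is optimal'' in general at $r=1$; strictly speaking, for $t\ge 2$ with $p<\tfrac12$ and $r=1$ the minimizer $b_t=0$ is still unique, so the non-uniqueness clause as stated in the lemma really only holds at $t=1$ (or at $p=\tfrac12$). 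The paper's own proof glosses over this point as well, and it does not affect the main formula.
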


\noindent\proof{}
We prove the lemma by induction. When $t=1$, 
\begin{align}\label{eq:gb:c1:r1}
    ExpCost(1,r,b_2,b_1) = p(1+r)+(1-p)b_1-prb_2+(1-p)r|b_2-b_1|.
\end{align}
The coefficient for $b_1$ is either $(1-p)(1-r)$ (if $b_2\ge b_1$) or $(1-p)(1+r)$ (if $b_2<b_1$). Both are non-negative for $r\le 1$. Therefore, by setting $b_1=0$, we achieved the optimal cost:

$$OptCost(1,r,b_2)=\min_{b_1} ExpCost(1,r,b_2,b_1)=r(1-2p)p_2+p(1+r).$$

\noindent Assume the lemma is true for round $t - 1 \ge 1$. 
For round $t$ and given yesterday's report $b_{t+1}$,
\begin{align*}
    &ExpCost(t,r,b_{t+1},b_t) \\
    &= p(1+r)+(1-p)b_t-prb_{t+1}+(1-p)r|b_{t+1}-b_t|\\
    &\quad +pOptCost(t-1,r,1)+(1-p)OptCost(t-1,r,b_t)\\
    &= p(1+r)+(1-p)b_t-prb_{t+1}+(1-p)r|b_{t+1}-b_t|\\
    &\quad+p[(1-2p)r+(t-2)p(1-2p)r+(t-1)p(1+r)]\\
    &\quad +(1-p)[(1-2p)rb_t+(t-2)p(1-2p)r+(t-1)p(1+r)]\\
    &= tp(1+r)+(t-1)p(1-2p)r+(1-p)[1+(1-2p)r]b_t+(1-p)r|b_{t+1}-b_t|-prb_{t+1}.
\end{align*}
The coefficient for $b_t$ is as follows 
\begin{align*}
    \begin{cases}
    (1-p)(1+(1-2p)r+r) = (1-p)(1+2(1-p)r) & b_t>b_{t+1}\\
    (1-p)(1+(1-2p)r-r) = (1-p)(1-2pr) & b_t\le b_{t+1}
    \end{cases}
\end{align*}
When $r\le \frac{1}{2p}$, both coefficients are non-negative. Therefore, choosing $b_t=0$ is optimal and the optimal cost is
\begin{align*}
    OptCost(t,r,b_{t+1})&=\min_{pt}ExpCpst(t,r,b_{t+1},b_t)\\
    &=tp(1+r)+(t-1)p(1-2p)r+(1-p)rb_{t+1}-prb_{t+1}\\&=tp(1+r)+(t-1)p(1-2p)r+(1-2p)rb_{t+1}.
\end{align*}
By induction, we proved the lemma.
\endproof

\begin{theorem}
\label{thm:gen-p<1}
If $p \le \frac{1}{2}$ and $r \le 1$, or if $p > \frac{1}{2}$ and $r \le \frac{1}{2p}$, the player's optimal strategy is lying-till-end.
\end{theorem}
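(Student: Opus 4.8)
The plan is to obtain Theorem~\ref{thm:gen-p<1} as an immediate corollary of Lemma~\ref{lem:gen-p-r<1}, which already settles every round except the first. For $1 \le t < T$, that lemma says the optimal response to $y_t = 0$ is $b_t = 0$ (uniquely, except at the single boundary point $p=\tfrac12,\, r=1$), while $y_t = D$ forces $b_t = D$; this is exactly the lying-till-end prescription of Definition~\ref{def:bs} for all rounds after the first. So the only missing piece is the first round $t = T$, where there is no history and hence no penalty term.

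For that round I would expand, with $D = 1$ as in the appendix,
\[
ExpCost(T,r,b_T) = p\bigl(1 + OptCost(T-1,r,1)\bigr) + (1-p)\bigl(b_T + OptCost(T-1,r,b_T)\bigr),
\]
and substitute the closed form $OptCost(T-1,r,b_T) = (1-2p)r\,b_T + (T-2)p(1-2p)r + (T-1)p(1+r)$ supplied by Lemma~\ref{lem:gen-p-r<1}. The coefficient of $b_T$ then collapses to $(1-p)\bigl(1 + (1-2p)r\bigr)$. It remains to show this is nonnegative under the hypotheses: if $p \le \tfrac12$ then $1 - 2p \ge 0$, so the coefficient is at least $1-p > 0$ for every $r \le 1$; if $p > \tfrac12$ then $r \le \tfrac{1}{2p}$ gives $(1-2p)r \ge \tfrac{1}{2p}-1$, and the coefficient is at least $\tfrac{1-p}{2p} > 0$. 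In either case $b_T = 0$ is (strictly) optimal in the first round, so reporting $0$ whenever the signal is $0$ in every round, i.e.\ lying-till-end, is optimal for the whole game.

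There is no hard step here; the only thing to handle with care is the tie-breaking convention at the isolated boundary $(p,r) = (\tfrac12, 1)$, where Lemma~\ref{lem:gen-p-r<1} exhibits ties among reports $b_t \le b_{t+1}$ in the later rounds. I would point out that the first-round coefficient above is still strictly positive, so $b_T = 0$ is forced there, and that after any report of $0$ the freedom ``$b_t \le b_{t+1}$'' degenerates to $b_t = 0$; the tie therefore arises only immediately after a forced bust, where $b_t = 0$ (lying-till-end's choice) is among the optima, so lying-till-end never does worse. Everything else is routine arithmetic, the substance being entirely contained in Lemma~\ref{lem:gen-p-r<1} and the theorem being its $t = T$ instance.
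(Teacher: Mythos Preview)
Your proof is correct and mirrors the paper's own argument: Lemma~\ref{lem:gen-p-r<1} handles rounds $t<T$, and for $t=T$ you (like the paper) substitute the closed form for $OptCost(T-1,r,\cdot)$ and verify that the coefficient $(1-p)\bigl(1+(1-2p)r\bigr)$ of $b_T$ is nonnegative under either hypothesis. One minor slip in your aside on tie-breaking: the non-uniqueness in Lemma~\ref{lem:gen-p-r<1} occurs along the whole segment $r=1$, $0<p\le\tfrac12$, not only at the single point $(\tfrac12,1)$, but this does not affect the argument.
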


\noindent \proof{}
Lemma \ref{lem:gen-p-r<1} showed that the theorem is true for every day except the first day. We now show that the theorem is true for the first day. 
\begin{align*}
    ExpCost(T,r,b_T) &= p(1+OptCost(T-1,r,1))+(1-p)(b_T+OptCost(T-1,r,b_T))\\
    &= p[1+(1-2p)r+(T-2)p(1-2p)r+(T-1)p(1+r)]\\
    &\quad +(1-p)[b_T+(1-2p)rb_T+(T-2)p(1-2p)r+(T-1)p(1+r)]\\
    &= (T-1)p(1+r)+(T-2)p(1-2p)r+p+(1-p)(1+r-2pr)b_T+(1-2p)pr
\end{align*}
The coefficient for $b_T$ is non-negative in both cases. So the optimal choice for the first day is also zero. Along with the Lemma \ref{lem:gen-p-r<1}, we've shown the optimal strategy is lying-till-end for $p\le\tfrac{1}{2}, r\le 1$ and $p>\tfrac{1}{2},r\le\tfrac{1}{2p}$ with optimal cost 
\begin{align*}
    OptCost(T,r)&=\min_{b_T}ExpCost(T,r,b_T)\\
    &=(T-1)p(1+r)+(T-2)p(1-2p)r+p+(1-2p)pr. 
\end{align*}
\endproof


\paragraph{Case 2. $p>\tfrac{1}{2}$ and $\frac{1}{2p} < r\le 1$}

\

\noindent When $p>\tfrac{1}{2}$ and $\frac{1}{2p} < r \le 1$, as we have seen in Equation (\ref{eq:gb:c1:r1}), 
\begin{align*}
    OptCost(1, r, b_2) = (1-2p)rb_2 + p(1+r),
\end{align*}
by setting $b_1 = 0$. Next we consider round $2 \le t<T$.

\begin{lemma}
\label{lem:gen-p-r<1-case2}
For any $2 \le t<T$, when $p>\tfrac{1}{2}$ and $\frac{1}{2p} < r \le 1$, given yesterday's arbitrary report $b_{t+1}$, 
\begin{align*}
    OptCost(t, r, b_{t+1}) &= \left[2(1-p)^tr + \sum_{l=2}^{t-1}(1-p)^l + (1 - p - r) \right]b_{t+1}+const.,
\end{align*}
which is achieved by setting $b_t = b_{t+1}$.
\end{lemma}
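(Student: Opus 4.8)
The plan is to prove the lemma by backward induction on $t$ with base case $t=2$, tracking only the \emph{slope} of $OptCost(t,r,\cdot)$ as an affine function of its last argument (the additive constant is irrelevant to the argument). Write $q=1-p$, and for each $s<T$ let $A_s$ denote the coefficient of $b_{s+1}$ in $OptCost(s,r,b_{s+1})$; the already-established round-$1$ formula gives $A_1=(1-2p)r$, and the target is to show that for $2\le t<T$ the minimizer of $ExpCost(t,r,b_{t+1},\cdot)$ is $b_t=b_{t+1}$ and $A_t=2q^tr+\sum_{l=2}^{t-1}q^l+(1-p-r)$.

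First I would expand $ExpCost(t,r,b_{t+1},b_t)$ via the defining recursion and the inductive hypothesis $OptCost(t-1,r,b_t)=A_{t-1}b_t+\mathrm{const}$. All $b_t$-dependence lives in the $y_t=0$ branch and equals $(1-p)\bigl(b_t(1+A_{t-1})+r\lvert b_t-b_{t+1}\rvert\bigr)$, so $ExpCost$ is piecewise linear in $b_t$ with slope $(1-p)(1+A_{t-1}+r)$ for $b_t>b_{t+1}$ and $(1-p)(1+A_{t-1}-r)$ for $b_t<b_{t+1}$. Substituting the inductive closed form for $A_{t-1}$ and using $\sum_{l=0}^{s-1}q^l=(1-q^s)/p$ together with $p+q=1$ yields the clean identity $1+A_{t-1}-r=(1-q^{t-1})\bigl(\tfrac1p-2r\bigr)$, which also holds at $t=2$ even though $A_1$ does not match the closed form. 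Since $r>\tfrac1{2p}$ this slope is strictly negative, whereas $1+A_{t-1}+r=\sum_{l=0}^{t-2}q^l+2q^{t-1}r>0$; hence $ExpCost$ strictly decreases on $[0,b_{t+1}]$ and strictly increases on $[b_{t+1},1]$, and the optimum is the unique point $b_t=b_{t+1}$ (feasible since $b_{t+1}\in[0,1]$).

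It then remains to read off $A_t$. Plugging $b_t=b_{t+1}$ into $ExpCost$, the coefficient of $b_{t+1}$ collects $-pr$ from the $p\,r(1-b_{t+1})$ term in the $y_t=D$ branch and $(1-p)(1+A_{t-1})$ from the $y_t=0$ branch, giving the slope recursion $A_t=(1-p)(1+A_{t-1})-pr$. The closed form now follows by a one-line check: multiplying $A_{t-1}=2q^{t-1}r+\sum_{l=2}^{t-2}q^l+(q-r)$ by $q$ re-indexes the geometric sum up by one, and $-qr-pr=-r$, so $(1-p)(1+A_{t-1})-pr=2q^tr+\sum_{l=2}^{t-1}q^l+(q-r)$, exactly the claimed $A_t$; the base case $t=2$ is the same computation from $A_1=(1-2p)r$, giving $A_2=(1-p)+r(1-4p+2p^2)=2q^2r+(1-p-r)$.

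I expect the only non-routine step to be the sign analysis of $1+A_{t-1}-r$: this is where one must sum the geometric series hidden inside $A_{t-1}$ and exploit $p+q=1$ to collapse everything, and it is the single point at which the threshold $\tfrac1{2p}$ is used. The remaining pieces — the piecewise-linear form of $ExpCost$ in $b_t$, the slope recursion $A_t=(1-p)(1+A_{t-1})-pr$, and the verification of the closed form — are routine algebra, as is checking that the whole scheme is consistent with the $t=1$ formula that seeds the induction.
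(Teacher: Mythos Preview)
Your proposal is correct and follows essentially the same approach as the paper: backward induction on $t$ seeded by $A_1=(1-2p)r$, piecewise-linear analysis of $ExpCost(t,r,b_{t+1},\cdot)$, the same geometric-series simplification yielding $1+A_{t-1}-r=(1-q^{t-1})(\tfrac{1}{p}-2r)\le 0$ at the threshold $r>\tfrac{1}{2p}$, and then reading off the closed form at $b_t=b_{t+1}$. Your slope recursion $A_t=(1-p)(1+A_{t-1})-pr$ is a clean repackaging of exactly what the paper computes by expanding $M(b_{t+1},b_{t+1})$.
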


\noindent \proof{} 
We prove the lemma by induction. When $t=2$, 
\begin{align*}
    &ExpCost(2,r,b_3,b_2) \\
    &= \begin{cases}
    (1-p)[1 + (1-2p)r - r]b_2 + (1-2p)rb_3 + 2p(1+r) + p(1-2p)r, & b_2 \le b_3\\
    (1-p)[1 + (1-2p)r + r]p_2 - rb_3 + 2p(1+r) + p(1-2p)r, & b_2 > b_3
    \end{cases}
\end{align*}
Since $r > \frac{1}{2p}$, $1 + (1-2p)r - r \le 0$ and $1+(1-2p)r+r\ge 0$.
Thus $ExpCost(2,r,b_3,b_2)$ is a valley function with respect to $b_2$ and takes minimum by setting $b_2=b_3$.
Therefore, $OptCost$ can be written as
\begin{align*}
    OptCost(2,r,b_3) = 
    [2(1-p)^2r + (1-p-r)]b_3 + 2p(1+r) + p(1-2p)r.
\end{align*}

\noindent Assume up to round $t - 1 \ge 1$, the lemma holds. For round $t$ and yesterday's report $b_{t+1}$,
\begin{align*}
    &ExpCost(t,r,b_{t+1},b_t)\\
    &= 
    p(1+r) + (1-p)b_t - prb_{t+1}+(1-p)r|b_{t+1} - b_t|\\
    &~~~~~+ (1-p)\left[2(1-p)^{t-1}r + \sum_{l=2}^{t-2}(1-p)^l + (1 - p - r) \right]b_{t}+ p(1-2p)r\\
    &~~~~~ + (t-1)p(1+r) + (t-3)p(1-p-r) + \sum_{i = 2}^{t-2}\left[2(1-p)^ir + \sum_{l=2}^{i-1}(1-p)^l\right]\\
    &~~~~~+ p\left[2(1-p)^{t-1}r + \sum_{l=2}^{t-2}(1-p)^l + (1 - p - r)\right]\triangleq M(b_{t+1},b_t)+const.,
\end{align*}
where 
\begin{align*}
    M(b_{t+1},b_t) &= (1-p)b_t - prb_{t+1}+(1-p)r|b_{t+1} - b_t|\\
    &\quad + (1-p)\left[2(1-p)^{t-1}r + \sum_{l=2}^{t-2}(1-p)^l + (1 - p - r) \right]b_{t}.
\end{align*}
If $b_t \le b_{t+1}$, 
\begin{align*}
    M(b_{t+1},b_t)&=  (1-p)b_t - prb_{t+1}+(1-p)r|b_{t+1} - b_t|\\
    &\quad + (1-p)\left[2(1-p)^{t-1}r + \sum_{l=2}^{t-2}(1-p)^l + (1 - p - r) \right]b_{t}\\
    &= (1-p)\left[1 - r + 2(1-p)^{t-1}r + \sum_{l=2}^{t-2}(1-p)^l + (1 - p - r) \right]b_{t} + (1-2p)rb_{t+1}.
\end{align*}
Note that the coefficient of $b_t$ is $(1-p)$ times the following
\begin{align*}
    &1 - r + 2(1-p)^{t-1}r + \sum_{l=2}^{t-2}(1-p)^l + (1 - p - r)\\
    &= 1+ \sum_{l=1}^{t-2}(1-p)^l - 2r[1-(1-p)^{t-1}]= \sum_{l=0}^{t-2}(1-p)^l(1-2pr) \le 0,
\end{align*}
where the inequality is due to $r \ge \frac{1}{2p}$.

\noindent If $b_t > b_{t+1}$, 
\begin{align*}
    M(b_{t+1},b_t)
    &= (1-p)\left[1 + r + 2(1-p)^{t-1}r + \sum_{l=2}^{t-2}(1-p)^l + (1 - p - r) \right]b_{t} - rb_{t+1},
\end{align*}
where the coefficient of $b_t$ is positive.
Thus the minimum of $M(b_{t+1},b_t)$ is achieved at $b_t = b_{t+1}$, i.e.,
\begin{align*}
    OptCost(t, r, b_{t+1}) &= \left[2(1-p)^tr + \sum_{l=2}^{t-1}(1-p)^l + (1 - p - r) \right]b_{t+1}+const.
\end{align*}
By induction, we proved the lemma.
\endproof

\begin{theorem}\label{thm:case2}
When $p>\tfrac{1}{2}$ and $\frac{1}{2p} < r \le 1$, the optimal strategy is lying-till-busted for the first $T-1$ rounds and lying in the last round.
\end{theorem}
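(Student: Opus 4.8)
The plan is to reduce Theorem~\ref{thm:case2} to two facts already in hand — Lemma~\ref{lem:gen-p-r<1-case2} for the interior rounds $2\le t<T$ and the base computation $OptCost(1,r,b_2)=(1-2p)rb_2+p(1+r)$ (minimized at $b_1=0$) preceding it for the last round — so that the only genuinely new work is the first round $t=T$, after which one just reads off the global strategy.

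For the interior rounds nothing new is needed: Lemma~\ref{lem:gen-p-r<1-case2} already states that the optimal report is $b_t=b_{t+1}$ (``copy yesterday''), and its proof shows this is the unique minimizer because in $ExpCost(t,r,b_{t+1},b_t)$ the coefficient of $b_t$ equals $(1-p)\sum_{l=0}^{t-2}(1-p)^l(1-2pr)<0$ for $b_t\le b_{t+1}$ (strict since $r>\tfrac1{2p}$) and is strictly positive for $b_t>b_{t+1}$. For the last round, given $y_1=0$ and any history $b_2\in\{0,D\}$, the coefficient of $b_1$ on the interval $[0,b_2]$ is $(1-p)(1-r)\ge0$, so reporting $0$ is optimal there (strictly when $r<1$), and $b_1=D$ is forced when $y_1=D$.

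The new ingredient is round $T$. I would expand $ExpCost(T,r,b_T)=p\bigl(1+OptCost(T-1,r,D)\bigr)+(1-p)\bigl(b_T+OptCost(T-1,r,b_T)\bigr)$ and substitute the closed form of Lemma~\ref{lem:gen-p-r<1-case2} at $t=T-1$ (for $T=2$ one substitutes the $t=1$ formula instead, and the same expression results). Collecting the coefficient of $b_T$ gives $(1-p)\bigl[\sum_{l=0}^{T-2}(1-p)^l-r\bigl(1-2(1-p)^{T-1}\bigr)\bigr]$. Since $p>\tfrac12$ we have $1-2(1-p)^{T-1}\ge 2p-1>0$, so because $r\le1$ the subtracted term is at most $1$, which is dominated by the $l=0$ term of the sum; hence the coefficient is nonnegative, and in fact strictly positive for $T\ge3$ (and $(1-p)\bigl(1+(1-2p)r\bigr)\ge 2(1-p)^2>0$ when $T=2$). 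Therefore $b_T=0$ is optimal when $y_T=0$, and $b_T=D$ is forced when $y_T=D$.

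It remains to assemble the per-round optima into the named strategy. On any play path the reports generated by the rule ``$b_T=0$ unless $y_T=D$; $b_t=b_{t+1}$ for $2\le t\le T-1$; $b_1=\max\{y_1,0\}$'' stay in $\{0,D\}$, so Lemmas~\ref{lem:single-sub-opt} and \ref{lem:gen-p-r<1-case2} apply verbatim. The player reports $0$ in each round until the first $t$ with $y_t=D$, at which point feasibility forces $b_t=D$ and the copy rule propagates $D$ through rounds $t-1,\dots,2$; independently, in round $1$ she reports $0$ whenever $y_1=0$. This is exactly lying-till-busted over the first $T-1$ rounds together with lying in the last round. The only friction I anticipate is the coefficient-sign bookkeeping in the round-$T$ step and the (trivial) separate check for $T=2$; conceptually the theorem is just the composition of the per-round optima already established in Lemma~\ref{lem:gen-p-r<1-case2} and the $t=1$ base case, so there is no substantial new obstacle.
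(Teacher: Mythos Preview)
Your proposal is correct and follows essentially the same route as the paper: both rely on Lemma~\ref{lem:gen-p-r<1-case2} and the $t=1$ base case for the interior and last rounds, and both finish by computing the coefficient of $b_T$ in $ExpCost(T,r,b_T)$ after substituting $OptCost(T-1,r,\cdot)$ and checking it is nonnegative. Your version is simply more explicit about the sign argument (the paper merely asserts positivity), about the $T=2$ edge case, and about reassembling the per-round optima into the named strategy.
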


\noindent\proof{}
Let us consider the first day. 
\begin{align*}
    ExpCost(T,r,b_T) &= p(1+OptCost(T-1,r,1))+(1-p)(b_T+OptCost(T-1,r,p_T))\\
    &= (1-p)\left[1+2(1-p)^tr + \sum_{l=2}^{t-1}(1-p)^l + (1 - p - r)\right]b_T + const.
\end{align*}
The coefficient for $b_T$ is positive when $\frac{1}{2p} < r < 1$, thus $b_T = 0$.
\endproof


\paragraph{Case 3. $p > \frac{1}{2}$ and $r > 1$}

\begin{lemma}
\label{lem:gen-p-r>1}
For $p > \frac{1}{2}$, $r > 1$, and any $1\le t<T$, given yesterday's arbitrary report $b_{t+1}$, 
\begin{align*}
    OptCost(t,r,b_{t+1}) = \left[(1-p-pr)\sum_{i=0}^{t-1}(1-p)^i\right]b_{t+1}+const.,
\end{align*}
which is achieved by setting $b_t=b_{t+1}$. 
\end{lemma}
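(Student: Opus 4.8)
The plan is to prove the lemma by backward induction on $t$, in the same style as the proofs of Lemmas~\ref{lem:gen-p-r<1} and \ref{lem:gen-p-r<1-case2}. For each $t$ I would expand $ExpCost(t,r,b_{t+1},b_t)$, substitute the inductive expression for $OptCost(t-1,r,\cdot)$, observe that the result is piecewise linear in today's report $b_t$ with one slope on the region $b_t\le b_{t+1}$ and another on $b_t>b_{t+1}$, show that the first slope is negative and the second positive (so $ExpCost$ is a ``valley'' in $b_t$ with minimum at $b_t=b_{t+1}$), and read off $OptCost(t,r,b_{t+1})$ by evaluating at $b_t=b_{t+1}$. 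For the base case $t=1$ I would reuse the identity $ExpCost(1,r,b_2,b_1)=p(1+r)+(1-p)b_1-prb_2+(1-p)r|b_2-b_1|$ already recorded in~(\ref{eq:gb:c1:r1}): its slope in $b_1$ is $(1-p)(1-r)\le 0$ on $b_1\le b_2$ and $(1-p)(1+r)\ge 0$ on $b_1>b_2$ since $r>1$, so the minimizer is $b_1=b_2$ and $OptCost(1,r,b_2)=(1-p-pr)\,b_2+p(1+r)$, which is the claimed formula because $\sum_{i=0}^{0}(1-p)^i=1$.

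For the inductive step I would set $A_{t-1}:=(1-p-pr)\sum_{i=0}^{t-2}(1-p)^i$, so that the hypothesis reads $OptCost(t-1,r,b_t)=A_{t-1}b_t+\mathrm{const}$. Substituting into the recursion collapses $ExpCost(t,r,b_{t+1},b_t)$, up to an additive constant, to $(1-p)(1+A_{t-1})\,b_t+(1-p)r\,|b_t-b_{t+1}|-pr\,b_{t+1}$. A one-line geometric-series computation gives the key identity $1+A_{t-1}=\tfrac{1-(1-p)^t}{p}-r\bigl(1-(1-p)^{t-1}\bigr)$. On the region $b_t>b_{t+1}$ the slope in $b_t$ is $(1-p)(1+A_{t-1}+r)=(1-p)\bigl(\tfrac{1-(1-p)^t}{p}+r(1-p)^{t-1}\bigr)>0$; on the region $b_t\le b_{t+1}$ it is $(1-p)(1+A_{t-1}-r)$ with $1+A_{t-1}-r=\tfrac{1-(1-p)^t}{p}-r\bigl(2-(1-p)^{t-1}\bigr)$, which is nonpositive exactly when $r\ge \tfrac{1-(1-p)^t}{2p-p(1-p)^{t-1}}=h(t)$. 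Since $p>\tfrac12$ one checks $h(t)\le 1$ (the inequality $h(t)\le 1$ rearranges to $1-2p\le(1-p)^{t-1}(1-2p)$), so $r>1\ge h(t)$ and the first slope is strictly negative. Hence $ExpCost$ decreases then increases in $b_t$, with minimum uniquely at $b_t=b_{t+1}$; evaluating there, and using $1+(1-p)\sum_{i=0}^{t-2}(1-p)^i=\sum_{i=0}^{t-1}(1-p)^i$, gives $OptCost(t,r,b_{t+1})=\bigl[(1-p)(1+A_{t-1})-pr\bigr]b_{t+1}+\mathrm{const}=(1-p-pr)\sum_{i=0}^{t-1}(1-p)^i\,b_{t+1}+\mathrm{const}$, closing the induction. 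As a by-product, ``report $=$ history'' here means a player with truthful history stays truthful and a player with history $0$ keeps reporting $0$ until busted, consistent with the Case~3 narrative.

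The only non-routine point I anticipate is the sign of $1+A_{t-1}-r$ on the region $b_t\le b_{t+1}$: one must recognize that the relevant cutoff is exactly the ``truthful-history'' threshold $h(t)$ from Theorem~\ref{thm:single-thres-with-hist}, and then use the hypothesis $p>\tfrac12$ (not merely $r>1$) to place $r$ above it. Everything else --- bookkeeping of the constants, the two geometric-series identities, and the valley/minimization argument --- is mechanical.
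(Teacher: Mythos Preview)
Your proposal is correct and follows essentially the same approach as the paper: induction on $t$, expand $ExpCost$ using the inductive hypothesis, show the result is a valley function in $b_t$ with minimum at $b_t=b_{t+1}$ by computing the two slopes, and close the induction via the geometric-series identity $1+(1-p)\sum_{i=0}^{t-2}(1-p)^i=\sum_{i=0}^{t-1}(1-p)^i$. Your explicit identification of the cutoff on the region $b_t\le b_{t+1}$ as $h(t)$ and the clean verification that $h(t)\le 1$ for $p>\tfrac12$ match the paper's argument exactly (the paper arrives at the same threshold in its equation~(\ref{eq:gen-p-r>1})).
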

 
\noindent\proof{}
We prove the lemma by induction. When $t=1$, the expected cost is 
\begin{align*}
    ExpCost(1,r,b_2,b_1)=p(1+r)+(1-p)b_1-prb_2+(1-p)r|b_b-b_1|.
\end{align*}
The coefficient for $b_1$ is $(1-p)(1+r)$ for $b_1\ge b_2$ and is positive. 
The coefficient is $(1-p)(1-r)$ for $b_1<b_2$ and is negative. 
This implies that $ExpCost(1,t,b_2,b_1)$ is a valley function and the minimum is achieved by setting $b_1=b_2$. Thus the optimal cost for $t=1$ is 
\begin{align*}
    OptCost(1,r,b_2)=p(1+r)+(1-p-pr)b_2.
\end{align*}
Assume up to round $t-1\ge1$, the lemma holds. 
For round $t$ and yesterday's report $b_{t+1}$, 
\begin{align*}
    &ExpCost(t,r,b_{t+1},b_t)\\
    &= p(1+r)+(1-p)b_t - prb_{t+1}+(1-p)r|b_{t+1} - b_t| + p(t-1)\\
    &\quad +(1-p)\left[b_{t}(1-p-pr)\sum_{i=0}^{t-2}(1-p)^i+(1+r)p\sum_{i=0}^{t-2}(1-p)^i+t-1-\sum_{i=0}^{t-2}(1-p)^i\right]\\
    &= M(b_{t+1},b_t)+const.,
\end{align*}
where 
\begin{align*}
    M(b_{t+1},b_t) &= (1-p)b_t-prb_{t+1}+(1-p)r|b_{t+1}-b_t|\\
    &\qquad+p_t(1-p-pr)(1-p)\sum_{i=0}^{t-2}(1-p)^i.
\end{align*}
When $b_t\ge b_{t+1}$, the coefficient for $b_t$ is as follows
\begin{align*}
    (1-p)\left\{1+(1-p)\sum_{i=0}^{t-2}(1-p)^i-pr\sum_{i=0}^{t-2}(1-p)^i+r\right\}&=  (1-p)\left\{\sum_{i=0}^{t-1}(1-p)^i+r\left[1-p\sum_{i=0}^{t-2}(1-p)^i\right]\right\}\\
    &= (1-p)\left\{\sum_{i=0}^{t-1}(1-p)^i+r(1-p)^{t-1}\right\},
\end{align*}
which is always positive. 
When $b_t<b_{t+1}$, the coefficient is as follows
\begin{align*}
   & (1-p)\left\{1+(1-p)\sum_{i=0}^{t-2}(1-p)^i-pr\sum_{i=0}^{t-2}(1-p)^i+r\right\}\\
   &= (1-p)\left\{1+(1-p)\sum_{i=0}^{t-2}(1-p)^i-r\left[1+p\sum_{i=0}^{t-2}(1-p)^i\right]\right\},
\end{align*}
which is negative when 
\begin{align}
    r &> \frac{1+(1-p)\sum_{i=0}^{t-2}(1-p)^i}{1+p\sum_{i=0}^{t-2}(1-p)^i} = \frac{\sum_{i=0}^{t-1}(1-p)^i}{2-(1-p)^{t-1}}=\frac{1-(1-p)^t}{2p-p(1-p)^{t-1}} \label{eq:gen-p-r>1}
\end{align}
Note that from Equation (\ref{eq:gen-p-r>1}), we see when $p> \tfrac{1}{2}$, the right-hand-side is smaller than $1$. Thus given $r > 1$, the $M$ function is a valley function, and the minimum is achieved by setting $b_t=b_{t+1}$. The optimal cost in round $t$ is then
\begin{align*}
    OptCost(t,r,b_{t+1})
    &= M(b_{t+1},b_{t+1})+const.\\
    &= (1-p-pr)\left[1+(1-p)\sum_{i=0}^{t-2}(1-p)^i\right]b_t+const.\\
    &= \left[(1-p-pr)\sum_{i=0}^{t-1}(1-p)^i\right]b_t+const.
\end{align*}
By induction, we proved the lemma. 
\endproof 

\begin{theorem}\label{thm:case3}
When $p > \frac{1}{2}$, if $r\ge \frac{1-(1-p)^{T}}{p(1-(1-p)^{T-1})}$, honest-till-end is the optimal strategy; if $1<r<\frac{1-(1-p)^{T}}{p(1-(1-p)^{T-1})}$, lying-till-busted is optimal.
\end{theorem}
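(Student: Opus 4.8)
The plan is to reduce everything to a single decision in the first round and then invoke Lemma~\ref{lem:gen-p-r>1}. That lemma already determines the player's behavior in every round $t<T$: the optimal report is $b_t=b_{t+1}$, and since $p>\tfrac12$ and $r>1$ make the two one-sided coefficients appearing in its proof strictly positive and strictly negative respectively, $b_t=b_{t+1}$ is in fact the \emph{unique} minimizer. Consequently, once $b_T$ is chosen the whole trajectory is pinned down: a first report $b_T=D$ propagates forward as honest-till-end, while a first report $b_T=0$ propagates forward as ``report $0$ until the first bust, then $D$ forever'' — because a bust in round $t$ forces $\max\{y_t,b_t\}=D$, which then becomes the matched value for all later rounds, i.e.\ exactly lying-till-busted. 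So the theorem comes down to deciding the sign of the coefficient of $b_T$ in $ExpCost(T,r,b_T)$.

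First I would write $ExpCost(T,r,b_T)=p\bigl(1+OptCost(T-1,r,1)\bigr)+(1-p)\bigl(b_T+OptCost(T-1,r,b_T)\bigr)$ and substitute the closed form from Lemma~\ref{lem:gen-p-r>1} at $t=T-1$, namely $OptCost(T-1,r,b_T)=\bigl[(1-p-pr)\sum_{i=0}^{T-2}(1-p)^i\bigr]b_T+\mathrm{const}$. This is affine in $b_T$, and the only thing I need is the sign of its coefficient, which is
\[
(1-p)\Bigl[\,1+(1-p-pr)\textstyle\sum_{i=0}^{T-2}(1-p)^i\,\Bigr].
\]

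Next I would simplify the bracket using $\sum_{i=0}^{T-2}(1-p)^i=\tfrac{1-(1-p)^{T-1}}{p}$. Splitting $1-p-pr=(1-p)-pr$ and combining the first two terms,
\[
1+(1-p-pr)\frac{1-(1-p)^{T-1}}{p}=\frac{1-(1-p)^{T}}{p}-r\bigl(1-(1-p)^{T-1}\bigr).
\]
Since $1-(1-p)^{T-1}>0$ (as $T\ge2$, $0<p<1$), the coefficient of $b_T$ is nonpositive exactly when $r\ge \tfrac{1-(1-p)^{T}}{p\,(1-(1-p)^{T-1})}$.

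Finally I would read off the conclusion. If $r\ge \tfrac{1-(1-p)^{T}}{p(1-(1-p)^{T-1})}$ the coefficient of $b_T$ is $\le 0$, so $b_T=D$ minimizes $ExpCost(T,r,b_T)$ — strictly for strict inequality, and at equality the stated tie-break toward truthfulness still selects $b_T=D$; together with the matching behavior from Lemma~\ref{lem:gen-p-r>1} the optimal strategy is honest-till-end. If $1<r<\tfrac{1-(1-p)^{T}}{p(1-(1-p)^{T-1})}$ the coefficient is strictly positive, so $b_T=0$ is the unique optimum, and propagating this forward via Lemma~\ref{lem:gen-p-r>1} yields exactly lying-till-busted. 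I expect no serious obstacle here — the computation is a one-line geometric-sum simplification; the only point requiring care is the bookkeeping that converts ``match yesterday's report'' into the two named basic strategies, in particular verifying that after the first bust the matched value is $D$ for the remainder of the game and that $b_T=0$ therefore yields the lying-till-busted pattern and not some other behavior.
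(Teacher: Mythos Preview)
Your proposal is correct and follows essentially the same route as the paper: both substitute the closed form of Lemma~\ref{lem:gen-p-r>1} at $t=T-1$ into $ExpCost(T,r,b_T)$, reduce the bracketed coefficient via the geometric sum $\sum_{i=0}^{T-2}(1-p)^i=\tfrac{1-(1-p)^{T-1}}{p}$, and read off the threshold from the sign of that coefficient. Your additional bookkeeping paragraph --- explaining why $b_T=0$ propagates via ``match yesterday'' into lying-till-busted (a bust forces the effective report to $D$, which then persists) and $b_T=D$ into honest-till-end --- is a welcome clarification that the paper leaves implicit.
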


\noindent\proof{}
We write out the expected cost on the first round, i.e.,  $t=T$.
\begin{align*}
    ExpCost(T,r,b_T) &= p(1+OptCost(T-1,r,1))+(1-p)(b_T+OptCost(T-1,r,b_T))\\
    &= (1-p)\left[1+(1-p-pr)\sum_{i=0}^{T-2}(1-p)^i\right]b_T + const.\\
    &= (1-p)\left[1+(1-p)\sum_{i=0}^{T-2}(1-p)^i-pr\sum_{i=0}^{T-2}(1-p)^i\right]b_T + const.\\
    &= (1-p)\left[\sum_{i=0}^{T-1}(1-p)^i-pr\sum_{i=0}^{T-2}(1-p)^i\right]b_T + const.
\end{align*}
The coefficient for $b_T$ is positive when 
\begin{align}\label{eq:honest-r}
    r<\frac{\sum_{i=0}^{T-1}(1-p)^i}{p\sum_{i=0}^{T-2}(1-p)^i}=\frac{\tfrac{1-(1-p)^T}{p}}{p\tfrac{1-(1-p)^{T-1}}{p}}=\frac{1-(1-p)^{T}}{p(1-(1-p)^{T-1})}.
\end{align}
The optimal strategy for the first day is therefore setting $b_T=0$ when $r$ smaller than (\ref{eq:honest-r}) and $b_T=1$ otherwise.
Along with Lemma~\ref{lem:gen-p-r>1}, we have proved the theorem.
\endproof

\paragraph{Case 4. $p \le \frac{1}{2}$ and $r \ge 1$}

\
\medskip

\noindent For any $2 \le t \le T-1$, let
\begin{align*}
    h(t) = \frac{\sum_{i=0}^{t-1}(1-p)^i}{1+p\sum_{i=0}^{t-2}(1-p)^i} = \frac{1-(1-p)^t}{2p-p(1-p)^{t-1}},
\end{align*}
and 
\begin{align*}
    A(t) = (1-r)\sum_{i=1}^{t-1}(1-p)^i + (1-p-pr)(1-p)^{t-1} + (1-2p)r\sum_{i=1}^{t-1} (1-p)^{i-1}.
\end{align*}

\begin{claim}\label{clm:mnt}
When $p < \frac{1}{2}$, $1=h(1) < h(2) < \cdots < h(T-1) < h(T)<\tfrac{1}{2p}$.
\end{claim}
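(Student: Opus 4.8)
The plan is to establish both the strict monotonicity and the upper bound by elementary manipulation, after the substitution $q=1-p$, under which $p<\tfrac12$ is equivalent to $q\in(\tfrac12,1)$ and
\[
h(t)=\frac{1-q^{t}}{2p-pq^{t-1}}=\frac{1-q^{t}}{p\,(2-q^{t-1})}.
\]
The point of writing it this way is that the denominator $p(2-q^{t-1})$ is always strictly positive, since $0<q^{t-1}<1$ gives $2-q^{t-1}>1$; this lets me cross-multiply in every inequality below without worrying about signs flipping. First I would record the base value $h(1)=\tfrac{1-q}{p\cdot 1}=\tfrac{p}{p}=1$.

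For the chain $h(1)<h(2)<\cdots<h(T)$, it suffices to show $h(t+1)>h(t)$ for each $1\le t\le T-1$. Clearing the (positive) denominators, this is equivalent to $(1-q^{t+1})(2-q^{t-1})>(1-q^{t})(2-q^{t})$. Expanding both products, the constant term $2$ and the term $q^{2t}$ cancel, and the difference of the two sides collapses to $3q^{t}-q^{t-1}-2q^{t+1}=q^{t-1}\bigl(-2q^{2}+3q-1\bigr)=-q^{t-1}(2q-1)(q-1)=q^{t-1}\,p\,(1-2p)$, using $2q-1=1-2p$ and $q-1=-p$. Since $q>0$ and $p<\tfrac12$, this is strictly positive, so $h$ is strictly increasing on $\{1,\dots,T\}$.

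For the upper bound $h(T)<\tfrac{1}{2p}$, again cross-multiplying by the positive quantity $2p\,(2-q^{T-1})$ reduces the inequality to $2(1-q^{T})<2-q^{T-1}$, i.e.\ $q^{T-1}<2q^{T}$, i.e.\ $1<2q$, which is exactly the hypothesis $q=1-p>\tfrac12$. (The same computation shows $h(t)<\tfrac1{2p}$ for every $t\ge1$, consistent with $h(t)\to\tfrac1{2p}$ as $t\to\infty$.) Combining the two parts yields $1=h(1)<h(2)<\cdots<h(T-1)<h(T)<\tfrac1{2p}$, as claimed.

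There is no genuine obstacle here; the whole argument is bookkeeping. The only two points that need a moment of care are verifying that every denominator $p(2-q^{t-1})$ is positive (so no inequality reverses under cross-multiplication) and the factorization $-2q^{2}+3q-1=-(2q-1)(q-1)$, which is what makes the sign of $h(t+1)-h(t)$ transparent.
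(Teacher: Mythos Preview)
Your proof is correct and takes a genuinely different route from the paper. The paper extends $t$ to a continuous variable and shows $\frac{d}{dt}h(t)>0$ via the quotient rule, then dismisses the endpoint checks (including the bound $h(T)<\frac{1}{2p}$) as ``edge cases'' to be verified by hand. You instead work discretely: cross-multiply $h(t+1)>h(t)$, expand, and factor the difference as $q^{t-1}p(1-2p)$, which is manifestly positive; likewise you reduce $h(T)<\frac{1}{2p}$ to the one-line equivalence $1<2q$. Your argument is more elementary (no calculus, no logarithms) and more complete (the upper bound is actually proved rather than deferred), while the paper's derivative computation requires tracking the sign of $\ln(1-p)$ and implicitly relies on monotonicity over $\mathbb{R}$ implying monotonicity over $\mathbb{Z}$. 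The algebraic factorization $-2q^{2}+3q-1=-(2q-1)(q-1)$ that you flag is indeed the crux, and you handle it cleanly.
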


\noindent\proof{}
The derivative of $h(t)$ with respect to $t$ is strictly positive:
\begin{align*}
    \frac{d}{dt}h(t) 
    &= \frac{d}{dt}\frac{1-(1-p)^t}{2p-p(1-p)^{t-1}} \\
    &= \frac{[2p-p(1-p)^{t-1}][-(1-p)^t\ln(1-p)]-[1-(1-p)^t][-p(1-p)^{t-1}\ln(1-p)]}{[2p-p(1-p)^{t-1}]^2}\\
    &= \frac{p(1-p)^{t-1}\ln(1-p)[1-2p(1-p)]}{[2p-p(1-p)^{t-1}]^2}\\[4pt]
    &> \frac{p(1-p)^{t-1}\ln(1-p)[1-2p]}{[2p-p(1-p)^{t-1}]^2}\ge 0
\end{align*}
The edge cases can be checked manually. Thus, $h(t)$ is increasing w.r.t. $t$. 
\endproof

\begin{claim}\label{clm:Ak}
When $p < \frac{1}{2}$ and $t \ge 2$,
$A(t) + (1-r) \ge 0$ if and only if
and $r \le h(t+1)$.
\end{claim}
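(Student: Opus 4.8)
The plan is to prove the equivalence by reducing $A(t)+(1-r)$ to an affine function of $r$ of the form $(\text{positive constant}) - r\cdot(\text{positive constant})$, after which the statement ``$A(t)+(1-r)\ge 0$ iff $r\le h(t+1)$'' is immediate. Throughout I would write $S_k := \sum_{i=0}^{k-1}(1-p)^i = \tfrac{1-(1-p)^k}{p}$, so that two identities are available and do essentially all the work: the recurrence $S_{k+1}=S_k+(1-p)^k$ and the closed form $pS_k = 1-(1-p)^k$.

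First I would absorb the stray summand into the geometric sums: since $(1-r)\sum_{i=1}^{t-1}(1-p)^i+(1-r)=(1-r)S_t$ and, after shifting the index, $(1-2p)r\sum_{i=1}^{t-1}(1-p)^{i-1}=(1-2p)r\,S_{t-1}$, the quantity becomes
$$A(t)+(1-r)=(1-r)S_t+(1-p-pr)(1-p)^{t-1}+(1-2p)r\,S_{t-1}.$$
Next I would separate the $r$-free and the $r$-linear parts. The $r$-free part is $S_t+(1-p)(1-p)^{t-1}=S_t+(1-p)^t=S_{t+1}$ by the recurrence. The coefficient of $r$ is $-S_t-p(1-p)^{t-1}+(1-2p)S_{t-1}$; substituting $S_t=S_{t-1}+(1-p)^{t-1}$ collapses it to $-2pS_{t-1}-(1+p)(1-p)^{t-1}$, and using $pS_{t-1}=1-(1-p)^{t-1}$ this equals $-(2-(1-p)^t)=-(1+pS_t)$. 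Hence
$$A(t)+(1-r)=S_{t+1}-r\bigl(1+pS_t\bigr).$$

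Finally, since $p\in(0,1)$ forces $1+pS_t>0$, the left-hand side is nonnegative exactly when $r\le S_{t+1}/(1+pS_t)$, and a one-line simplification $S_{t+1}/(1+pS_t)=\tfrac{\sum_{i=0}^{t}(1-p)^i}{1+p\sum_{i=0}^{t-1}(1-p)^i}=\tfrac{1-(1-p)^{t+1}}{2p-p(1-p)^t}=h(t+1)$ finishes the claim. I do not expect a genuine obstacle here: the whole argument is bookkeeping with geometric sums, and the only care needed is to apply both identities for $S_k$ consistently rather than expanding everything into $(1-p)^{\,\cdot}$ by brute force. Note that the hypothesis $p<\tfrac12$ is not used in this identity (only $p\in(0,1)$ and $t\ge 1$ are needed); it is carried along because the claim will be invoked in a setting where $p<\tfrac12$ guarantees, for instance, that $h$ is increasing by Claim~\ref{clm:mnt}.
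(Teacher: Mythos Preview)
Your proof is correct and follows essentially the same route as the paper's own argument: both reduce $A(t)+(1-r)$ to the affine expression $S_{t+1}-r\bigl(2-(1-p)^t\bigr)$ (the paper writes the constant term as $\tfrac{1-(1-p)^t}{p}+(1-p)^t$, which is exactly your $S_{t+1}$) and then read off the threshold $h(t+1)$. Your observation that the hypothesis $p<\tfrac12$ is not actually used in the identity is accurate and matches the paper's proof, which also only needs $p\in(0,1)$.
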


\noindent \proof{}
We simplify the expression $A(t)+1-r$ as follows.
\begin{align*}
    A(t)+1-r &= (1-r)\sum_{i=0}^{t-1}(1-p)^i+(1-p-pr)(1-p)^{t-1}+(1-2p)r\sum_{i=1}^{t-1}(1-p)^{i-1}\\
    &= \frac{1-(1-p)^t}{p}+(1-p)^t-r(2-(1-p)^t).
\end{align*}
Thus $A(t)+1-r\ge 0$ if and only if 
\begin{align*}
    r\le \frac{\tfrac{1-(1-p)^t}{p}+(1-p)^t}{2-(1-p)^t} =\frac{1-(1-p)^{t+1}}{2p-p(1-p)^t}=h(t+1). 
\end{align*}
\endproof

\medskip

\noindent Next, we further distinguish the following subcases $h(t-1)< r \le h(t)$ for each $t = 2, \dots, T-1$.

\paragraph{SubCase 4.1. $p \le \frac{1}{2}$ and $h(t-1)< r \le h(t)$}

\

\medskip
\noindent We start with the last day,
\begin{align*}
ExpCost(1, r, b_2, b_1) = 
\begin{cases}
(1-p)(1-r)b_1 + (1-2p)rb_2 + const., & b_2 \ge b_1\\
(1-p)(1+r)b_1 -rb_2 + const., & b_2 < b_1.
\end{cases}
\end{align*}
Given $r \ge h(t-1) \ge 1$,  
$OptCost(1, r, b_2) = (1-p-pr)b_2 + const.$,
achieved by setting $b_1=b_2$.
\medskip

\noindent Now we consider the rounds after the first $t$ rounds, i.e., the last $T-t$ rounds. 

\begin{lemma}
\label{lem:b:sp:sk}
When $p < \frac{1}{2}$ and $h(t-1)< r \le h(t)$, given yesterday's arbitrary report $b_{t'+1}$, 
\begin{align}
    OptCost(t',r,b_{t'+1}) &= A(t') b_{t'+1} + const.,
\end{align}
which is achieved by setting $b_{t'}=b_{t'+1}$ for any $2\le t' < t$ and by setting $b_{t'}=0$ for $t\le t'\le T-1$. 
\end{lemma}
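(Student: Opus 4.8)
I would prove Lemma~\ref{lem:b:sp:sk} by backward induction on the number of rounds left $t'$, running from $t'=1$ up to $t'=T-1$, and exploit the following structural fact: once $OptCost(t'-1,r,\cdot)$ is known to be affine on $[0,D]$, the one‑round cost $ExpCost(t',r,b_{t'+1},b_{t'})$ is a \emph{piecewise‑linear} function of the current report $b_{t'}$ with a single kink at $b_{t'}=b_{t'+1}$, the kink coming from the penalty term $r|b_{t'}-b_{t'+1}|$. Such a function is minimized either at the kink (when its left slope is $\le 0$ and its right slope is $\ge 0$, a ``valley'') or at an endpoint $0$ or $D$; deciding which case occurs is the entire content of the lemma, and the thresholds $h(\cdot)$ are exactly the values of $r$ at which one of these slopes changes sign. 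So the whole argument reduces to computing the two slopes at each $t'$, checking their signs against the hypothesis $h(t-1)<r\le h(t)$ using Claims~\ref{clm:mnt} and~\ref{clm:Ak}, and then reading off the coefficient of $b_{t'+1}$ from the resulting minimizer.

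\textbf{Base case and the regime $2\le t'<t$.} The base case $t'=1$ is already recorded in the paragraph preceding the lemma: since $r\ge h(t-1)\ge h(1)=1$, the slopes $(1-p)(1-r)$ and $(1-p)(1+r)$ of $b_1$ on the two pieces are $\le 0$ and $\ge 0$, so the cost is a valley and $b_1=b_2$ is optimal, giving $OptCost(1,r,b_2)=(1-p-pr)b_2+\mathrm{const}=A(1)b_2+\mathrm{const}$. For the inductive step with $2\le t'<t$, assume $OptCost(t'-1,r,b)=A(t'-1)b+\mathrm{const}$ on $[0,D]$. Expanding, the slope of $ExpCost(t',r,b_{t'+1},\cdot)$ at $b_{t'}$ is $(1-p)\bigl(1-r+A(t'-1)\bigr)$ on the piece $b_{t'}\le b_{t'+1}$ and $(1-p)\bigl(1+r+A(t'-1)\bigr)$ on $b_{t'}>b_{t'+1}$. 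By Claim~\ref{clm:Ak} (whose computation also covers the index $s=1$), $A(t'-1)+1-r\ge 0$ iff $r\le h(t')$; since $t'\le t-1$ and $h$ is increasing (Claim~\ref{clm:mnt}), the hypothesis $r\ge h(t-1)\ge h(t')$ forces the left slope $\le 0$, while the identity $A(s)+1+r=\tfrac{1-(1-p)^s}{p}+(1+r)(1-p)^s>0$ extracted from the proof of Claim~\ref{clm:Ak} makes the right slope positive. Hence the cost is a valley, $b_{t'}=b_{t'+1}$ is optimal, and substituting it gives coefficient $-pr+(1-p)\bigl(1+A(t'-1)\bigr)$ for $b_{t'+1}$, which equals $A(t')$ by the one‑line recursion $A(t')=(1-p)A(t'-1)+(1-p-pr)$ (checked directly from the closed form of $A$). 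This is the claimed $OptCost(t',r,b_{t'+1})=A(t')b_{t'+1}+\mathrm{const}$.

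\textbf{The regime $t\le t'\le T-1$.} Here $OptCost(t'-1,r,\cdot)$ is affine with slope $A(t-1)$ when $t'=t$ and slope $(1-2p)r$ when $t'>t$ (the latter produced by the previous iteration of this very regime). In both cases the left slope of $ExpCost(t',r,b_{t'+1},\cdot)$ at the kink is nonnegative: for $t'=t$ it is $(1-p)\bigl(A(t-1)+1-r\bigr)\ge 0$ because $r\le h(t)$ (Claim~\ref{clm:Ak} with $s=t-1$), and for $t'>t$ it is $(1-p)(1-2pr)\ge 0$ because $r\le h(t)<\tfrac{1}{2p}$ (Claim~\ref{clm:mnt}); the right slope, $(1-p)(1+r+A(t-1))$ respectively $(1-p)\bigl(1+2(1-p)r\bigr)$, is positive. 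So $ExpCost$ is nondecreasing in $b_{t'}$, the minimizer is $b_{t'}=0$, and $OptCost(t',r,b_{t'+1})=(1-2p)r\,b_{t'+1}+\mathrm{const}$, which restores the affine form for the next iteration. (One checks $(1-2p)r=A(t)$ exactly when $r=h(t)$, so the two regimes agree at the upper end of the subcase, which is the continuity one expects at a threshold; I would flag this so the reader sees why the single expression $A(t')$ is written for $2\le t'\le T-1$.)

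\textbf{Main obstacle.} The difficulty is not conceptual but organizational: one must keep straight which slope‑sign condition is active at each $t'$ — the left slope is strictly negative throughout $t'<t$, is pinned to nonnegativity \emph{exactly} by the hypothesis $r\le h(t)$ at the transition round $t'=t$, and then stays nonnegative for $t'>t$ via the weaker bound $r<\tfrac1{2p}$ — and must verify the handful of algebraic identities for $A$ (the recursion $A(t')=(1-p)A(t'-1)+(1-p-pr)$ and the ``$+1\pm r$'' closed forms) so that the coefficients collapse. The genuine crux, worth stating explicitly, is that $h(t)$ is precisely the value of $r$ at which the round‑$t$ cost function stops being a valley and becomes monotone in $b_t$; that is exactly what makes the strategy ``match the history for the last $t-1$ rounds, report $0$ before that'' optimal in the subcase $h(t-1)<r\le h(t)$.
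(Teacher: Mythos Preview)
Your approach is essentially the paper's: backward induction on $t'$, computing the two slopes of the piecewise-linear function $ExpCost(t',r,b_{t'+1},\cdot)$ on either side of the kink $b_{t'}=b_{t'+1}$, and invoking Claims~\ref{clm:mnt} and~\ref{clm:Ak} to determine their signs. You are in fact more careful than the paper in the regime $t'>t$: the paper continues to write the slope of $OptCost(t'-1,r,\cdot)$ as $A(t'-1)$ for all $t'$, which is only valid for $t'\le t$; you correctly identify it as $(1-2p)r$ once $t'>t$ and check the sign via $r\le h(t)<\tfrac{1}{2p}$. Your parenthetical remark that the displayed identity $OptCost(t',r,b_{t'+1})=A(t')b_{t'+1}+\mathrm{const}$ holds for $t'\ge t$ only at the boundary $r=h(t)$ is a correct observation about a minor imprecision in the lemma's statement (the optimizer claim and the subsequent theorem are unaffected).
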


\noindent \proof{}
For $t'=2$,
\begin{align*}
ExpCost(2, r, b_3, b_2)
= 
\begin{cases}
(1-p)[1-r + (1-p-pr)]b_2 + (1-2p)rb_3 + const, & b_3 \ge b_2\\
(1-p)[1+r + (1-p-pr)]b_2 -rb_3 + const, & b_3 < b_2.
\end{cases}
\end{align*}
Given $p < \frac{1}{2}$ and $r >h(t-1)\ge h(2)=  \frac{2-p}{1+p}$, $ExpCost(2, r, b_3, b_2)$ is a valley function and takes minimum at $b_2=b_3$.
Thus, 
\begin{align*}
OptCost(2, r, b_3) = (2-p)(1-p-pr)b_3 + const. = A(2)b_3+const.
\end{align*}
In general, for any $ t' \ge 3$, 
\begin{align*}
ExpCost(t', r, b_{t'+1}, b_{t'}) = 
\begin{cases}
(1-p)[1-r + A(t'-1)]b_{t'} + (1-2p)rb_{t'+1} + const., & b_{t'+1} \ge b_{t'}\\
(1-p)[1+r + A(t'-1)]b_{t'} -rb_{t'+1} + const., & b_{t'+1} < b_{t'}.
\end{cases}
\end{align*}

By Claim~\ref{clm:Ak}, for $2\le t'< t$, $1-r+A(t'-1)\le 0$ since $r> h(t-1)> h(t'-1)$.
Then $ExpCost(t,r,b_{t'+1},b_{t'})$ is a valley function and takes minimum at $b_{t'}=b_{t'+1}$.
For $t\le t'\le T-1$, $1-r+A(t'-1)\ge 0$ since $r\le h(t)\le h(t')$.
Then the coefficient for $b_{t'}$ in both cases is positive and the expected cost takes minimum at $b_{t'}=0$.
\endproof

\noindent Finally, we consider the first day,
\begin{align*}
ExpCost(T, r, p_T) = (1-p)[1 + A(T-1)]b_T + const.,
\end{align*}
where the coefficient for $b_T$ is positive. 
Thus on the first day, the optimal $b_T=0$.
In conclusion, we have the following theorem.
\begin{theorem}
When $p < \frac{1}{2}$, $2 \le t \le T-1$, and $h(t-1)< r \le h(t)$,
the optimal strategy is lying-till-end for the first $t$ rounds, and lying-till-busted for the rest of the game.
\end{theorem}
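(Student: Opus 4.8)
The plan is to run the Case~4 backward induction while tracking a single scalar: the coefficient $A(t')$ of the previous report $b_{t'+1}$ in $OptCost(t',r,b_{t'+1})$ (the $A(\cdot)$ defined above). By Lemma~\ref{lem:single-sub-opt} every report may be taken in $\{0,D\}$, and I normalize $D=1$ as in the appendix. Then $ExpCost(t',r,b_{t'+1},b_{t'})$ is piecewise linear in $b_{t'}$, broken at $b_{t'}=b_{t'+1}$: on $b_{t'}\le b_{t'+1}$ its slope in $b_{t'}$ is $(1-p)\bigl(1-r+A(t'-1)\bigr)$, and on $b_{t'}>b_{t'+1}$ it is $(1-p)\bigl(1+r+A(t'-1)\bigr)$, which stays positive in every round of the induction. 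Hence only $b_{t'}=0$ and $b_{t'}=b_{t'+1}$ can be optimal: reporting $0$ is optimal precisely when $1-r+A(t'-1)\ge 0$, and otherwise $ExpCost$ is a ``valley'' with minimizer $b_{t'}=b_{t'+1}$, i.e., the player just repeats yesterday's report.

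Next I would locate the sign change of $1-r+A(t'-1)$. Claim~\ref{clm:Ak} turns this into a threshold test: $1-r+A(t'-1)\ge 0$ iff $r\le h(t')$. Together with the strict chain $1=h(1)<h(2)<\cdots<h(T)<\tfrac1{2p}$ of Claim~\ref{clm:mnt}, the hypothesis $h(t-1)<r\le h(t)$ splits the game cleanly: in every round with at least $t$ rounds remaining the slope is $\ge 0$, so $b_{t'}=0$ is optimal; in every round with at most $t-1$ rounds remaining the slope is $<0$, so $b_{t'}=b_{t'+1}$ is optimal. The last round $t'=1$ falls under the second case using only $r>h(t-1)\ge h(1)=1$; the first round $t'=T$ has no history, and one checks directly that the coefficient of $b_T$ in $ExpCost(T,r,b_T)$, namely $(1-p)\bigl(1+A(T-1)\bigr)$, is positive, so $b_T=0$ there as well. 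Feeding this dichotomy through the induction is exactly Lemma~\ref{lem:b:sp:sk} together with the surrounding first-/last-day computations, and the induction closes because the valley shape excludes any interior minimizer.

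It remains to read off the named strategies. Reporting $0$ whenever $y_{t'}=0$ in every round with $\ge t$ rounds remaining is lying-till-end on the initial block: after any forced realization $D$ the player returns immediately to $0$. Repeating the previous report in every round with $<t$ rounds remaining is lying-till-busted on the terminal block: the player holds at $0$ until the first realization $D$ in that block forces a report $D$, after which repetition pins her at $D$ forever. (At the single endpoint $r=h(t)$ the round with exactly $t$ rounds left is indifferent; breaking ties toward truthfulness leaves the characterization intact.) Concatenating the two blocks gives the claimed optimal strategy.

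The main obstacle is not this assembly but the inductive bookkeeping behind Claim~\ref{clm:Ak}. On the ``mirror'' rounds the coefficient obeys the affine recursion $A(t')=(1-p)A(t'-1)+(1-p-pr)$, which telescopes to $A(t'-1)=(1-p-pr)\tfrac{1-(1-p)^{t'-1}}{p}$, and one must verify that substituting this into $1-r+A(t'-1)\ge 0$ collapses precisely to $r\le h(t')$ rather than to some nearby bound; on the ``report-$0$'' rounds the coefficient instead freezes at $(1-2p)r$. Pinning down that algebra, hand in hand with the monotonicity of $h$ in Claim~\ref{clm:mnt} (which reduces to $\ln(1-p)<0$ and $2p-1<0$ for $p<\tfrac12$), is what makes the two phases meet contiguously at ``$t$ rounds left''; once those are in place the theorem follows at once.
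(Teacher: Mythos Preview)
Your proposal is correct and follows essentially the same route as the paper: backward induction on the coefficient of $b_{t'+1}$ in $OptCost$, with Claim~\ref{clm:Ak} converting the sign of $1-r+A(t'-1)$ into the threshold test $r\lessgtr h(t')$ and Claim~\ref{clm:mnt} guaranteeing a single switch point. Your remark that on the ``report-$0$'' rounds the coefficient freezes at $(1-2p)r$ (rather than continuing to follow the mirror recursion for $A$) is in fact a clarification of the paper's Lemma~\ref{lem:b:sp:sk}, whose statement is slightly loose on this point; since $1-2pr>0$ whenever $r<1/(2p)$, the conclusion is unaffected.
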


\paragraph{SubCase 4.2. $r \ge h(T-1)$}

\

\medskip
\noindent When $r \ge h(T-1)$, as we have seen in previous subcase, 
\begin{align*}
    ExpCost(T,r,b_T) = (1-p)[1+A(T-1)]b_T + const.,
\end{align*}
where
\begin{align*}
        1+A(T-1) 
        &= 1+ (1-r)\sum_{i=1}^{T-2}(1-p)^i + (1-p-pr)(1-p)^{T-2}+ (1-2p)r\sum_{i=1}^{T-2}(1-p)^{i-1}\\
        &= 1 + (1-p-pr)\sum_{i=1}^{T-1} (1-p)^{i-1}.
\end{align*}
Thus $1+A(T-1) \ge 0$ if and only if
\[
r \le \frac{1-(1-p)^T}{p(1-(1-p)^{T-1})}.
\]
In conclusion, we have the following theorem. 

\begin{theorem}
When $p < \frac{1}{2}$, if $\frac{1-(1-p)^{T-1}}{2p-p(1-p)^{T-2}} < r \le \frac{1-(1-p)^T}{p(1-(1-p)^{T-1})}$, the optimal strategy is lying-till-busted; 
if $r > \frac{1-(1-p)^T}{p(1-(1-p)^{T-1})}$, the optimal strategy is honest-till-end.
\end{theorem}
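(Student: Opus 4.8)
The plan is simply to assemble what the case analysis above has already produced: read off the day-one decision from the sign of a single coefficient, and then trace out the play that the intermediate-round lemmas force.

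First I would note that for $p<\tfrac{1}{2}$ and $r\ge h(T-1)$ the induction behind Lemma~\ref{lem:b:sp:sk} applies in \emph{every} intermediate round. Since $h$ is increasing (Claim~\ref{clm:mnt}), $r\ge h(T-1)\ge h(t')$ for all $t'\le T-1$, so by Claim~\ref{clm:Ak} the coefficient $1-r+A(t'-1)$ is nonpositive in each round $2\le t'\le T-1$, which makes $ExpCost(t',r,\cdot,b_{t'})$ a valley minimized at $b_{t'}=b_{t'+1}$; the valley-minimizer substitution reproduces the recursion $A(t')=(1-p)A(t'-1)+(1-p-pr)$ with base $A(1)=1-p-pr$, so $OptCost(T-1,r,b_T)=A(T-1)b_T+const.$ Plugging this into the round-$T$ recursion gives $ExpCost(T,r,b_T)=(1-p)[1+A(T-1)]\,b_T+const.$, so the optimal first report depends only on the sign of $1+A(T-1)$.

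The one genuine computation is the geometric-series collapse $1+A(T-1)=1+(1-p-pr)\sum_{i=1}^{T-1}(1-p)^{i-1}=1+(1-p-pr)\tfrac{1-(1-p)^{T-1}}{p}$; clearing the positive factor $\tfrac{1-(1-p)^{T-1}}{p}$ and solving the resulting linear inequality in $r$ yields
\[
1+A(T-1)\ \ge\ 0\qquad\Longleftrightarrow\qquad r\ \le\ \frac{1-(1-p)^T}{p\,(1-(1-p)^{T-1})}.
\]
Hence if $h(T-1)<r<\tfrac{1-(1-p)^T}{p(1-(1-p)^{T-1})}$ the coefficient of $b_T$ is strictly positive, so $b_T=0$ is the unique first-round minimizer, and feeding $b_T=0$ into ``$b_{t'}=b_{t'+1}$ in every later round'' traces out exactly the lying-till-busted path (report $0$ until the first bust, then echo the forced $D$ forever); if $r>\tfrac{1-(1-p)^T}{p(1-(1-p)^{T-1})}$ the coefficient is strictly negative, $b_T=D$, and this propagates to $b_{t'}=D$ in every later round, i.e.\ honest-till-end. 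At the single endpoint $r=\tfrac{1-(1-p)^T}{p(1-(1-p)^{T-1})}$ the coefficient vanishes, both reports attain the minimum, and the tie-breaking rule picks truthfulness — which is where one reconciles the ``$\le$'' here with the strict inequalities in Table~\ref{tab:ber:opt-str}.

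The only point that needs care — the main obstacle, mild as it is — is checking that the closed form $OptCost(T-1,r,b_T)=A(T-1)b_T+const.$ from Lemma~\ref{lem:b:sp:sk}, originally stated only for a band $h(t-1)<r\le h(t)$, really does persist once \emph{every} intermediate round is in the $b_{t'}=b_{t'+1}$ regime. This follows because the recursion $A(t)=(1-p)A(t-1)+(1-p-pr)$ is insensitive to whether an intermediate round picks $b_{t'}=0$ or $b_{t'}=b_{t'+1}$ at a threshold value of $r$ (at $r=h(t')$ one has $A(t'-1)=r-1$, and then both choices give $OptCost(t',r,b_{t'+1})=(1-2p)r\,b_{t'+1}+const.=A(t')b_{t'+1}+const.$), so the round-$T$ analysis above is unaffected and the theorem follows.
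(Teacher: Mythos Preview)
Your proof is correct and follows essentially the same route as the paper: derive $ExpCost(T,r,b_T)=(1-p)[1+A(T-1)]b_T+const.$, collapse $1+A(T-1)$ via the geometric series, and read off the first-round decision from the sign. Your extra paragraph justifying why the $A$-recursion persists for $r\ge h(T-1)$ (and your remark on the tie-breaking endpoint) fills in a step the paper simply asserts with ``as we have seen in previous subcase,'' but the argument is otherwise identical.
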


\end{document}